\documentclass[sigconf]{acmart}
\setcopyright{none}
\acmDOI{}
\acmISBN{}
\acmBooktitle{}
\renewcommand\footnotetextcopyrightpermission[1]{}

\usepackage{amsthm}
\usepackage[inline]{enumitem}
\usepackage{caption,subcaption}
\usepackage{balance}
\usepackage[binary-units]{siunitx}
\DeclareSIUnit{\nothing}{\relax}
\usepackage{graphicx}
\usepackage{pifont}
\usepackage{multirow}
\usepackage{color}
\usepackage[normalem]{ulem} % Defines strikeout Breaks bibtex
\usepackage{xspace}
\usepackage{cleveref}
\usepackage{comment}
\usepackage{soul}
\usepackage{ulem}
\usepackage{changepage}
\usepackage{thmtools}
\usepackage{float}
\makeatletter
\let\cassandra@addcontentsline\addcontentsline
\renewcommand{\addcontentsline}[3]{%
  \def\cassandra@tocfile{#1}%
  \def\cassandra@loefile{loe}%
  \ifx\cassandra@tocfile\cassandra@loefile
  \else
    \cassandra@addcontentsline{#1}{#2}{#3}%
  \fi
}
\makeatother

\usepackage{tikz,pgfplots,pgfplotstable}
\usetikzlibrary{positioning,arrows.meta,fit,backgrounds,shapes.geometric,calc,shapes.misc}

\newtheorem{lemma}{Lemma}

\newtheorem{definition}{Definition}

\providecommand{\texorpdfstring}[2]{#1}
\DeclareRobustCommand{\new}[1]{\texorpdfstring{#1}{#1}}
\newcommand{\newcaption}[1]{\caption{#1}}

\newcommand{\Name}[1]{{#1}}

\newcommand{\BFT}{BFT}

\newcommand{\PoA}{PoA}
\newcommand{\PoR}{PoR}
\newcommand{\RC}{RC}

\newcommand{\SignMessage}[2]{\langle#1\rangle_{#2}}
\newcommand{\Hash}[1]{\texttt{H}(#1)}

\newcommand{\Proposal}{\phi}

\DeclareSIUnit\k{k}
\DeclareSIUnit\ms{ms}
\DeclareSIUnit\GB{GiB}
\DeclareSIUnit\B{B}
\DeclareSIUnit{\txn}{txn}
\DeclareSIUnit{\batch}{batch}
\definecolor{lightgreen}{RGB}{34,139,34}
\definecolor{lightyellow}{RGB}{218,112,214}
\definecolor{slateblue}{RGB}{123,104,238}
\definecolor{lightblue}{RGB}{30,144,255}
\definecolor{lightBrown}{RGB}{188,143,143}
\definecolor{deepGreen}{RGB}{0,128,0}
\definecolor{deepBlue}{RGB}{0,0,255}
\definecolor{deepPurple}{RGB}{128,0,128}
\definecolor{fakeGreen}{RGB}{102,205,170}
\definecolor{Maroon}{RGB}{210,105,30}
\definecolor{lightred}{RGB}{255,106,106}
\definecolor{SandyBrown}{RGB}{244,164,96}
\definecolor{SandyGreen}{RGB}{200,30,0}
\definecolor{SandyRed}{RGB}{255,164,96}

\newcommand{\RN}[1]{%
  \textup{\expandafter{\romannumeral#1}}%
}

\usepackage{algorithmic}

\newenvironment{myprotocol}{
    \hrule
    \footnotesize
    \smallskip
    \algsetup{linenosize=\footnotesize}
    \begin{algorithmic}[1]
        
        \newcommand{\SPACE}{\item[]}
        \newcommand{\TITLE}[2]{\item[] \textbf{\underline{##1}} (##2) \textbf{:}\\[2pt]}
        \makeatletter
            \newcommand{\EVENT}[1]{\STATE \textbf{event} ##1 \textbf{do}\begin{ALC@g}}
            \newcommand{\ENDEVENT}{\end{ALC@g}}
        \makeatother
        
        \makeatletter
            \newcommand{\FUNCTION}[2]{\STATE \textbf{function} \Name{##1}(##2) \textbf{do}\begin{ALC@g}}
            \newcommand{\ENDFUNCTION}{\end{ALC@g}}
        \makeatother
}{
    \end{algorithmic}%
    \hrule
}

\newcommand{\sysname}{\textsc{Cassandra}\xspace}

\newcommand{\Shaokang}[1]{{\color{blue}{\textbf{Shaokang:} #1}}}
\newcommand{\Dakai}[1]{\textcolor{cyan}{Dakai: {#1}}}
\newcommand{\bheading}[1]{{\vspace{4pt}\noindent{\textbf{#1}}}}
\newcommand{\iheading}[1]{{\vspace{2pt}\noindent{\textit{#1}}}}

\newcommand{\etal}{\textit{et al.}\xspace}

\pgfplotsset{
    scaled y ticks = false,
    every axis/.append style={
        ylabel near ticks,
        xlabel near ticks,
        mark size=2.5pt,
    },
}

 \pgfplotstableread{
Id	Node	Throughput Latency		
1	16	860033.209	0.2240631798
2	32	780523	0.3968395
3	48	676317.791	0.5696158202
4	64	584894.9153	0.7423921404
5	104	484415.7062	0.9151684606
}\dataAutoBahnHS

\pgfplotstableread{
    Id	Node	Throughput Latency		
1	16	453997.71	0.0883795288
2	32	384063.311	0.241629605
3	48	344503.051	0.3948796812
4	64	247360.029	0.5481297574
5	104	201212.6975	0.6013798336
}\dataCassandra

\pgfplotstableread{
Id	Node	Throughput Latency
1	16	867655.2936	0.3322403382
2	32	776488	0.4336465
3	48	680308.3277	0.5693890449
4	64	571357.5693	0.8663965353
5	104	440541.0864	0.9849708886
}\dataTusk

 \pgfplotstableread{
Id	Node	Throughput Latency
1	16	880549.0548	0.1937866611
2	32	798966	0.3680308
3	48	677642.8776	0.4871401167
4	64	557867.3407	0.6046358673
5	104	441991.252	0.8013125951
}\dataRCC

\pgfplotstableread{
    Id	Node	Throughput Latency
1	16	889701.0955	0.3148159132
2	32	805212	0.3474001
3	48	672891.3962	0.4968343721
4	64	577167.1549	0.6608971581
5	104	477095.3053	0.7519063876
}\dataCassandraAuto

\pgfplotstableread{
Id	Node	Throughput Latency
1	16	478797.5138	0.2024533695
2	32	335311	0.2080577
3	48	282880.2363	0.2821068892
4	64	191058.6392	0.4960937629
5	104	137591.5005	0.7359205228
}\dataHS

\pgfplotstableread{
Id	Node	Throughput Latency
1	16	473635.6849	0.03237633365
2	32	389464	0.0531276
3	48	327759.3151	0.1387886635
4	64	243193.9635	0.2546301327
5	104	180469.2785	0.455381399
}\dataPBFT

\pgfplotstableread{
Id	Node	Throughput Latency
1	16	852554.9543	0.1618783063
2	32	728622	0.4225555
3	48	630030.0457	0.6832326937
4	64	508962.0913	0.9439098874
5	104	398518.137	1.204587081
}\dataSpotless

\pgfplotstableread{
Id	Node	Throughput Latency
1	16	301473.6625	0.0244425875
2	32	233269.7188	0.04511461739
3	48	96170.18987	0.06005084615
4	64	64691.54646	0.2658743077
5	104	12871.68986	0.318678437
}\dataZzy

\newcommand{\EvalBatch}[4]{\begin{tikzpicture}
    \begin{axis}[title={\normalsize \vspace{4cm} #1},ylabel={\footnotesize #2},xlabel={\footnotesize Batch Size},
            xtick={1,2,3,4,5},
            xticklabels={200, 400, 600, 800},
            width=130,
            height=100,
            grid=both,
            enlarge x limits=0.1,
            bar width=3pt,
            xmajorgrids=true,
            legend style={font=\footnotesize,  draw=none, only marks, legend columns=3
            },
            legend pos={south east},
            legend to name={batchmain},
            ymin=0,
            y tick label style={
                /pgf/number format/.cd,
                scaled y ticks=base 10:#4,
                fixed,
                /tikz/.cd,
                font=\scriptsize
            },
            x tick label style={
                font=\scriptsize
            }
        ]

        \addplot[color=slateblue] table[x={Id}, y={#3}] {\dataCassandraBatch};
        \addplot[color=lightyellow] table[x={Id}, y={#3}] {\dataTuskBatch};
        \addplot[color=lightblue] table[x={Id}, y={#3}] {\dataRCCBatch};

        \legend{\sysname, 
                Tusk,
                RCC}
    \end{axis}
\end{tikzpicture}}

\newcommand{\EvalBatchTPS}{\EvalBatch{}{Throughput (tps)}{Throughput}{-3}}
\newcommand{\EvalBatchLat}{\EvalBatch{}{Latency (s)}{Latency}{0}}

\pgfplotstableread{
Id	Batch Throughput Latency
1	200	3455957.476	3.219937058
2	400	4644405.031	4.58955323
3	600	5092333.662	5.974640878
4	800	5224242.857	7.183034779
}\dataCassandraBatch

\pgfplotstableread{
Id	Batch Throughput Latency
1	200	2806176.933	1.96492799
2	400	4506217.211	3.143919248
3	600	4861292.143	4.317621663
4	800	5203379.582	5.293434375
}\dataTuskBatch

\pgfplotstableread{
Id	Batch Throughput Latency
1	200	2780790.576	1.663515313
2	400	4521196.51	2.669992897
3	600	4870103.175	3.566625385
4	800	5369153.821	4.547684141
}\dataRCCBatch

\pgfplotstableread{
Id LThroughput Percent
1 715040	99.53610176
2 167440		44.0060698
3 209280		56.72131148 
4 168880		64.35483871
5 270080		46.16613419
6 167440		33.64010475
7 776960	106.2626263
8 727680	100.9801721
9 736720	98.29405163
10 729280	100
}\dataRecoveryOneThreeNoPre

 \pgfplotstableread{
Id LThroughput Percent
1 715040	64.35483871
2 167440		27.02702703
3 209280		32.25806452 
4 168880		60.60606061
5 270080		58.82352941
6 167440		29.41176471
7 776960	12.941176471
8 727680	53.57142857
9 736720	72.4137931
}\dataRecoveryLoss

\pgfplotstableread{
Id LThroughput Percent
1 715040	63.40694006
2 167440		0
3 209280		0 
4 168880		56.72131148
5 270080		58.18452381
6 167440		0
7 776960	0
8 727680	67.09265176
9 736720	65.55023923
}\dataRecoveryHSLoss

\pgfplotstableread{
Id LThroughput TB TBL  
1	736720	1841.8				10.84690069
2	0	0				0
3	0	0				0
4	0	0				0
5	8948000	22370				14.44927764
6	760320	1900.8				10.89239103
7	702160	1755.4				10.7775841
8	727680	1819.2				10.82908844
9	726960	1817.4				10.82766027
}\dataRecoveryHalf

\pgfplotstableread{
Id	LThroughput	TB				TBL
1	724960	1812.4				10.82368568
2	0	0				0
3	0	0				0
4	0	0				0
5	0	0				0
6	7177600	17944				14.13121391
7	1435200	3588				11.80896417
8	725040	1812.6				10.82384487
9	724640	1811.6				10.82304873
}\dataRecoveryHalfHalf

\pgfplotstableread{
Id	LThroughput	TB				TBL
1	746960	1867.4				10.86681527
2	0	0				0
3	0	0				0
4	0	0				0
5	0	0				0
6	1020768	2551.92				11.31736739
7	738160	1845.4				10.84971785
8	706880	1767.2				10.78724961
9	742720	1856.8				10.85860271
}\dataRecoveryHalfHalfHalf						

\pgfplotstableread{
Id LThroughput LThroughputLog 
1 808320	21.5 
2 0	        0
3 0	        0
4 0	        0
5 0	        0
6 0	        0
7 815360	21.5
8 820960	21.5
9 815360	21.5
10 803120	 21.5
}\dataRecoveryRCC

 \pgfplotstableread{
Id	LThroughput	TB				 TBL
1	808320	2020.8				10.98071083
2	0	0				0
3	0	0				0
4	0	0				0
5	840240	2100.6				11.03658575
6	863680	2159.2				11.07628117
7	776160	1940.4				10.92213837
8	815360	2038.4				10.99322147
9	823120	2057.8				11.00688706
}\dataRecoveryRCCHalf

\pgfplotstableread{
Id LThroughput TB  TBL
1	820790.5763	2051.976441		11.00279845
2	0	0				0
3	0	0				0
4	0	0				0
5	821196.5101	2052.991275				0
6	790103.1751	1975.257938				10.94782534
7	799153.8211	1997.884553				10.9642575
8	796619.5745	1991.548936				10.95967521
9	797128.505	1992.821262				10.9605966
}\dataRecoveryRCCHalfHalf

\pgfplotstableread{
Id LThroughput Percent
1 845600	100
2 0	        0 
3 0	        0 
4 0	        0 
5 0	        0 
6 0    100
7 867440	100
8 869440	100
9 823200	100
10 795360	 100
}\dataRecoveryAutoBahn

 \pgfplotstableread{
Id	LThroughput	TB	TBL			
1	715600	1789				10.80493767
2	0	0				0
3	0	0				0
4	0	0				0
5	7103520	17758.8				14.11624648
6	751440	1878.6				10.8754422
7	739600	1849				10.85252951
8	745360	1863.4				10.86372168
9	749520	1873.8				10.87175126
}\dataRecoveryAutoBahnHalf

\pgfplotstableread{
Id LThroughput TB TBL
1	720868.0193	1802.170048				10.81551943
2	0	0				0
3	0	0				0
4	0	0				0
5	0	0				0
6	7478340	18695.85				14.19043044
7	2134800	5337				12.3818133
8	722082	1805.205				10.81794696
9	715152	1787.88				10.80403419
}\dataRecoveryAutoBahnHalfHalf

\pgfplotstableread{
Id LThroughput TB TBL
1	820868.0193	2052.170048				11.00293457
2	0	0				0
3	0	0				0
4	0	0				0
5	0	0				0
6	1042656	2606.64				11.34797563
7	801840	2004.6				10.96909867
8	788080	1970.2				10.94412637
9	838960	2097.4				11.03438631
}\dataRecoveryAutoBahnHalfHalfHalf

\pgfplotstableread{
Id LThroughput Percent
1 823680	95.73221757
2 0	        0 
3 275072.233	31.97027348
4 0	        0 
5 275072.233	        31.97027348
6 0     0
7 275072.233	31.97027348
8 874080	101.5899582
9 858880	99.82333798
10 860400	 100
}\dataRecoveryAutoBahnHS

 \pgfplotstableread{
Id LThroughput TB  TBL
1	763680	1909.2				10.89875253
2	0	0				0
3	0	0				0
4	0	0				0
5	7021600	17554				14.09951219
6	694800	1737				10.76238204
7	739920	1849.8				10.85315358
8	739760	1849.4				10.85284158
9	746160	1865.4				10.86526931
}\dataRecoveryAutoBahnHSHalf

\pgfplotstableread{
Id LThroughput TB  TBL
1	684898.4615	1712.246154				10.7416744
2	0	0				0
3	0	0				0
4	0	0				0
5	0	0				0
6	9183680	22959.2				14.48678475
7	1004649.913	2511.624783				11.29440524
8	720868.0193	1802.170048				10.81551943
9	747827.5042	1869.568761				10.86848982
}\dataRecoveryAutoBahnHSHalfHalf

\pgfplotstableread{
Id LThroughput TB  TBL
1	884898.4615	2212.246154				11.11129621
2	0	0				0
3	0	0				0
4	0	0				0
5	0	0				0
6	985041.6232	2462.604058				11.26596897
7	865041.6232	2162.604058				11.07855384
8	820868.0193	2052.170048				11.00293457
9	847827.5042	2119.568761				11.04955505
}\dataRecoveryAutoBahnHSHalfHalfHalf

\pgfplotstableread{
Id LThroughput Percent
1 200320	98.62150453
2 0	        0
3 0	        0
4 25360        32.48523041
5 0	       0
6 0    0
7 20240	32.099
8 240960	103.8597873
9 245360	106.0259945
10 203120	 100
}\dataRecoveryHS

 \pgfplotstableread{
Id LThroughput LThroughputLog  Latency
1 867360 19.72627138647312 3.143919248
2 0  13 6.143919248
3 233792 13.8 6.143919248
4 0 13.751544059089099 6.143919248
5 233792 14.216745858195306 6.143919248
6 0 13.948367231584678 6.143919248
7 565680 18.828979397135278 4.143919248
8 838880 19.77776900401376 3.143919248
9 829448 19.644292877059453 3.143919248
10 838560 19.677554488525278 3.143919248
}\dataRecoveryTusk
\pgfplotstableread{
Id LThroughput TB  TBL
1	809328.7887	2023.321972				10.9825102
2	0	0				0
3	0	0				0
4	0	0				0
5	813151.5407	2032.878852				10.98930853
6	817437.0675	2043.592669				10.99689195
7	834997.7143	2087.494286				11.02755653
8	764063.3113	1910.158278				10.89947647
9	844503.0508	2111.257627				11.04388692
}\dataRecoveryTuskHalf

\pgfplotstableread{
Id LThroughput TB  TBL
1	834997.7143	2087.494286				11.02755653
2	0	0				0
3	0	0				0
4	0	0				0
5	795892.2222	1989.730556				0
6	709332.1053	1773.330263				10.79224553
7	756929.8641	1892.32466				10.88594391
8	771816.0656	1929.540164				10.91404136
9	771816.0656	1929.540164				10.91404136
}\dataRecoveryTuskHalfHalf

\pgfplotstableread{
Id Node Throughput Latency 
1	16	1290073.0526	1.124062932
2	32	1003113.3953	2.255148894
3	48	991521.6901	2.556771524
4	64	955135.6782	2.270197949
}\dataMultiPaxos

\pgfplotstableread{
Id Node Throughput Latency 
1	16	2077495.522	0.186499
2	32	2065606.619	0.4185923265
3	48	2114853.318	0.5957755257
4	64	2139342.599	0.7118323401
}\dataCassCFT

\pgfplotstableread{
Id Node Throughput Latency
1	16	351995.702	4.920654876
2	32	180259.227	5.907966586
3	48	101667.664	5.980074809
4	64	90779.941	6.760651736
}\dataFailCass

\pgfplotstableread{
Id Node Throughput Latency
1	16	1461103.495	1.355580618
2	32	1367478.242	2.36744854
3	48	1329477.127	3.644126989
4	64	1177671.91	4.228489377
}\dataFailTusk

\pgfplotstableread{
Id Node Throughput Latency
1	16	438045.921	1.001692402
2	32  305537.7521	2.191622927
3	48	268476.673	2.86012736
4	64	183292.42	3.29763889
}\dataFailCassCFT

\pgfplotstableread{
Id Node Throughput Latency
1	16	1030713.615	0.9193019814
2	32	900029.0419	1.167954781
3	48	862862.5907	2.22503514
4	64	770988.7351	2.73160612
}\dataFailPaxos

\pgfplotstableread{
Id  Throughput 
1	77857.14284
2	80000
3	80000
4	80000
5	80000
6	80000
7	80000
8	80000
9	80000
10	80000
11	80000
12	80000
13	80000
14	80000
15	80000
16	01
17	01
18	01
19	01
20	01
21	01
22	01
23	01
24	01
25	01
26	01
27	01
28	01
29	01
30	01
31	01
32	01
33	01
34	01
35	01
36	01
37	01
38	01
39	01
40	01
41	01
42	01
43	01
44	01
45	01
46	01
47	01
48	01
49	01
50	01
51	01
52	01
53	01
54	01
55	01
56	01
57	01
58	01
59	01
60	01
61	01
62	01
63	01
64	01
65	01
66	01
67	01
68	01
69	789333.333
70	80000
71	80000
72	80000
73	80000
74	79951.0204
75	80000
76	80000
77	80000
78	80000
79	79800
80	80000
81	80000
82	80000
83	79995.91836
84	80000
85	80000
86	80000
87	80000
88	80000
89	80000
90	80000
91	80000
92	80000
93	80000
94	80000
95	80000
96	80000
97	80000
98	80000
99	80000
100	80000
}\dataRecoveryNoPOA

\pgfplotstableread{
Id  Throughput 
1	77500
2	80000
3	80000
4	79882.75864
5	80000
6	80000
7	80000
8	80000
9	80000
10	80000
11	80000
12	80000
13	80000
14	80000
15	80000
16	01
17	01
18	01
19	01
20	01
21	01
22	01
23	01
24	01
25	01
26	01
27	01
28	01
29	01
30	01
31	01
32	01
33	01
34	01
35	01
36	01
37	01
38	01
39	01
40	01
41	01
42	01
43	01
44	01
45	01
46	01
47	01
48	01
49	01
50	01
51	01
52	01
53	01
54	01
55	01
56	01
57	01
58	01
59	01
60	01
61	01
62	01
63	01
64	01
65	01
66	01
67	378064.516
68	80000
69	80000
70	80000
71	80000
72	79661.22448
73	80000
74	80000
75	80000
76	80000
77	79600
78	80000
79	80000
80	80000
81	80000
82	80000
83	80000
84	80000
85	80000
86	80000
87	80000
88	80000
89	80000
90	80000
91	80000
92	80000
93	80000
94	80000
95	80000
96	80000
97	80000
98	80000
99	80000
100	80000
}\dataRecoveryPOA

\pgfplotstableread{
Id  Throughput 

1	77777.77776
2	80000
3	80000
4	80000
5	80000
6	80000
7	80000
8	80000
9	80000
10	80000
11	80000
12	80000
13	80000
14	80000
15	80000
16	80000
17	60000
18	40000
19	40000
20	40000
21	40000
22	40000
23	40000
24	40000
25	40000
26	40000
27	40000
28	40000
29	40000
30	40000
31	40000
32	40000
33	40000
34	40000
35	40000
36	40000
37	40000
38	40000
39	40000
40	40000
41	40000
42	40000
43	40000
44	40000
45	40000
46	40000
47	40000
48	40000
49	40000
50	40000
51	40000
52	40000
53	40000
54	40000
55	40000
56	40000
57	40000
58	40000
59	40000
60	40000
61	40000
62	40000
63	40000
64	40000
65	40000
66	40000
67	742533.333
68	80000
69	80000
70	80000
71	80000
72	79785.567
73	79444.89796
74	80000
75	80000
76	80000
77	79734.69388
78	80000
79	80000
80	80000
81	80000
82	80000
83	80000
84	80000
85	80000
86	80000
87	80000
88	80000
89	79857.14284
90	80000
91	80000
92	80000
93	80000
94	80000
95	80000
96	80000
97	80000
98	80000
99	80000
100	80000
}\dataRecoveryPOR

\pgfplotstableread{
Id  Throughput 
 1 0.56
 2 3.33
 3 5.24
}\dataRecoveryTimeNoPOA

\pgfplotstableread{
Id  Throughput 
 1 0.55
 2 1.02
 3 2.02
}\dataRecoveryTimePOA

\AtBeginDocument{%
  
  }

\begin{document}

%%
%% The "title" command has an optional parameter,
%% allowing the author to define a "short title" to be used in page headers.
%\title{\sysname: A Layered Consensus with Partial Progress}
%\title{\sysname: A Robust Partitionable View Synchronization}

%\title{\sysname: A Robust Partitionable View Synchronization for Consensus with Partial Progress}

\title{\sysname: Consensus with Partial Progress via Robust Partitionable View Synchronization}

%Partial Progress Conjecture under Asynchrony
%\title{Did we miss P In CAP? \\ Partial Progress Conjecture under Asynchrony}

%%
%% The "author" command and its associated commands are used to define
%% the authors and their affiliations.
%% Of note is the shared affiliation of the first two authors, and the
%% "authornote" and "authornotemark" commands
%% used to denote shared contribution to the research.
\author{Shaokang Xie}
\affiliation{%
  \institution{University of California, Davis}
  \city{}
  \country{}}
%\email{skxie@ucdavis.edu}

\author{Dakai Kang}
\affiliation{%
  \institution{University of California, Davis}
  \city{}
  \country{}}
%\email{jucchen@ucdavis.edu}

\author{Junchao Chen}
\affiliation{%
  \institution{University of California, Davis}
  \city{}
  \country{}}
%\email{jucchen@ucdavis.edu}

\author{Suyash Gupta}
\affiliation{%
  \institution{University of Oregon}
  \city{}
  \country{}}
%\email{suyash.gupta@berkeley.edu}

\author{Daniel P. Hughes}
\authornote{In loving memory of Daniel P. Hughes, a valued friend and colleague, whose foundational contributions, ideas, and passion continue to inspire and shape this work.}
\affiliation{%
  \institution{Radix DLT Ltd.}
  \city{}
  \country{}}

\author{Mohammad Sadoghi}
\affiliation{%
  \institution{University of California, Davis}
  \city{}
  \country{}}
%\email{msadoghi@ucdavis.edu}
\nocite{sourcecode, extended, cloudflare_outage1, cloudflare_outage2, aws_useast_outage}

%%
%% By default, the full list of authors will be used in the page
%% headers. Often, this list is too long, and will overlap
%% other information printed in the page headers. This command allows
%% the author to define a more concise list
%% of authors' names for this purpose.
%%\renewcommand{\shortauthors}{Trovato et al.}

%%
%% The abstract is a short summary of the work to be presented in the
%% article.
\begin{abstract}

Replicated databases and permissioned blockchains rely on Byzantine Fault-Tolerant (BFT) consensus 
to maintain a consistent transaction order across replicas. 
These protocols preserve safety even under asynchrony, 
as they commit a transaction only after agreement among a strong quorum of replicas. 
During network partitions, however, when no strong quorum is reachable, they lose liveness and cannot make useful progress. 

In this paper, we present \sysname{}, a consensus protocol that enables partial progress without sacrificing safety. 
\sysname{} achieves this through a two-tier certification framework that decouples availability from commitment, 
allowing each partition to extend its own chain and reconcile these chains once the network is restored. 
\new{To support this, \sysname{} introduces a pacemaker that advances views without requiring a strong quorum and calibrates each replica's timeout off the critical path.}

Our evaluation results show that \sysname remains competitive with state-of-the-art BFT protocols under stable conditions, \new{sustaining \mbox{$900$K} TPS at 16 replicas and \mbox{$480$K} TPS at 104 replicas, with latency ranging from \mbox{$0.31$s} at 16 replicas to \mbox{$0.75$s} at 104 replicas.}
\new{Under severe partitions, \sysname{} maintains non-zero speculative throughput through \PoA{}-backed progress, preserving work that can be reconciled once connectivity is restored.}
\end{abstract}

%%
%% This command processes the author and affiliation and title
%% information and builds the first part of the formatted document.
\maketitle

\section{Introduction}
\new{Decentralized systems offer their clients an immutable ledger even in the presence of faulty or Byzantine behavior.
To achieve this, they employ a Byzantine Fault-Tolerant (\BFT{}) consensus protocol that allows multiple distributed parties to maintain a single, globally consistent transaction order over a shared state~\cite{blockchain-book,sadoghi2025problems}. 
Because these parties often hold the same data, they are termed replicas.}
%
%In decentralized deployments, this role is typically fulfilled by Byzantine Fault-Tolerant (\BFT{}) consensus protocols, which preserve agreement among replicas even in the presence of faulty or malicious behavior.
\new{State-of-the-art \BFT{} protocols, which have served as the foundational building blocks of permissioned blockchains, provide strong fault tolerance and achieve high throughput and low latency under stable network conditions~\cite{pbftj,hotstuff}.}

\begin{figure}[t]
\centering
\vspace*{0mm}
\resizebox{\columnwidth}{!}{
\begin{tikzpicture}[
    % Style Definitions
    replica/.style={circle, draw=black, thick, fill=blue!10, minimum size=1.1cm, font=\bfseries},
    bad_link/.style={draw=black!60, thin},
    cross_mark/.style={draw=red, ultra thick, cross out, minimum size=0.4cm},
    partition_line/.style={red!80!black, ultra thick, decorate, decoration={zigzag, segment length=6mm, amplitude=1.2mm}},
    log_box/.style={draw=black!80, fill=green!20, minimum width=0.4cm, minimum height=0.4cm, inner sep=0pt},
    arrow_good/.style={-{Stealth[scale=0.8]}, green!60!black, ultra thick},
    arrow_bad/.style={-{Stealth[scale=0.8]}, red!80, ultra thick},
    progress_sign/.style={circle, draw=green!60!black, fill=green!60!black, minimum size=0.4cm, path picture={\fill[white] (path picture bounding box.west) -- (path picture bounding box.north east) -- (path picture bounding box.south east) -- cycle;}},
    forbidden_sign/.style={circle, draw=red, line width=1pt, minimum size=0.4cm, path picture={\draw[red, line width=1pt] (path picture bounding box.south west) -- (path picture bounding box.north east);}}
]

    % === Left Side: Traditional BFT (Stalled) ===
    \begin{scope}[local bounding box=leftside]
        \node[replica] (R1) at (0, 2) {R1};
        \node[replica] (R2) at (2.5, 2) {R2};
        \node[replica] (R3) at (0, 0) {R3};
        \node[replica] (R4) at (2.5, 0) {R4};

        % Internal Communication Arrows
        \draw[arrow_good, <->] (R1) -- (R3);
        \draw[arrow_good, <->] (R2) -- (R4);
        \draw[arrow_bad, <->] (R1) -- (R2); 
        \draw[arrow_bad, <->] (R3) -- (R4); 
        \draw[arrow_bad, <->] (R1) -- (R4); 
        \draw[arrow_bad, <->] (R2) -- (R3); 
        % \draw[arrow_bad, <->] (R1) -- (R2) node[midway, cross_mark, scale=0.7] {}; 

        % Partition Line
        \draw[partition_line] (1.25, 2.8) -- (1.25, -1);
        \node[red!80!black, font=\bfseries, align=center] at (0.9, 3) [xshift=0.4cm] {Network Partition};

        % Bottom Log (Stalled)
        \foreach \x in {0,1,2,3} {
            \node[log_box, fill=gray!5] (el\x) at (0.5 + \x*0.4, -1.8) {};
        }
        \node[above=0.03cm of el2, font=\small, xshift=-0.2cm] (logL) {\textbf{Global Log}};
        \draw[red!70, ultra thick] (0.3, -1.8) -- (1.9, -1.8) node[midway, below=0.2cm, font=\tiny\bfseries] {COMMIT BLOCKED};

        \node[forbidden_sign, right=0.2cm of el3] (fs) {};
        \node[below=0cm of fs, font=\tiny, red!70, yshift=-0cm] {FROZEN};

        \node[above=0.6cm of R1, xshift=1.25cm, font=\large\bfseries] {Traditional BFT (Stalled)};
        \node[below=1cm of logL, font=\small, align=center] {
            Requires \textbf{$>2/3$ Quorum} (3 nodes) \\ Now available: 2 nodes \\ 
            \textcolor{red}{\textbf{Status: STALLED (Throughput = 0)}}
        };
    \end{scope}

    % Central Divider
    \draw[dashed, thick, gray!50] (4.2, 3.5) -- (4.2, -4);

    % === Right Side: \sysname (Partial Progress) ===
    \begin{scope}[shift={(6,0)}]
        \node[replica] (S1) at (0, 2) {R1};
        \node[replica] (S2) at (2.5, 2) {R2};
        \node[replica] (S3) at (0, 0) {R3};
        \node[replica] (S4) at (2.5, 0) {R4};

        \draw[partition_line] (1.25, 2.8) -- (1.25, -1);
        \node[red!80!black, font=\bfseries, align=center] at (0.9, 3) [xshift=0.4cm] {Network Partition};

        % Internal Communication Arrows
        \draw[arrow_good, <->] (S1) -- (S3);
        \draw[arrow_good, <->] (S2) -- (S4);
        \draw[arrow_bad, <->] (S1) -- (S2); 
        \draw[arrow_bad, <->] (S3) -- (S4); 
        \draw[arrow_bad, <->] (S1) -- (S4); 
        \draw[arrow_bad, <->] (S2) -- (S3); 
        % \draw[arrow_bad, <->] (R1) -- (R2) node[midway, cross_mark, scale=0.7] {}; 

        % PoA Logs (Group A)
        \foreach \x in {0,1,2} {
            \node[log_box] (b\x) at (0.2-\x*0.4, -1.8) {$\checkmark$};
        }
        \node[above=0.03cm of b1, font=\small] {\textbf{\shortstack{\em Group A \\ Local Log}}};
        \node[progress_sign, right=0.2cm of b0] (ps1) {};
        \node[below=0cm of ps1, font=\tiny, green!80!black, yshift=-0cm] {ACTIVE};

        % PoA Logs (Group B)
        \foreach \x in {0,1,2} {
            \node[log_box] (c\x) at (1.9+\x*0.4, -1.8) {$\checkmark$};
        }
        \node[above=0.03cm of c1, font=\small, align=left] {\textbf{\shortstack{\em Group B \\ Local Log}}};
        \node[progress_sign, right=0.2cm of c2] (ps2) {};
        \node[below=0cm of ps2, font=\tiny, green!80!black, yshift=-0cm] {ACTIVE};

        \node[above=0.6cm of S1, xshift=1.25cm, font=\large\bfseries] {\sysname (Partial Progress)};
        \node[below=2cm of S3, xshift=1.25cm, font=\small, align=center] {
            Requires \textbf{$>1/3$ Quorum} (2 nodes) \\ Now available: 2 nodes \\ 
            \textcolor{green!60!black}{\textbf{Status: PARTIAL PROGRESS}}
        };
    \end{scope}

\end{tikzpicture}
}
\vspace{-7mm}
\caption{Comparison between traditional BFT protocols and \sysname under a network partition. In a 2-2 partition scenario, traditional protocols stall as they cannot reach the required $2/3$ quorum, whereas \sysname maintains partial progress using its Proof of Availability (PoA) mechanism.}
\label{fig:partition_comparison}
\vspace{-1mm}
\end{figure}

In practice, however, networks are not always stable.
Transient network partitions, delay asymmetries, and intermittent connectivity can disrupt communication among replicas.
High-impact production incidents reported by major technology providers such as Amazon and Cloudflare demonstrate that 
network partitions are not merely a theoretical concern but a recurring operational reality with severe consequences~\cite{alquraan2018,solana-failure,aws_useast_outage,cloudflare_outage1,cloudflare_outage2}. 
For example, a major Amazon EC2/EBS outage in the \texttt{us-east-1} region revealed how misrouted traffic and aggressive recovery storms can effectively partition storage clusters, rendering data unavailable for extended periods and causing widespread downtime across SaaS providers and financial platforms, with aggregate losses reaching millions of dollars~\cite{aws_useast_outage}. 
Similarly, Cloudflare has experienced outages in which routing anomalies, overloaded or impaired interconnects, and regional reachability failures created partitions between cloud regions~\cite{cloudflare_outage1,cloudflare_outage2}. 
The impact is even more pronounced in decentralized and blockchain systems:
network congestion–induced outages in Solana have led to millions of dollars in losses for individual traders~\cite{solana-failure}.
\new{Moreover, partitions are not always accidental: routing-layer attacks, such as BGP hijacks, have been shown to isolate nodes and split the Bitcoin network into disconnected components~\cite{apostolaki2017hijacking}.}
\new{These incidents highlight partition-like failures as a practical risk for geo-distributed services.}

\new{For \BFT{} replicated systems, such partition-like failures create a quorum-stall problem: protocols preserve \emph{safety}---that is, no two correct replicas commit conflicting transactions---but fail to process new requests when no connected component contains a sufficiently large quorum.}
\new{This tension between safety and availability is a consequence of the CAP theorem
(\emph{Consistency}, \emph{Availability}, and \emph{Partition tolerance}), which states that in the presence of network 
partitions, a distributed system cannot simultaneously guarantee both consistency and availability~\cite{cap-theorem,gilbert2012perspectives}.}
%In the context of a replicated database, these properties are interpreted as follows:
%(1) \emph{Consistency (Safety)}: all correct replicas maintain the same committed state;
%(2) \emph{Availability (Liveness)}: the system continues to process client requests;
%(3) \emph{Partition tolerance}: the system continues operating despite arbitrary network failures that prevent some replicas from communicating.
%Traditional \BFT{} protocols resolve this tension conservatively: when partitions prevent communication among a majority of replicas, they sacrifice availability to preserve consistency (safety).
%
\new{Consequently, BFT systems designed for mostly reliable networks can lose availability under partitions, even though they continue to preserve safety.}

\BFT{}-based decentralized systems can stall under network partitions because the underlying consensus protocol
suffers from the following structural dependencies:

\paragraph{\textcolor{black}{\textbf{C1: Strong-Quorum Dependence.}}}
\new{Traditional \BFT{} protocols operate under the assumption that the system has at least $n=3f{+}1$ replicas,
where $f$ denotes the number of faulty replicas.
To guarantee that all correct replicas agree on a single transaction order, these protocols require a \emph{strong quorum}---at least $n{-}f$ ($2f{+}1$)---of replicas to certify a proposal.
Under a network partition, replicas are divided into multiple \emph{connected components} and replicas in different components are unable to communicate with each other (refer to Fig.~\ref{fig:partition_comparison}).
If no connected component contains a strong quorum (has fewer than $n{-}f$ replicas), 
then no proposal can form a certificate that justifies non-conflicting commitment.}
%Consequently, the protocol cannot produce new committed blocks, resulting in zero throughput.

\paragraph{\textbf{C2: Designated-Leader Dependence.}}
Classic \BFT{} protocols (such as PBFT~\cite{pbft} and HotStuff~\cite{hotstuff}) operate in rounds, each with
a designated leader responsible for determining the order of transactions. 
However, under a network partition, the leader is unreachable for all but the replicas in its connected component.
When progress stalls, replicas assume that the leader has failed and initiate the recovery protocol to replace the leader.
However, the recovery protocol itself requires participation of a quorum of replicas.
If no partition contains a strong quorum, replicas repeatedly attempt leader replacement without ever making progress.

\paragraph{\bf \new{C3: Round-Overlap under Varying Network Delay.}}
\new{Certificate formation also requires replicas to remain in the same round long enough to exchange proposals and votes.
Thus, progress depends on sufficient \textit{round overlap} among replicas in the same connected component.
However, under partial synchrony, the effective delay bound is unknown before stabilization and may vary across stabilized periods~\cite{dwork1988consensus,pbft}.
A timeout that is too small may cause replicas to leave a round before messages arrive, while an overly conservative timeout increases latency.}

\paragraph{\textbf{Consequence: Zero Useful Progress.}}
Together, strong-quorum dependence, leader dependence, and round-overlap requirements imply that existing \BFT{} protocols make no useful progress during a partition unless some connected component can both gather the required quorum and keep enough replicas synchronized in the same round.
Replicas may continue retransmitting messages and triggering recovery attempts, but no new transactions can be committed, resulting in zero throughput.

In this paper, we present {\bf \sysname}, a novel consensus protocol that revisits the practical implications of CAP in decentralized replicated systems~\cite{blockchain-book}.
Rather than attempting to achieve full liveness during partitions, \sysname{} enables \emph{Partial Liveness}, allowing \textit{partial progress} while strictly preserving safety during network partitions.
\sysname{} addresses the above challenges through three complementary mechanisms.
First, \sysname{} introduces a {\em two-tier certification mechanism} for each transaction:
\begin{itemize}[nosep]
\item \emph{Proof of Availability} (\PoA{}), a certificate formed by collecting votes from a weak quorum of $f{+}1$ replicas.
\item \emph{Proof of Reliability} (\PoR{}), a certificate formed by collecting votes from a strong quorum of $n{-}f$ replicas.
\end{itemize}
A \PoA{} certificate enables replicas to make {\em partial progress} during a network partition. 
However, to guarantee consistency, \sysname{} requires \PoR{} certificates to commit transactions.

Second, \sysname{} does not designate any replica as ``the leader''; instead, it allows all replicas to propose transactions.
Each replica then applies our deterministic {\em proposal priority rule} to independently select the {\em strongest proposal}
among those it has received.
As a result, even during a network partition, every replica can always identify a locally strongest proposal to vote for.

\new{Third, \sysname{} deploys a novel \textbf{decoupled pacemaker} for round or view synchronization.
%\footnote{In this paper, we use ``round'' and ``view'' interchangeably: a round in \sysname{} corresponds to the logical view in prior BFT protocols~\cite{hotstuff}. We keep the term ``view synchronization'' to follow prior terminology.}
Traditional pacemakers do perform round synchronization, but they face the following challenge:}

\paragraph{\bf \new{C4: Pacemaker Constraints.}}
\new{Existing pacemakers guarantee sufficient round overlap among a strong quorum of $2f{+}1$ replicas, 
which may be unavailable under network partitions. 
Achieving this requires the pacemaker to perform two tasks: 
advancing replicas' rounds (or views), and 
calibrating their timers so that these rounds overlap. 
However, the pacemaker performs both tasks on the critical consensus path, which degrades system performance.

\sysname{} decouples these tasks as follows: 
logical round advancement remains certificate-driven on the critical consensus path, 
while timeout calibration runs in the background—each replica adjusts its local round timeout independently, without blocking consensus progress.
However, one challenge still remains unresolved.
}
%\new{These mechanisms allow replicas to make partition-local progress while preserving global safety.
%Such progress can create divergent histories that must be reconciled after recovery:}

\paragraph{\bf \new{C5: Convergence.}}
\new{Although \PoA{} certificates enable progress under network partitions,
they also allow different connected components to advance along distinct \PoA{}-backed branches.}
Because replicas can observe different subsets of proposals, conflicting \PoA{} certificates can coexist, and 
Byzantine replicas may further equivocate across competing branches.
Consequently, once the network is restored, \sysname{} must ensure that these divergent partial histories can be reconciled into a single committed order.

\new{\sysname{} addresses this challenge in two ways. 
First, its deterministic proposal-priority rule makes reconciliation \textit{implicit}: 
once connectivity is restored, correct replicas identify the highest-priority branch as the strongest, vote for it, and extend it with new \PoR{}s. 
Second, the \textit{decoupled pacemaker} enables lagging replicas to catch up whenever they observe higher-round certificates. 
As a result, once a strong quorum becomes reachable again, 
all correct replicas—whether previously stalled or progressing within separate partitioned components—reconcile their histories by extending the same strongest branch.
}

%\new{With these designs, \sysname{} preserves partial progress inside connected $f{+}1$ partitions, enables fast catch-up and implicit reconciliation once $2f{+}1$ replicas become connected again, and adapts to unknown and time-varying delay through background timeout calibration.}

\new{Our evaluation demonstrates that \sysname{} sustains partial progress during network partitions while remaining competitive under stable network conditions.
Under steady-state, \sysname{} achieves $900$K TPS at 16 replicas and $480$K TPS at 104 replicas, retaining $53.6\%$ of its throughput, 
while its latency increases from $0.31$s at 16 replicas to $0.75$s at 104 replicas.}
During network partitions, \sysname achieves a `0-to-1' breakthrough by sustaining \textit{Continuous \PoA{} Generation} with a speculative throughput of $500$K TPS even when global consensus stalls (\Cref{fig:eval_protocols}.c). 
This accumulated partial progress enables a rapid \textit{recovery burst} upon network restoration, allowing the system to efficiently reconcile divergent histories and resume peak performance quickly.

% Specifically, we make the following contributions:
% \begin{itemize}[leftmargin=*]

%     \item \textbf{Two-tier certification for partition-aware partial progress.}
%     We introduce a two-tier certification framework that decouples recoverable partial progress (\PoA) from final commitment (\PoR).

%     \item \textbf{Partitionable leadership under partitions.} 
%     We eliminate the dependence of designated leader by allowing replicas to continuously propose and to locally converge on the strongest available proposal within each connected component by a deterministic voting rule.

%     \item \textbf{Certificate-driven round synchronization.} 
%     We design a pacemaker that advances rounds through certificate-backed progress and proposal-driven fast catch-up. Replicas may either enter a higher round using local certificate, or upon receiving a valid higher-round proposal. This allows the system to self-align continuously without relying on explicit epoch-boundary synchronization.

%     \item \textbf{Eventual reconciliation after partition.} 
%     We show how divergent \PoA{}-backed histories accumulated during partitions can be safely reconciled after connectivity is restored, ensuring a globally consistent transaction ordering once the network is restored. 

%     \item \textbf{Empirical evaluation under partitions.} We conduct an extensive performance analysis of \sysname. Our results demonstrate that \sysname remains competitive with state-of-the-art protocols under stable network conditions, while sustaining non-zero progress during severe partitions.

% \end{itemize}

Specifically, we make the following contributions:
\begin{itemize}[nosep,wide]

    \item %\textbf{Two-tier certification for partition-aware partial progress.}
    We introduce a two-tier certification mechanism that separates partial progress (\PoA{}) from final commitment (\PoR{}), enabling progress within connected components while preserving safety.

    \item %\textbf{Partitionable chain construction without designated leaders.} 
    We eliminate reliance on a designated leader by allowing all replicas to propose continuously and 
    to independently vote for the strongest proposal according to a deterministic proposal-priority rule. 
    This design enables local chain extension during partitions.

    \item %\textbf{Decoupled pacemaker for round synchronization and timeout calibration.}
    We design a decoupled pacemaker that enables round advancement even with weak quorums, while calibrating local timeouts in the background, off the critical path.
    %This enables partial progress under $f{+}1$ connectivity, fast catch-up under $2f{+}1$ connectivity, and adaptation to unknown and time-varying network delay}.

    \item %\textbf{Implicit reconciliation after network recovery.} 
    We show that divergent \PoA{}-backed histories accumulated during partitions can reconcile once connectivity is restored.
    %Through deterministic proposal selection and renewed \PoR{} formation, the system converges back to a single globally consistent committed order.

    \item %\textbf{Empirical evaluation under adversarial partitions.} 
    We conduct an extensive evaluation of \sysname{} under both stable and partitioned network conditions.
    Our results demonstrate that \sysname{} remains competitive with state-of-the-art protocols in synchronous settings while sustaining non-zero useful progress during severe partitions.

\end{itemize}

\section{System Model}
\label{subsec:model}

% Distributed databases must operate under the persistent threat of network partitions: high-impact production incidents underscore that this is not a merely theoretical concern~\cite{alquraan2018,solana-failure}. 
% In financial decentralized systems and databases, the stakes are even higher: 
% for instance, outage due to network congestion at Solana led to millions of dollars of losses for individual traders~\cite{solana-failure}.
%We consider a decentralized system with $n$ replicas. 
%The system must tolerate Byzantine faults, where faulty replicas may have arbitrary malicious behaviors.
%Replicas communicate over an unreliable network subject to message delays and transient partitions. Accordingly, we assume a partially synchronous network model with authenticated communication.
%\vspace{1.5mm}
\paragraph{\textbf{Adversary Model.}}
We consider a distributed system consisting of $n$ replicas, each assigned a unique identifier \(i \in \{1,\ldots,n\}\). We denote the replica with identifier \(i\) by \(p_i\).
The system tolerates up to \(f\) Byzantine replicas, where \(n \geq 3f + 1\). 
Byzantine replicas, also referred to as faulty replicas, may exhibit arbitrary or malicious behaviors that deviate from the consensus protocol. The remaining replicas are \emph{correct} (or \emph{honest}) and strictly follow the protocol.
The system processes requests from clients, and we allow any number of clients to be arbitrarily faulty.

\vspace{1.5mm}
\noindent \textbf{Authenticated Communication.}  
We assume an authenticated communication model based on standard public-key cryptography~\cite{cryptobook}. Each client and replica possesses a public--private key pair and uses digital signatures to authenticate messages. A sender signs a message using its private key, and any recipient can verify the signature using the corresponding public key. We denote by \(\SignMessage{m}{p_i}\) a digital signature over message \(m\) produced by replica \(p_i\).

Correct replicas accept only well-formed messages that carry valid signatures. We assume that Byzantine replicas are computationally bounded and cannot forge signatures or break the cryptographic primitives used by correct replicas.

We further assume the existence of a collision-resistant hash function \(\Hash{\cdot}\), meaning it is computationally infeasible to find two distinct inputs \(x \neq y\) such that \(\Hash{x} = \Hash{y}\).

\vspace{1.5mm}
\noindent \textbf{Threshold-Signature Setup.}
\new{In addition to ordinary digital signatures, \sysname{} uses threshold signatures for compact certificates: an $(f{+}1,n)$ scheme for weak-quorum certificates and the deterministic threshold coin and a $(2f{+}1,n)$ scheme for strong-quorum \PoR{} certificates.
We assume that the corresponding threshold verification keys and replica signing shares are generated before any runtime partition, either by a trusted dealer or by executing a standard DKG protocol~\cite{pedersen1991threshold,gennaro1999secure} while the replicas are fully connected.
}

\vspace{1.5mm}
\noindent {\bf Network Model.}  \label{subsubsec:network_model}
%We consider a system of \(n\) replicas interconnected by authenticated, point-to-point channels.  
%The links are \emph{reliable} in the sense that messages are neither corrupted nor forged, but they may be delayed or dropped.
We assume the \emph{partial synchrony} model of Dwork \etal~\cite{dwork1988consensus}.  
Existing \BFT{} protocols assume the existence of a single Global Stabilization Time (\textsf{GST}) after which the network remains permanently synchronous~\cite{hotstuff,hs1}.
In contrast, we incorporate a more realistic setting that allows recurring network partitions.  
Specifically, we assume an unbounded sequence of stabilized periods, each beginning at a stabilization point \(\textsf{GST}_k\), where \(\textsf{GST}_1 < \textsf{GST}_2 < \cdots\).
During each stabilized period, the network is synchronous for a sufficiently long but possibly finite interval, meaning that messages between correct replicas are delivered within an \textit{unknown} bound $\Delta$. 
Outside such synchronous periods, we make no global timing or connectivity assumptions: the network may be asynchronous, partially connected, or partitioned into multiple connected components.
%In particular, a partition is not assumed to be internally synchronous by default.
Our partial-progress guarantees are therefore conditioned on any connected component in which messages among mutually reachable correct replicas are delivered within $\Delta$.

\section{Achieving Partial Progress under Network Partitions}
\label{sec:design_overview}

Traditional BFT protocols cannot make progress during a network partition unless at least $2f{+}1$ replicas can reach one another; 
we refer to such a set as a {\em strongly connected component}. 
\sysname{}, by contrast, aims to enable \emph{partial progress} while preserving safety even when no such strongly connected component exists.
It achieves this goal through the following design principles:

\begin{enumerate}[nosep,wide]
    \item \textbf{Eliminating futile recovery traffic.}
    When no strongly connected component exists, it is wasteful for replicas to repeatedly trigger recovery mechanisms in an attempt to achieve global synchronization. 
    Instead, they should continue issuing new proposals based on locally available connectivity.

    \item \textbf{Partitionable leaders.}
    During a network partition, not all replicas can necessarily reach the replica designated as leader. 
    Instead, each replica should take over this role itself, independently selecting the strongest proposal from those it has received.

    \item \textbf{Weak-quorum ordering.}
    During a network partition, replicas may be unable to reach a strongly connected component. 
    In that case, they should order transactions based on \emph{weak quorums}, 
    while recognizing that such decisions are non-final and must be reconciled once a strong quorum becomes available.
    
\item \textbf{Decoupled pacemaker.}
    \new{
    Traditional pacemakers are designed to synchronize a strong quorum of replicas, and they do so by placing all round-synchronization logic on the critical consensus path. 
    Because such a quorum may be unavailable during a network partition, 
    a pacemaker should instead be able to coordinate a weak quorum of replicas, while keeping much of this coordination off the critical path.
    }
\end{enumerate}

\subsection{Tools for Partial Progress}
\label{ss:tools-partial-progress}

To implement the aforementioned design principles, we introduces:
\begin{enumerate}[nosep,wide]
    \item \emph{Partitionable Leader Election}.
    \item \emph{Decoupled Pacemaker Design}.
\end{enumerate}
Next, we explain how these tools help \sysname{} enable partial progress during network partition.

\paragraph{\bf (1) Partitionable Leader Election}
Permitting leader election to proceed even when no strongly connected component exists 
allows each partition to make progress independently, without waiting for a dedicated leader. 
We achieve this goal through two new principles:  {\em two-tier certification} and {\em deterministic proposal priority}.

\paragraph{\normalfont\itshape Two-Tier Certification.}
Because no strongly connected component is available during a network partition, 
we allow replicas to make decisions based on weaker connected components. 
Consequently, to guarantee safety, we require two types of certificates.

\begin{itemize}[nosep]
  \item \emph{Proof of Availability (\PoA).}
    A \PoA{} requires votes from at least $f{+}1$ replicas and 
    guarantees that the proposal's payload is retrievable within the system, but it does not guarantee a final commit.

  \item \emph{Proof of Reliability (\PoR).}
  A \PoR{} requires votes from at least $n{-}f$ replicas.
  It certifies global consistency and prevents conflicting commits.
\end{itemize}

\paragraph{\normalfont\itshape Deterministic Proposal Priority Rule.}
Eliminating the need for replicas to wait for a designated leader means they must instead determine the next leader themselves,
and this determination must be deterministic so that \PoA{}s can be formed. 
Consequently, in each round, every replica broadcasts a proposal and 
independently identifies the leader from the proposals it receives.

Each proposal is ranked according to the recency of the certificates it carries—first by the highest referenced \PoR{}, and then by the highest referenced \PoA{}. 
Intuitively, proposals backed by more recent strong-quorum evidence are preferred, 
while weak-quorum evidence serves as a tie-breaking signal reflecting more recent partial progress. 
If multiple proposals remain tied within the same round, replicas resolve the tie using a deterministic threshold-coin mechanism derived from proposal signature shares. 
This guarantees that all correct replicas observing the same proposal set select the same \emph{strongest proposal}, enabling partition-local convergence without centralized leadership.

\paragraph{\normalfont\itshape Implicit Reconciliation on Network Recovery.}
During a network partition, replicas may form \PoA{} certificates for diverging chains within different components. 
However, because proposal selection is deterministic and prioritized by certificate recency, 
correct replicas converge on the same strongest proposal and vote for it as soon as the network recovers. 
New \PoR{}s then form on that branch, re-establishing a single, globally committed order. 
Reconciliation is thus not performed by an explicit merge protocol; 
it emerges naturally from deterministic proposal selection once certificate-driven round synchronization is restored.

\paragraph{\bf (2) Decoupled Pacemaker Design.}
\new{
Although partitionable leader election is necessary for progress, it is not sufficient. 
A successful election requires replicas to remain in the same {\em round} (or {\em view}) long enough to exchange proposals and
votes, so a mechanism for round synchronization is essential. 
Prior work employs a {\em pacemaker} that aligns a strong quorum of $n{-}f$ replicas on a common round, 
ensuring sufficient overlap for them to exchange messages. 
Moreover, the pacemaker runs on the critical consensus path and thus degrades consensus performance.

We achieve this through a novel decoupled pacemaker that separates the pacemaker's two tasks: 
{\em round advancement} and {\em timeout calibration}. 
Round advancement remains on the critical consensus path, while timeout calibration is now handled by a separate background mechanism. 
}

\paragraph{\normalfont\itshape Round Advancement.}
A replica advances from round $r$ to $r{+}1$ under two cases: 
(i) when a \PoR{} is formed, indicating successful certification by a strong quorum; or 
(ii) when, after round $r$ times out, a weak quorum of $f{+}1$ replicas requests a round advancement. 
In both cases, there is sufficient information for lagging replicas to jump immediately to the {\em highest certified round}. 
Convergence toward a common round is guaranteed because round numbers increase monotonically and 
replicas accept certificates only for strictly higher rounds. 
During synchronous periods, certificates propagate within $\Delta$, bounding divergence among correct replicas.

\new{
\paragraph{\normalfont\itshape Timeout Calibration.}
Another equally important role of the pacemaker is ensuring sufficient round overlap among replicas 
to facilitate the exchange of messages and proposals. 
This must happen even under unknown and time-varying delays, and it should remain off the critical path. 
We meet this goal not by forcing all replicas to synchronize on a single timeout value, 
but by allowing each replica to calibrate its own local timeout.
}

%This allows replicas to continue making partial progress inside connected $f{+}1$ partitions, while still enabling fast catch-up and convergence once $2f{+}1$ replicas become connected again.

\subsection{System Guarantees}
In a system of $n = 3f{+}1$ replicas, where up to $f$ replicas may be Byzantine, 
%\sysname{} aims to reach an agreement on the totally ordered ledger of client transactions.
\sysname{} offers the following guarantees:

\begin{itemize} [wide]
    \item \textbf{Safety:} If two correct replicas $p_i$ and $p_j$ commit two transactions $tx_1$ and $tx_2$ at the same position in the log, then $tx_1=tx_2$.
    \item \textbf{Liveness:} During any sufficiently long stabilized synchronous period, correct replicas continuously commit new proposals containing client transactions.
\end{itemize}

Moreover, \sysname offers {\em partial liveness} even during network partitions. 
Specifically, we introduce the following property:

\begin{itemize} [wide]
    \item \textbf{Partial Liveness:} 
    While the network is partitioned, 
    any synchronous connected component containing at least $f{+}1$ correct replicas continues ordering new proposals containing client transactions.
\end{itemize}

\section{\sysname}
\label{sec:overview}
\sysname{} aims to drive BFT consensus among replicas while guaranteeing partial liveness even during periods of network partition. 
To this end, \sysname{} must ensure the following:
\begin{itemize}[nosep,wide]
    \item \emph{Partitionable chain construction.}
    This enables replicas within each connected component to keep proposing, allowing the chain to grow even when a strong quorum is unreachable. 
    \sysname{} achieves this through \emph{partitionable leader election}.

    \item \emph{Round synchronization.}
    This keeps replicas sufficiently synchronized to exchange messages and form certificates, and provides lagging replicas a mechanism to catch up. 
    \sysname{} achieves this through the \emph{decoupled pacemaker}.
\end{itemize}

\begin{figure}[t]
    \vspace{-1mm}
    \begin{tikzpicture}[>=stealth, thick]
    \useasboundingbox (0,0.5) rectangle (8.0,-1.9);

    \draw (0,0) -- (7.85,0);
    \node[left] at (0,0) {R1};
    \draw (0,-0.6) -- (7.85,-0.6);
    \node[left] at (0,-0.6) {R2};
    \draw (0,-1.2) -- (7.85,-1.2);
    \node[left] at (0,-1.2) {R3};
    \draw[color=red] (0,-1.8) -- (7.85,-1.8);
    \node[left, color=red] at (0,-1.8) {R4};

    \node[align=center, fill=white, text=red] at (0.2, -2.2) {\small{\textit{R4 is Faulty}}};

    \node[align=center, draw, fill=white, rounded corners=1pt, inner sep=2pt] at (0.5,0) {P1$_r$};
    \node[align=center, draw, fill=white, rounded corners=1pt, inner sep=2pt] at (0.5,-0.6) {P2$_r$};
    \node[align=center, draw, fill=white, rounded corners=1pt, inner sep=2pt] at (0.5,-1.2) {P3$_r$};
    \node[align=center, draw=red, text=red, fill=white, rounded corners=1pt, inner sep=2pt] at (0.5,-1.8) {P4$_r$};

    \draw[->] (0.8,0) -- (2.45,-0.6);
    \draw[->] (0.8,0) -- (2.45,-1.2);
    \draw[->] (0.8,0) -- (2.45,-1.8);
    \draw[->] (0.8,-0.6) -- (2.45,0);
    \draw[->] (0.8,-0.6) -- (2.45,-1.2);
    \draw[->] (0.8,-0.6) -- (2.45,-1.8);
    \draw[->] (0.8,-1.2) -- (2.45,0);
    \draw[->] (0.8,-1.2) -- (2.45,-0.6);
    \draw[->] (0.8,-1.2) -- (2.45,-1.8);
    \draw[->, red] (0.8,-1.8) -- (2.45,0);
    \draw[->, red] (0.8,-1.8) -- (2.45,-0.6);
    \draw[->, red] (0.8,-1.8) -- (2.45,-1.2);

    \node[align=center, draw, fill=white, rounded corners=1pt, inner sep=2pt] at (3.1,0) {Vote P2$_r$};
    \node[align=center, draw, fill=white, rounded corners=1pt, inner sep=2pt] at (3.1,-0.6) {Vote P2$_r$};
    \node[align=center, draw, fill=white, rounded corners=1pt, inner sep=2pt] at (3.1,-1.2) {Vote P2$_r$};
    \node[align=center, draw=red, text=red, fill=white, rounded corners=1pt, inner sep=2pt] at (3.1,-1.8) {Vote P4$_r$};

    \draw[->] (3.75,0) -- (5.4,-0.6);
    \draw[->] (3.75,0) -- (5.4,-1.2);
    \draw[->] (3.75,0) -- (5.4,-1.8);
    \draw[->] (3.75,-0.6) -- (5.4,0);
    \draw[->] (3.75,-0.6) -- (5.4,-1.2);
    \draw[->] (3.75,-0.6) -- (5.4,-1.8);
    \draw[->] (3.75,-1.2) -- (5.4,0);
    \draw[->] (3.75,-1.2) -- (5.4,-0.6);
    \draw[->] (3.75,-1.2) -- (5.4,-1.8);
    \draw[->, red] (3.75,-1.8) -- (5.4,0);
    \draw[->, red] (3.75,-1.8) -- (5.4,-0.6);
    \draw[->, red] (3.75,-1.8) -- (5.4,-1.2);

    \node[align=center, draw, fill=white, rounded corners=1pt, inner sep=2pt] at (6.04,0) {PoR-P2$_r$};
    \node[align=center, draw, fill=white, rounded corners=1pt, inner sep=2pt] at (6.04,-0.6) {PoR-P2$_r$};
    \node[align=center, draw, fill=white, rounded corners=1pt, inner sep=2pt] at (6.04,-1.2) {PoR-P2$_r$};
    \node[align=center, draw, fill=white, rounded corners=1pt, inner sep=2pt] at (6.04,-1.8) {PoR-P2$_r$};

    \draw[<->, blue, thick] (0.15,0.3) -- (3.1,0.3) node[midway, above, text height=1.5ex, text depth=0.25ex] {\small Proposal Exchange};
    \draw[<->, blue, thick] (3.1,0.3) -- (6.7,0.3) node[midway, above, text height=1.5ex, text depth=0.25ex] {\small Election and Certification};
    \draw[<-, blue, thick] (6.7,0.3) -- (8.3,0.3) node[midway, above, text height=1.5ex, text depth=0.25ex] {\small Next Round};

    \node[align=center, draw, fill=white, rounded corners=1pt, inner sep=2pt] at (7.2,0) {P1$_{r+1}$};
    \node[align=center, draw, fill=white, rounded corners=1pt, inner sep=2pt] at (7.2,-0.6) {P2$_{r+1}$};
    \node[align=center, draw, fill=white, rounded corners=1pt, inner sep=2pt] at (7.2,-1.2) {P3$_{r+1}$};
    \node[align=center, draw=red, text=red, fill=white, rounded corners=1pt, inner sep=2pt] at (7.2,-1.8) {P4$_{r+1}$};

    \node[align=center, fill=white] at (8.1,0) {$\dots$};
    \node[align=center, fill=white] at (8.1,-0.6) {$\dots$};
    \node[align=center, fill=white] at (8.1,-1.2) {$\dots$};
    \node[align=center, fill=white] at (8.1,-1.8) {$\dots$};

    \end{tikzpicture}
    \vspace{-1mm}
    \caption{A round of \sysname{}.} \label{fig:slow_path}
    \vspace{-5mm}

\end{figure}

\subsection{Partitionable Chain Construction} 
\label{subsec:chain_construction} \label{subsubsec:chain_growth}
\sysname{} proceeds in a sequence of rounds (or views). 
Each round consists of two phases: {\em proposal exchange} and {\em election and certification} (refer to Figure~\ref{fig:slow_path}). 
A replica enters a new round by following the pacemaker's round advancement rules (\S\ref{ss:tools-partial-progress}).

%\new{Each replica has a local round timeout parameter $\delta_i$, which can be dynamically adjusted by the pacemaker (\Cref{ssec:pacemaker}).
%}Under normal execution, the nominal round length is  $3\delta_i$: a $2\delta_i$ proposal-exchange window followed by a  $1\delta_i$ election window.
%A replica may also \textit{fast-forward} into a higher round upon receiving a valid higher-round proposal.
%In that case, it immediately joins the higher round and executes only the remaining local waiting needed before entering the election phase.

\begin{figure}[t]
\vspace{-3mm}
\centering
\begin{tikzpicture}[
    proposal/.style = {
      rectangle, draw=black, rounded corners=2pt,
      minimum width=1.3cm, minimum height=0.7cm,
      align=center, font=\scriptsize
    },
    por/.style  = {proposal, fill=blue!15},
    poa/.style  = {proposal, fill=green!15},
    rc/.style   = {proposal, fill=orange!15, dashed},
    record/.style = {
      draw, rounded corners=4pt, fill=none,
      thick,
      minimum width=8.1cm, minimum height=1.55cm
    },
    field/.style = {
      draw, rounded corners=2pt,
      minimum width=1.4cm, minimum height=0.7cm,
      align=center, font=\footnotesize
    }
  ]

  % ---------------- Outer Container ----------------
  \node[record] (rec) at (0.4, 2.0) {};

  % ---------------- Proposal Structure (Top) ----------------
  \node[font=\small\bfseries] (struct_label) at (0.4, 2.42) {Structure of Proposal $P_r$};
  \node[field] (round) at (-2.75, 1.75) {Round\\Number};
  \node[field] (parent) [right=1.3mm of round] {Transactions};
  \node[field] (justify) [right=1.3mm of parent] {enterCert};
  \node[field] (txdata) [right=1.3mm of justify] {Hash of\\Parent};
  \node[field] (metadata) [right=1.3mm of txdata] {Highest\\PoA \& PoR};
  % ---------------- Local Log (Bottom) ----------------
  \node[font=\small\bfseries] (log_label) at (0.4, 0.45) {Local Log (Chain of Certificates)};

  \node[por] (p1) at (-2.8, -0.2) {Proposal 1\\ \textbf{PoR}};
  \node[rc]  (p2) [right=1.5mm of p1] {Round 2\\ \textbf{RC}};
  \node[poa] (p3) [right=1.5mm of p2] {Proposal 3\\ \textbf{PoA}};
  \node[por] (p4) [right=1.5mm of p3] {Proposal 4\\ \textbf{PoR}};
  \node[por] (p5) [right=1.5mm of p4] {Proposal 5\\ \textbf{PoR}};
  \node[]    (dots) [right=1mm of p5] {...};

  % Visual linking for back-filling intuition
  \draw[->, >=stealth, thin, gray] (p5.west) -- (p4.east);
  \draw[->, >=stealth, thin, gray] (p4.west) -- (p3.east);
  \draw[->, >=stealth, thin, gray] (p3.west) -- (p2.east);
  \draw[->, >=stealth, thin, gray] (p2.west) -- (p1.east);

  \begin{pgfonlayer}{background}
    \node[
      draw, dotted,
      rounded corners=4pt,
      fill=gray!5,
      inner sep=2mm,
      fit=(p1)(p2)(p3)(p4)(p5)(dots)(log_label)
    ] (locallog_box) {};
  \end{pgfonlayer}

\end{tikzpicture}
\vspace{-3mm}
\caption{\sysname's Proposal and Local Log Chain.}
\label{fig:log_chain}
\vspace{-2mm}

\end{figure}

%\sysname constructs the chain in a partition-aware, round-based manner.
%In each round $r$, a replica disseminates a proposal
%\(
%\mathcal{P} \triangleq \langle r,\; \text{txs},\; parent,\; enterCert,\; 
%\widehat{PoR},\; \widehat{PoA} \rangle,
%\)
%where $r$ is the round number, $\text{txs}$ is a batch of client transactions, $parent$ is the hash of the parent proposal, $enterCert$ is the certificate that justifies entry into round $r$, and  $\widehat{PoR}$ / $\widehat{PoA}$ denote the highest known \PoR{} and the highest extending \PoA{}, respectively.
%Intuitively, $enterCert$ certifies that the sender is legally in round $r$, while  $\widehat{PoR}$ and $\widehat{PoA}$ summarize the strongest globally reliable and partition-local progress anchors known to the sender.
%Each replica maintains a local log $\mathcal{L}$ consisting of proposals and their associated certificates as illustrated in Fig.~\ref{fig:log_chain}.

\paragraph{Phase 1: Proposal Exchange.}
At the start of round $r$, each replica aggregates a set of pending client transactions into a proposal. 
Each proposal {\em extends} the last committed proposal;
each replica maintains a local log $\mathcal{L}$ of committed proposals (Fig.~\ref{fig:log_chain}) and 
sets the hash of the proposal at the tail as $parent$ and its corresponding certificate as the highest \PoR{}.
Additionally, the proposal includes the highest \PoA{} (if any) and a certificate, which proves that the sender is in round $r$. 

This exchange takes place during an {\em exchange window}, in which replicas collect round-$r$ proposals from one another. 
\new{
As in other partially synchronous protocols, \sysname{} assumes that each replica $i$ maintains a local timeout 
parameter $\delta_i$ (with $\delta_i \le \Delta$) that delimits the phases within each round. 
However, in \sysname the value of $\Delta$ is unknown a priori, so we make the value of $\delta_i$ able to calibrate dynamically over time (\Cref{subsubsec:background-resync}). 
Within a connected component, we assume an exchange window of size $2\delta_i$, 
which should suffice for proposals from honest replicas to propagate under synchrony.
}

%% TODO
%If a replica receives a valid proposal for a higher round $r' > r$, it may immediately adopt round $r'$ instead of waiting for its current round to finish.
%This is safe because the received proposal carries a valid \textit{enterCert}, which already justifies entry into round $r'$.
%The lagging replica then broadcasts its own round-$r'$ proposal and joins the remaining execution of that round.

\vspace{1.5mm}
\iheading{Phase 2: Election and Certification.} 
Next, from all the valid proposals a replica receives, 
it deterministically selects a \textit{Strongest Proposal} and broadcasts its \textsc{Vote}.
% \Dakai{Shaokang, please move section 4 here.}
%
Concretely, each proposal $\mathcal{P}$ is ranked by its latest certificates (\PoR{} and \PoA{}),
\[
\mathcal{S}(\mathcal{P}) \triangleq
\Big(
Round( \mathcal{P}.\widehat{PoR})~,~
Round( \mathcal{P}.\widehat{PoA})
 \Big),
\]
where  $\mathcal{P}.\widehat{PoR}$ and $\mathcal{P}.\widehat{PoA}$ denote the highest-round \PoR{} and \PoA{} referenced by $\mathcal{P}$ and $\mathcal{S}(\mathcal{P})$ is the strongest proposal.
Replicas use the following lexicographic ordering to compare proposals:
\begin{enumerate}[nosep,wide]
   \item \textit{Highest PoR:} A proposal whose log contains a \PoR{} from a higher round is always preferred, as it represents a more recent globally reliable certificate.
   \item \textit{Highest PoA:} If two proposals have the same highest \PoR{} round, the one with a higher \PoA{} round is preferred, as it reflects more recent partial progress. 
\end{enumerate}

If multiple proposals remain tied in round $r$, replicas apply a deterministic common-coin tie-breaker.
To support this, every proposal carries the proposer's threshold signature share for round $r$
\new{(refer to \Cref{subsec:model})}.
Let $S_r$ be any set of $f{+}1$ valid threshold signature shares from distinct proposals in round $r$, and aggregating any valid set of $f{+}1$ shares yields the same round coin $C_r$.
\begin{align*}
    % \mathcal{C}_r &= \text{ThresholdAggregate}(\sigma_1, \dots, \sigma_{f{+}1}), \\
     \mathcal{C}_r &= \textsc{ThresholdAggregate}(S_r) \\
    \mathrm{Score}(\mathcal{P}) &= H(\mathcal{C}_r \,\|\, \mathcal{P}.\textsf{sender}).
\end{align*}
The proposal with the highest $\mathrm{Score}(\cdot)$ is designated as the Strongest Proposal of round $r$, 
which we denote as $SP_r$.
Each replica then broadcasts a \textsc{Vote} for $SP_r$.
If a proposal gathers $Q = 2f{+}1$ votes, a \PoR{} is formed. 
If it times out before forming a \PoR{} but gathers $Q_{\textit{weak}} = f{+}1$ votes, a \PoA{} is formed.

%If no proposal gathers even a weak quorum before the timeout, replicas initiate round advancement via \textsc{WishNewRound}, from which a \textit{Round Certificate (RC)} is formed if $f{+}1$ \textsc{WishNewRound} messages are collected.

\begin{figure}[t]
\vspace{-2mm}
\begin{myprotocol}
\TITLE{Data Structure}{replica $p_i$}

\SPACE\textbf{Data structures \& State}
\SPACE\quad \textbf{Proposal Msg:} $P \triangleq \langle r,\; \text{txs},\; \text{parent},\; \text{enterCert},\; \widehat{PoR},\; \widehat{PoA} \rangle$
\SPACE\quad \textbf{Local State:} $r_i$ (round), $\mathcal{L}_i$ (log),  $\widehat{PoR}_i$, $\widehat{PoA}_i$, $lock_i$
\SPACE\quad \textbf{Buffer State:} $\mathcal{P}_i[r]$ (proposals), $\textit{voteSet}_i$ (votes), $\textit{Voted}_i$ (flag)
\SPACE\quad \textbf{Local Timers:} $\delta_i,\;\textit{ExchangeTimer},\; \textit{ElectionTimer}$
\SPACE

\TITLE{Helper Functions}{replica $p_i$}

\FUNCTION{\textsc{ValidEnterCert}}{$cert,\; r$}
\STATE \textbf{return} $cert$ is well-formed and one of:
\STATE\quad\quad\quad (i) $cert = \textit{genesis}$ and $r=1$, or
\STATE\quad\quad\quad (ii) $cert$ is \PoR{} with $\text{round}(cert)=r-1$, or
\STATE\quad\quad\quad (iii) $cert$ is \RC{} with $\text{round}(cert)=r-1$
\ENDFUNCTION
\SPACE

\FUNCTION{\textsc{SafeToVote}}{$P$}
\STATE \textbf{return} $P$ is valid and either
\STATE\quad\quad\quad (i) $P$ extends $lock_i$, or
\STATE\quad\quad\quad (ii) $P$ carries a \PoR{} from a strictly higher round
\ENDFUNCTION
\SPACE

\FUNCTION{\textsc{SelectStrongest}}{$\mathcal{P}$}
\STATE \textbf{return} the proposal in $\mathcal{P}$ with highest priority \hfill \COMMENT{\Cref{subsubsec:chain_growth}}
\ENDFUNCTION
\SPACE

\EVENT{\textsc{EnterRound}$(r,\; cert,\; t_x)$}
\STATE \textbf{if} $r \le r_i$ or \textsc{ValidEnterCert}$(cert,\; r)$ = False \textbf{then} return
\STATE Cancel $\textit{ExchangeTimer}[r_i]$;\; Cancel $\textit{ElectionTimer}[r_i]$
\STATE $r_i \gets r$;\; $\textit{Voted}_i[r_i] \gets$ False
\STATE \textbf{if} $\mathcal{P}_i[r_i]$ is uninitialized \textbf{then} $\mathcal{P}_i[r_i] \gets \emptyset$
\STATE \textbf{start} $\textit{ExchangeTimer}[r_i]$ \textbf{with timeout} $t_x$
\STATE $parent \gets \text{hash}(\textsc{TailOfLog}(\mathcal{L}_i))$
\STATE $P^\star \gets \textsc{NewProposal}(r_i,\; \textsc{txs},\; parent,\; cert,\; \widehat{PoR}_i,\; \widehat{PoA}_i)$
\SPACE \COMMENT{$\widehat{PoR}_i,\widehat{PoA}_i$ carry the highest known PoR and PoA}
\STATE Broadcast($\langle \textsc{Propose}(P^\star) \rangle$)
\STATE $\mathcal{P}_i[r_i] \gets \mathcal{P}_i[r_i] \cup \{P^\star\}$ \hfill \COMMENT{self-proposal}
\ENDEVENT
\SPACE

\FUNCTION{\textsc{UpdateHiCert}}{$cert$}
\STATE \textbf{if} $cert$ is \PoR{} \textbf{and} $\text{round}(cert) > \text{round}(\widehat{PoR}_i)$ \textbf{then} $\widehat{PoR}_i \gets cert$
\STATE \textbf{if} $cert$ is \PoA{} \textbf{and} $\text{round}(cert) > \text{round}(\widehat{PoA}_i)$ \textbf{then} $\widehat{PoA}_i \gets cert$
\ENDFUNCTION
\SPACE

\FUNCTION{\textsc{MaybeAdvanceLockAndCommit}}{$lock_i,\; cert$}
\STATE \COMMENT{Two-PoR Rule---lock on the 1st PoR, commit at the 2nd PoR}
\STATE \textbf{if} $cert$ and its ancestors form the required PoR chain \textbf{then}
\STATE\quad update $lock_i$ to the latest safely locked proposal
\STATE\quad commit older proposals according to the above rule
\ENDFUNCTION

\end{myprotocol}
\vspace{-2.5mm}
\caption{Utility and Helper Functions for \sysname} \label{fig:utility_helper}
\vspace{-3mm}
\end{figure}

\begin{figure}[t!]
\vspace{-2mm}
\begin{myprotocol}
\TITLE{Round Logic}{replica $p_i$}

\SPACE\textbf{Initialization}
\SPACE\quad initialize  $lock_i,\; \widehat{PoR}_i,\; \widehat{PoA}_i$ to \textit{genesis};\; $r_i \gets 0$
\SPACE\quad Call \textsc{EnterRound}$(1,\; \textit{genesis},\; 2\delta_i)$
\SPACE

\EVENT{Received valid $\langle \textsc{Propose}(P_r) \rangle$ \textbf{from} $p_j$}
\STATE \textbf{if} $P_r.\text{round} < r_i$ \textbf{or} \textsc{ValidEnterCert}$(P_r.enterCert,$ $P_r.\text{round})$ = False \textbf{then} return \hfill \COMMENT{Outdated or invalid}
\STATE \textsc{UpdateHiCert} $(P_r.\widehat{PoR})$;\; \textsc{UpdateHiCert}$(P_r.\widehat{PoA})$
\STATE \textsc{MaybeAdvanceLockAndCommit}$(lock_i,\; P_r.\widehat{PoR})$
\SPACE \COMMENT{Update local safety state before processing this proposal}
\STATE \textbf{if} $P_r.\text{round} > r_i$ \textbf{then}
\SPACE\quad \COMMENT{Fast-forward to an existing higher round}
\STATE\quad Call \textsc{pacemaker.ObserveHigherRound}$(P_r)$
\STATE \textbf{if} \textsc{SafeToVote}$(P_r)$ \textbf{then}
\STATE\quad $\mathcal{P}_i[P_r.\text{round}] \gets \mathcal{P}_i[P_r.\text{round}] \cup \{P_r\}$
\ENDEVENT
\SPACE

\EVENT{$\textit{ExchangeTimer}[r]$ \textbf{timeout} \COMMENT{Exchange phase ends}}
\STATE \textbf{if} $r \ne r_i$ \textbf{or} $\textit{Voted}_i[r]$ \textbf{then} return
\STATE $\mathcal{P}^{safe}_i[r] \gets \{ P \in \mathcal{P}_i[r] \mid \textsc{SafeToVote}(P) \}$
\STATE \textbf{if} $\mathcal{P}^{safe}_i[r] = \emptyset$ \textbf{then}
\STATE\quad \textsc{ExitRound}$(r,\; \perp)$;\; return
\STATE $P^{sp} \gets \textsc{SelectStrongest}(\mathcal{P}^{safe}_i[r])$
\STATE $\textit{voteSet}_i[r,\; \text{hash}(P^{sp})] \gets \textit{voteSet}_i[r,\; \text{hash}(P^{sp})] \cup \{p_i\}$
\STATE Broadcast($\langle \textsc{Vote}(r,\; \text{hash}(P^{sp})) \rangle$)
\STATE $\textit{Voted}_i[r] \gets$ True
\STATE \textbf{start} $\textit{ElectionTimer}[r]$ \textbf{with timeout} $\delta_i$
\ENDEVENT
\SPACE

\EVENT{Received valid $\langle \textsc{Vote}(r,\; h) \rangle$ \textbf{from} $p_j$ \COMMENT{Collect votes}}
\STATE \textbf{if} $r \ne r_i$ \textbf{then} return
\STATE $\textit{voteSet}_i[r,\; h] \gets \textit{voteSet}_i[r,\; h] \cup \{p_j\}$
\STATE $c \gets |\textit{voteSet}_i[r,\; h]|$
\STATE \textbf{if} $c \ge Q$ \textbf{then} \hfill \COMMENT{strong quorum $Q=2f{+}1$}
\STATE\quad $\langle \text{PoR},\; h \rangle \gets \textsc{BuildCert}(r,\; h,\; c,\; \text{PoR})$
\STATE\quad append $\langle \text{PoR},\; h \rangle$ to $\mathcal{L}_i$
\STATE\quad \textsc{UpdateHiCert}$(\langle \text{PoR},\; h \rangle)$
\STATE\quad \textsc{MaybeAdvanceLockAndCommit}$(lock_i,\; \widehat{PoR}_i)$
\STATE\quad \textsc{ExitRound}$(r,\; \langle \text{PoR},\; h \rangle)$
\ENDEVENT
\SPACE

\EVENT{$\textit{ElectionTimer}[r]$ \textbf{timeout} \COMMENT{Election phase ends}}
\STATE \textbf{if} $r \ne r_i$ \textbf{then} return
\STATE $h^\star \gets \arg\max_h |\textit{voteSet}_i[r,\; h]|$
\STATE $c^\star \gets |\textit{voteSet}_i[r,\; h^\star]|$
\STATE \textbf{if} $c^\star \ge Q_{weak}$ \textbf{then} \hfill \COMMENT{weak quorum $Q_{weak}=f{+}1$}
\STATE\quad $\langle \text{PoA},\; h^\star \rangle \gets \textsc{BuildCert}(r,\; h^\star,\; c^\star,\; \text{PoA})$
\STATE\quad append $\langle \text{PoA},\; h^\star \rangle$ to $\mathcal{L}_i$
\STATE\quad \textsc{UpdateHiCert}$(\langle \text{PoA},\; h^\star \rangle)$
\SPACE \COMMENT{PoA preserves progress, but does not justify next-round entry}
% \STATE\quad \textsc{ExitRound}$(r,\; \langle \text{PoA},\; h^\star \rangle)$
% \STATE \textbf{else}
\STATE \textsc{ExitRound}$(r,\; \perp)$
\ENDEVENT
\SPACE

\EVENT{\textsc{ExitRound}$(r,\; cert)$}
\STATE \textbf{if} $r \ne r_i$ \textbf{then} return
\STATE Cancel $\textit{ExchangeTimer}[r]$;\; Cancel $\textit{ElectionTimer}[r]$
\STATE Call \textsc{pacemaker.CompleteRound}$(r,\; cert)$
\ENDEVENT

\end{myprotocol}
\vspace{-2.5mm}
\caption{Pseudocode of Round Logic in \sysname}
\label{fig:round_logic}
\vspace{-3mm}
\end{figure}

\subsubsection{Locking and Commitment} 
\label{sec:commitment}

Replicas independently interpret their local logs and finalize proposals by applying a deterministic locking and commit rule over PoRs.

\vspace{1.5mm}
\iheading{Locking Rule.}
Each replica maintains a local variable $lock_i$ representing the highest PoR-certified proposal it has observed.
A replica votes only for proposals that extend $lock_i$, unless it observes a PoR from a strictly higher round.

\vspace{1.5mm}
\iheading{Commitment Rule.}
A proposal $b_r$ is committed once it forms a two-PoR chain.
Formally, if $b_r$ and its immediate successor $b_{r+1}$ both obtain PoRs (i.e., $\PoR{}_r$ and $\PoR{}_{r+1}$), and $b_{r+1}$ extends $b_r$, then $b_r$ and all its ancestors become committed (Figure~\ref{fig:commit}).

\vspace{1.5mm}
\iheading{Safety Intuition.}
Each \PoR{} is formed by a strong quorum of size $2f{+}1$.
Any two such quorums intersect in at least $f{+}1$ replicas, including at least one honest replica.
Once $\PoR{}_r$ and $\PoR{}_{r+1}$ are formed for a chain extending $b_r$, at least $f{+}1$ honest replicas participating in $\PoR{}_{r+1}$ become locked on the proposal containing $b_r$.
Due to quorum intersection and the single-vote rule, no conflicting proposal can subsequently gather a strong quorum in round $r{+}1$ or beyond.
Hence, two conflicting proposals cannot both satisfy the two-PoR commit condition, ensuring the safety of consensus.

\subsection{Round Synchronization}
\label{ssec:pacemaker}
% \Dakai{TODO: Add the intuitions behind the pacemaker design.}

To form either a \PoR{} or a \PoA{}, replicas must remain in the same round long enough to exchange proposals and votes. 
\new{
%Thus, a pacemaker in \sysname must serve two purposes:
%(i) keep replicas working on the same logical round, and
%(ii) ensure that replicas have sufficient overlap in that round to exchange messages and form those certificates.
Classical pacemakers often place this synchronization logic directly on the critical consensus path and 
require a strong quorum of replicas for synchronization. 
Under network partitions, such designs stall.
}

\new{
%Unlike classical pacemakers that repeatedly rely on strong-quorum synchronization, 
%\sysname advances rounds with certificates that justify that the current round has already terminated:
\sysname{} instead deploys a \emph{decoupled pacemaker}: 
round advancement remains certificate-driven on the critical path, 
while timeouts are calibrated off the critical path through a background synchronization service. 
This design simultaneously supports: 
(i) partial progress within weak connected components of at least $f{+}1$ replicas, and 
(ii) fast catch-up and convergence once $2f{+}1$ replicas become connected.
%and (iii) adaptive timeout calibration under unknown and time-varying
%network delay
}

%The pacemaker must precisely define when a replica is allowed to advance.
%\sysname{} answers this by restricting round advancement to two certificate-backed conditions.

\begin{figure}[t!]
    \vspace{-3mm}
    \centering
    \begin{tikzpicture}[
        proposal/.style = {
          rectangle, draw=black, rounded corners=2pt,
          minimum width=1cm, minimum height=0.5cm,
          align=center, font=\footnotesize
        },
        por/.style  = {proposal, fill=blue!20},
        poa/.style  = {proposal, fill=green!20},
        pop/.style  = {proposal, fill=gray!10},
        label/.style = {draw, fill=yellow!20, rounded corners=1pt, font=\scriptsize, align=center},
        label_skip/.style = {draw, fill=red!20, rounded corners=1pt, font=\scriptsize, align=center}
      ]

      \node[por] (p1) {Proposal 1\\PoR};
      \node[pop] (p2) [right=0.5mm of p1] {Round 2\\RC};
      \node[poa] (p3) [right=0.5mm of p2] {Proposal 3\\PoA};
      \node[poa] (p4) [right=0.5mm of p3] {Proposal 4\\PoA};
      \node[por] (p5) [right=0.5mm of p4] {Proposal 5\\PoR};

      \node[label_skip] (l1) [above=2.5mm of p1] {Not ready\\to commit};
      \draw[-{Stealth[scale=0.7]}, thick] (l1.south) -- (p1.north);
      \node[label_skip] (l2) [above=2.5mm of p2] {Not ready\\to commit};
      \draw[-{Stealth[scale=0.7]}, thick] (l2.south) -- (p2.north);
      \node[label_skip] (l3) [above=2.5mm of p3] {Not ready\\to commit};
      \draw[-{Stealth[scale=0.7]}, thick] (l3.south) -- (p3.north);

      \node[label_skip] (l4) [above=2.5mm of p4] {Not ready\\to commit};
      \draw[-{Stealth[scale=0.7]}, thick] (l4.south) -- (p4.north);

      \node[label_skip] (l5) [above=2.5mm of p5] {Not ready\\to commit};
      \draw[-{Stealth[scale=0.7]}, thick] (l5.south) -- (p5.north);

       \draw[thick, decorate, decoration={snake,amplitude=.4mm,segment length=2mm}, ->, line width=1pt] ($(p3.south)-(0,1ex)$) -- ++(0,-4mm) node[midway, right=0.5em, font=\small, align=left] {A new PoR generated in round 6};

      \node[por] (p21) [below=15mm of p1] {Proposal 1\\PoR};
      \node[pop] (p22) [right=0.5mm of p21] {Round 2\\RC};
      \node[poa] (p23) [right=0.5mm of p22] {Proposal 3\\PoA};
      \node[poa] (p24) [right=0.5mm of p23] {Proposal 4\\PoA};
      \node[por] (p25) [right=0.5mm of p24] {Proposal 5\\PoR};
      \node[por] (p26) [right=0.5mm of p25] {Proposal 6\\PoR};

      \node[label] (l21) [above=2.5mm of p21] {Ready to\\commit};
      \draw[-{Stealth[scale=0.7]}, thick] (l21.south) -- (p21.north);
      \node[label, fill=gray!10] (l22) [above=2.5mm of p22] {Skip\\this round};
      \draw[-{Stealth[scale=0.7]}, thick] (l22.south) -- (p22.north);
      \node[label] (l23) [above=2.5mm of p23] {Ready to\\commit};
      \draw[-{Stealth[scale=0.7]}, thick] (l23.south) -- (p23.north);

      \node[label] (l24) [above=2.5mm of p24] {Ready to\\commit};
      \draw[-{Stealth[scale=0.7]}, thick] (l24.south) -- (p24.north);

      \node[label] (l25) [above=2.5mm of p25] {Ready to\\commit};
      \draw[-{Stealth[scale=0.7]}, thick] (l25.south) -- (p25.north);

      \node[label_skip] (l26) [above=2.5mm of p26] {Not ready\\to commit};
      \draw[-{Stealth[scale=0.7]}, thick] (l26.south) -- (p26.north);
    \end{tikzpicture}

\vspace{-3mm}
\caption{An Example of Two-PoR Commit Rule}
\label{fig:commit}
\vspace{-3mm}
\end{figure}

\subsubsection{Round Advancement Conditions.} 
\label{subsubsec:round_advancement}
After the election and certification phase, replicas participate in the round advancement phase.
A replica in round $r$ advances to round $r{+}1$ in exactly one of the following two cases:
\begin{itemize}[nosep,wide]
    \item \textbf{PoR-based.}
    If a proposal in round $r$ gathers a strong quorum of votes and forms a $\PoR{}_r$, replicas advance to round $r{+}1$.

    \item \textbf{RC-based advancement.}
    If no proposal forms a $\PoR{}_r$ before the election timer expires, 
    the replica broadcasts a \textsc{WishNewRound}(r) message. 
    Once it collects $f{+}1$ such messages, it forms a \textit{Round Certificate} ($\RC{}_r$) and advances to round $r{+}1$.
\end{itemize}

\new{Crucially, these rules control logical round advancement in the consensus path: replicas continue progressing through proposal exchange, voting, and \PoR{}/\RC{} formation without waiting for a separate timeout synchronization procedure to complete.
}

\subsubsection{Quick Catch-Up and Fast Convergence.}
Once a replica forms a \PoR{} or \RC{} certificate, 
it broadcasts the certificate so that any lagging replica can immediately jump to the {\em justified} round. 
Concretely, if a replica is currently in round $r'$ and receives a valid \PoR{} or \RC{} formed in round $r > r'$, 
it advances to round $r{+}1$.

%A $\PoR{}_r$ certifies that a strong quorum has completed round $r$, while an $\RC{}_r$ certifies that at least $f{+}1$ replicas have timed out in round $r$ and requested advancement.
%In either case, the certificate provides justification that round $r$ has terminated and that round $r{+}1$ should begin.

Since round numbers increase monotonically and replicas only accept certificates for strictly higher rounds, certificate dissemination induces convergence toward the highest justified round.
If a lagging replica receives $\PoR{}_r$, it learns that round $r$ has successfully produced a strong certificate and advances accordingly.
If it receives $\RC{}_r$, it learns that an honest replica in round $r$ could not form a \PoR{} and synchronizes by advancing to round $r{+}1$.

During synchronous periods after GST, such certificates propagate within $\Delta$, bounding round divergence among correct replicas by at most one message delay.
This also explains why the \emph{Proposal Exchange} phase spans  $2\delta_i$: one $\delta_i$ for round synchronization and one  $\delta_i$ for proposal dissemination.
\new{In particular, once a strong quorum becomes connected again, certificate dissemination rapidly pulls lagging replicas toward the highest justified round, without requiring an explicit synchronization barrier on the critical path.
}
%\emph{Note:} Although \PoA{}s enable partial progress during network partitions,
%they alone \emph{cannot} justify round advancement.
%Allowing weak-quorum certificates to trigger advancement would let Byzantine replicas force premature round transitions, 
%preventing proposals from remaining in the same round long enough to form a \PoR{}.
%This would compromise eventual strong-quorum progress and thus undermine liveness.

\subsubsection{Background Timeout Calibration} \label{subsubsec:background-resync}

\new{
\sysname{} operates under a partially synchronous network model, in which message delays are bounded by some $\Delta$
that holds only during stabalized periods (after some GST) and whose value is not known a priori. 
Because the actual delay can differ across stable periods---particularly after long partitions or asymmetric delays---replicas' local timeouts may become misaligned. 
To address this, existing pacemakers perform timeout calibration, but they do so on the critical path.
}

\begin{figure}[tp]
\vspace{-2mm}
\begin{myprotocol}
\TITLE{Pacemaker$_i$}{Round Advancement \& Timeout Calibration}
\SPACE\textbf{Local state}: 
 \new{\SPACE\quad $r_i,\;K$ \hfill \COMMENT{current round and timeout calibration interval}
\SPACE\quad $\delta_i,\;\nu_i$ \hfill \COMMENT{local timeout baseline and current sync view}
\SPACE\quad $\textit{Calibrating}_i$ \hfill \COMMENT{whether a calibration attempt is in progress}
\SPACE\quad $ReadySet[\nu],\;CertSet[\nu]$ \hfill \COMMENT{valid $\langle SYNC \rangle$ messages}
\SPACE
}\FUNCTION{\textsc{CompleteRound}}{$r,\; cert$}
\STATE \textbf{if} $r\;mod\;K=0$ \textbf{then} \hfill \COMMENT{Reach the \textsc{SyncTimeout} boundary.}
\STATE\quad Call \textsc{pacemaker.SyncTimeout}
\STATE \textbf{if} $cert$ is \PoR{} \textbf{then}
\STATE\quad Call \textsc{EnterRound}$(r{+}1,\; cert,\; 2\delta_i)$
\SPACE\quad \COMMENT{A PoR directly justifies entry into the next round}
\STATE \textbf{else}  Broadcast $\langle \textsc{WishNewRound}(r) \rangle$
\new{\SPACE\quad \COMMENT{If no PoR, request for an RC for round advancement}
}\ENDFUNCTION
\SPACE

\EVENT{Received $f{+}1$ $\langle \textsc{WishNewRound}(r) \rangle$ messages}
\STATE $\langle RC_r \rangle \gets \textsc{AggregateSig}(f{+}1\ \langle \textsc{WishNewRound}(r) \rangle)$
\STATE Broadcast($\langle RC_r \rangle$)
\ENDEVENT
\SPACE

\EVENT{Received $\langle RC_r \rangle$}
\STATE Call \textsc{EnterRound}$(r{+}1,\; RC_r,\; 2\delta_i)$
\SPACE \COMMENT{An $RC_r$ safely coordinates entry into round $r{+}1$}
\ENDEVENT
\SPACE

\FUNCTION{\textsc{ObserveHigherRound}}{$P$}
\STATE \textbf{if} $P.r \le r_i$ \textbf{or} \textsc{ValidEnterCert}$(P.enterCert,\; P.r)$ = False \textbf{then} return \hfill \COMMENT{The $enterCert$ certifies the round entry}
\SPACE \COMMENT{Use a shortened timeout since round $r'$ is already in progress}
\STATE Call \textsc{EnterRound}$(P.r,\; P.enterCert,\; \delta_i)$
\ENDFUNCTION

\new{\SPACE
\EVENT{\textsc{SyncTimeout}}
\STATE \textbf{if} $\textit{Calibrating}_i$ = False \textbf{then}
\STATE\quad $\nu_i \gets \nu_i + 1;\;\; \textit{Calibrating}_i \gets$ True
\STATE\quad Broadcast($\langle \mathsf{SYNC\mbox{-}READY},\; \nu_i \rangle_i$)
\ENDEVENT
\SPACE
}

\new{\EVENT{Received valid $\langle \mathsf{SYNC\mbox{-}READY},\; \nu \rangle$ \textbf{from} $p_j$}
\STATE $ReadySet[\nu][j] \gets\ \langle \mathsf{SYNC\mbox{-}READY},\; \nu \rangle_j$
\STATE $V^{hi} \gets \{\nu' \mid \nu' > \nu_i \text{ and } |ReadySet[\nu']| \ge f+1\}$
\STATE \textbf{if} $V^{hi} \neq \emptyset$ \textbf{then}
\STATE\quad $\nu_i \gets \min V^{hi}$; $\textit{Calibrating}_i \gets$ True
\STATE\quad Broadcast($\langle \mathsf{SYNC\mbox{-}READY},\; \nu_i \rangle_i$)
\STATE \textbf{if} $\textit{CertSet}_i[\nu][i] =\, \perp$ \AND $|ReadySet[\nu]| \ge 2f+1$ \textbf{then}
\STATE\quad $\textit{CertSet}_i[\nu][i] \gets\,\langle \mathsf{SYNC\mbox{-}CERT},\; \nu \rangle_i$
\STATE\quad Broadcast($\langle \mathsf{SYNC\mbox{-}CERT},\; \nu \rangle_i$)
\STATE\quad \textbf{start} $SyncTimer[\nu]$ \textbf{with timeout} $\delta_i$
\ENDEVENT
\SPACE
}

\new{\EVENT{Received valid $\langle \mathsf{SYNC\mbox{-}CERT},\; \nu \rangle$ \textbf{from} $p_j$}
\STATE $CertSet[\nu][j] \gets \langle \mathsf{SYNC\mbox{-}CERT},\; \nu \rangle_j$
\STATE \textbf{if} $CertSet[\nu][i]\,\neq\,\perp$ \AND $|CertSet[\nu]| > f+1$ \textbf{then}
\STATE\quad $\textit{Calibrating}_i \gets$ False;  Cancel $SyncTimer[\nu]$
\SPACE\quad \COMMENT{$\delta_i$ is sufficient; decrease if success was fast (\Cref{subsubsec:background-resync})}
\ENDEVENT
\SPACE
}

\new{\EVENT{$SyncTimer[\nu]$ \textbf{timeout}}
\STATE \textbf{if} $|CertSet[\nu]| \le f$ \textbf{then}
\STATE\quad $\delta_i \gets 2\delta_i$; $\nu_i \gets \nu_i + 1$
\STATE\quad Broadcast($\langle \mathsf{SYNC\mbox{-}READY},\; \nu_i \rangle_i$)
\ENDEVENT
}

\end{myprotocol}
\vspace{-2.5mm}
\caption{Pseudocode of the Pacemaker}
\label{fig:pacemaker}
\vspace{-3mm}
\end{figure}

\new{To address this, \sysname runs a periodic background timeout-calibration service, denoted $\mathsf{SyncTimeout}$, which adjusts each replica’s local timeout value ($\delta_i$).
This service is \textit{off the critical path}: it never blocks the chain construction process.
Replicas initiate $\mathsf{SyncTimeout}$ periodically, e.g., once every $K$ rounds ($K$ is a system parameter), provided that no prior calibration attempt is still in progress.
}

\new{Each calibration attempt is identified by a synchronization view number
$\nu$.
The view number is used to distinguish retries and prevent replay of stale synchronization messages.
If a calibration attempt fails, replicas keep retrying in the background using higher synchronization views, while the main protocol remains unchanged and continues running.
}

\vspace{1mm}
\new{\iheading{Phase 1: \texorpdfstring{$\langle \mathsf{SYNC\mbox{-}READY} \rangle$}{SYNC-READY}.}
After every $K$-th round, if replica $p_i$ is not undergoing a timeout-calibration, 
it initiates a calibration attempt in synchronization view $\nu$ by broadcasting
\(
\langle \mathsf{SYNC\mbox{-}READY}, \nu \rangle_i.
\)
Replica $p_i$ keeps periodically retransmitting this message until it receives either 
$2f{+}1$ such messages for view $\nu$ or $f{+}1$ such messages for higher views $\nu'$.
%$\langle \mathsf{SYNC\mbox{-}READY}, \nu \rangle$ messages or $f+1$ $\langle \mathsf{SYNC\mbox{-}READY}, \nu' \rangle$ messages from higher views.
In the latter case, replica $p_i$ abandons view $\nu$ and joins the smallest view $\nu'$ among these 
$f{+}1$ messages by broadcasting
\(
\langle \mathsf{SYNC\mbox{-}READY}, \nu' \rangle_i.
\)
}

\begin{figure}[t!]
    \vspace{-3mm}
    \centering
    \resizebox{\linewidth}{!}{%
    \begin{tikzpicture}[>=stealth, thick]
    \useasboundingbox (-0.5,0.7) rectangle (9.0,-1.9);

    \draw (0,0) -- (8.1,0);
    \node[left] at (0,0) {R1};
    \draw (0,-0.6) -- (8.1,-0.6);
    \node[left] at (0,-0.6) {R2};
    \draw (0,-1.2) -- (8.1,-1.2);
    \node[left] at (0,-1.2) {R3};
    \draw[color=red] (0,-1.8) -- (8.1,-1.8);
    \node[left, color=red] at (0,-1.8) {R4};

    \node[align=center, draw, fill=white, rounded corners=1pt, inner sep=2pt] at (1.5,0) {$\langle SYNC\mbox{-}READY \rangle$};
    \node[align=center, draw, fill=white, rounded corners=1pt, inner sep=2pt] at (1.5,-0.6) {$\langle SYNC\mbox{-}READY \rangle$};
    \node[align=center, draw, fill=white, rounded corners=1pt, inner sep=2pt] at (1.5,-1.2) {$\langle SYNC\mbox{-}READY \rangle$};
    % \node[align=center, draw=red, text=red, fill=white, rounded corners=1pt, inner sep=2pt] at (1,-1.8) {$\langle SYNC\mbox{-}READY \rangle$};

    \draw[<->, blue, thick] (0.1,0.3) -- (4.1,0.3) node[midway, above, text height=1.5ex, text depth=0.25ex] {$\langle SYNC\mbox{-}READY \rangle$};

    \draw[->] (2.7,0) -- (4.0,-0.6);
    \draw[->] (2.7,0) -- (4.0,-1.2);
    \draw[->] (2.7,0) -- (4.0,-1.8);
    \draw[->] (2.7,-0.6) -- (4.0,0);
    \draw[->] (2.7,-0.6) -- (4.0,-1.2);
    \draw[->] (2.7,-0.6) -- (4.0,-1.8);
    \draw[->] (2.7,-1.2) -- (4.0,0);
    \draw[->] (2.7,-1.2) -- (4.0,-0.6);
    \draw[->] (2.7,-1.2) -- (4.0,-1.8);

    \draw[dashed] (4.1, 0.25) -- (4.1, -2.0);

    \draw[<->, blue, thick] (4.1,0.3) -- (7.95,0.3) node[midway, above, text height=1.5ex, text depth=0.25ex] {$\langle SYNC-CERT \rangle$};
    \node[align=center, draw, fill=white, rounded corners=1pt, inner sep=2pt] at (5.3,0) {$\langle SYNC\mbox{-}CERT \rangle$};
    \node[align=center, draw, fill=white, rounded corners=1pt, inner sep=2pt] at (5.3,-0.6) {$\langle SYNC\mbox{-}CERT \rangle$};
    \node[align=center, draw, fill=white, rounded corners=1pt, inner sep=2pt] at (5.3,-1.2) {$\langle SYNC\mbox{-}CERT \rangle$};
    \node[align=center, draw, fill=white, rounded corners=1pt, inner sep=2pt] at (5.3,-1.8) {$\langle SYNC\mbox{-}CERT \rangle$};

    \draw[->] (6.35,0) -- (7.65, -0.6);
    \draw[->] (6.35,0) -- (7.65, -1.2);
    \draw[->] (6.35,0) -- (7.65, -1.8);
    \draw[->] (6.35,-0.6) -- (7.65, 0);
    \draw[->] (6.35,-0.6) -- (7.65, -1.2);
    \draw[->] (6.35,-0.6) -- (7.65, -1.8);
    \draw[->] (6.35,-1.2) -- (7.65, 0);
    \draw[->] (6.35,-1.2) -- (7.65, -0.6);
    \draw[->] (6.35,-1.2) -- (7.65, -1.8);

    \end{tikzpicture}%
    }
    
    \vspace{-2mm}
    \newcaption{Timeout Calibration}
    \label{fig:timeout_calibration}
    \vspace{-2mm}
\end{figure}

\vspace{1mm}
\new{\iheading{Phase 2: \texorpdfstring{$\langle \mathsf{SYNC\mbox{-}CERT} \rangle$}{SYNC-CERT}.}
Once a replica $p_i$ collects $2f{+}1$ valid $\mathsf{SYNC}$ $\mathsf{\mbox{-}READY}$ messages for its synchronization view $\nu$, 
it creates and broadcasts a
\(
\langle \mathsf{SYNC\mbox{-}CERT}, \nu \rangle_i,
\)
message.
After that, $p_i$ waits on a timer of length $\delta_i$ for $\mathsf{SYNC\mbox{-}CERT}$ messages from other replicas.
}

\new{\iheading{Success Condition.}
The calibration attempt in view $\nu$ succeeds at replica $p_i$ if:
$p_i$ formed a $\langle \mathsf{SYNC\mbox{-}CERT}, \nu \rangle_i$, and
within period $\delta_i$, it received $\langle \mathsf{SYNC\mbox{-}CERT}, \nu \rangle$ from at least $f+1$ distinct replicas (i.e., the total number of certificates, including its own, exceeds $f{+}1$).
}

\new{This condition ensures that replica $p_i$'s current timeout $\delta_i$ is long enough to allow at least one honest replica to generate and send back a certificate under the current network condition.
In this case, replica $p_i$ keeps its current timeout baseline unchanged.
}

\new{\iheading{Failure handling.}
We distinguish two failure cases:
}\begin{enumerate}[nosep,wide]
    \item \new{\textbf{No local certificate formed.}
    If replica $p_i$ cannot collect $2f{+}1$ valid $\mathsf{SYNC\mbox{-}READY}$ messages for view $\nu$, it leaves $\delta_i$ unchanged and keeps retransmitting
    $\langle \mathsf{SYNC\mbox{-}READY}, \nu \rangle_i$.
}

    \item \new{\textbf{Insufficient remote certificates received.}
    If replica $p_i$ succeeds in forming $\langle \mathsf{SYNC\mbox{-}CERT}, \nu \rangle_i$ but does not receive this certificate from at least $f{+}1$ distinct replicas within time $\delta_i$, the attempt is treated as a failure.
    In this case, replica $p_i$ doubles its local timeout period, i.e.,  $\delta_i \leftarrow 2\delta_i$, advances to the next synchronization view $\nu \leftarrow \nu + 1$, and restarts the procedure by broadcasting a new $\langle \mathsf{SYNC\mbox{-}READY}, \nu \rangle$.
}\end{enumerate}

%\new{Overall, $\mathsf{SyncTimeout}$ is a calibration mechanism in the decoupled pacemaker:
%logical rounds still advance only through \PoR{} and \RC{}, while timeout calibration runs periodically in the background and never blocks the normal consensus path.
%}

\new{
\paragraph{Timeout Decrease.}
Classical BFT protocols such as PBFT~\cite{pbftj} double the view-change timeout whenever a view change fails to make progress,
but never decrease it. 
As a result, a single transient period of high delay can leave rounds unnecessarily long well after the delay has subsided. 
\sysname{} optionally permits decreasing $\delta_i$ upon a sufficiently fast calibration success.
Concretely, if a calibration attempt succeeds and the elapsed time since replica $p_i$ formed its local $\langle \mathsf{SYNC\mbox{-}CERT}, \nu \rangle_i$ is less than $\delta_i / \alpha$ for a pre-set threshold $\alpha > 1$ (e.g., $\alpha = 4$), $p_i$ halves its local timeout baseline: $\delta_i \leftarrow \max(\delta_i / 2,\; \delta_{\min})$, where $\delta_{\min}$ is a pre-configured lower bound.
The subsequent calibration attempt then verifies whether the reduced timeout is still sufficient; if not, it is doubled back.
This yields a multiplicative-increase/multiplicative-decrease pattern analogous to TCP congestion control~\cite{jacobson1988congestion}, adapted to the BFT setting where safety is never at risk because $\delta_i$ affects only round duration, not the correctness of voting or certification.
}

%\new{\iheading{Retransmission.}
%To prevent the calibration mechanism from stalling when a replica cannot collect $2f{+}1$ $\mathsf{SYNC\mbox{-}READY}$ messages (e.g., during a prolonged partition or view misalignment), replicas periodically retransmit their current $\langle \mathsf{SYNC\mbox{-}READY}, \nu_i \rangle_i$ message until the ongoing calibration attempt concludes.
%This ensures that once connectivity is restored, replicas eventually converge on a common synchronization view and complete the calibration.
%}

\begin{figure}[t!]
\vspace{-2mm}
\begin{myprotocol}
\new{\TITLE{Pacemaker (Timeout Calibration)}{replica $p_i$}
}

\new{\SPACE\textbf{Local State}
\SPACE\quad $\delta_i,\;\nu_i$ \hfill \COMMENT{local timeout baseline and current sync view}
\SPACE\quad $\textit{Calibrating}_i$ \hfill \COMMENT{whether a calibration attempt is in progress}
\SPACE\quad $ReadySet[\nu]$ \hfill \COMMENT{valid $\langle \mathsf{SYNC\mbox{-}READY}, \nu \rangle$ messages}
\SPACE\quad $CertSet[\nu]$ \hfill \COMMENT{valid $\langle \mathsf{SYNC\mbox{-}CERT}, \nu \rangle$ messages}
\SPACE
}

\new{\EVENT{SyncTimeout}
\STATE \textbf{if} $\textit{Calibrating}_i$ = False \textbf{then}
\STATE\quad $\nu_i \gets \nu_i + 1;\;\; \textit{Calibrating}_i \gets$ True
\STATE\quad Broadcast($\langle \mathsf{SYNC\mbox{-}READY},\; \nu_i \rangle_i$)
\ENDEVENT
\SPACE
}

\new{\EVENT{Received valid $\langle \mathsf{SYNC\mbox{-}READY},\; \nu \rangle$ \textbf{from} $p_j$}
\STATE $ReadySet[\nu][j] \gets\ \langle \mathsf{SYNC\mbox{-}READY},\; \nu \rangle_j$
\STATE $V^{hi} \gets \{\nu' \mid \nu' > \nu_i \text{ and } |ReadySet[\nu']| \ge f+1\}$
\STATE \textbf{if} $V^{hi} \neq \emptyset$ \textbf{then}
\STATE\quad $\nu_i \gets \min V^{hi}$; $\textit{Calibrating}_i \gets$ True
\STATE\quad Broadcast($\langle \mathsf{SYNC\mbox{-}READY},\; \nu_i \rangle_i$)
\STATE \textbf{if} $\textit{CertSet}_i[\nu][i] =\, \perp$ \AND $|ReadySet[\nu]| \ge 2f+1$ \textbf{then}
\STATE\quad $\textit{CertSet}_i[\nu][i] \gets\,\langle \mathsf{SYNC\mbox{-}CERT},\; \nu \rangle_i$
\STATE\quad Broadcast($\langle \mathsf{SYNC\mbox{-}CERT},\; \nu \rangle_i$)
\STATE\quad \textbf{start} $SyncTimer[\nu]$ \textbf{with timeout} $\delta_i$
\ENDEVENT
\SPACE
}

\new{\EVENT{Received valid $\langle \mathsf{SYNC\mbox{-}CERT},\; \nu \rangle$ \textbf{from} $p_j$}
\STATE $CertSet[\nu][j] \gets \langle \mathsf{SYNC\mbox{-}CERT},\; \nu \rangle_j$
\STATE \textbf{if} $CertSet[\nu][i]\,\neq\,\perp$ \AND $|CertSet[\nu]| > f+1$ \textbf{then}
\STATE\quad $\textit{Calibrating}_i \gets$ False;  Cancel $SyncTimer[\nu]$
\SPACE\quad \COMMENT{$\delta_i$ is sufficient; decrease if success was fast (\Cref{subsubsec:background-resync})}
\ENDEVENT
\SPACE
}

\new{\EVENT{$SyncTimer[\nu]$ \textbf{timeout}}
\STATE \textbf{if} $|CertSet[\nu]| \le f$ \textbf{then}
\STATE\quad $\delta_i \gets 2\delta_i$; $\nu_i \gets \nu_i + 1$
\STATE\quad Broadcast($\langle \mathsf{SYNC\mbox{-}READY},\; \nu_i \rangle_i$)
\ENDEVENT
}

\end{myprotocol}

\vspace{-2.5mm}
\newcaption{Pseudocode of Background Timeout Calibration}
\label{fig:timeout_calibration_protocol}
\vspace{-3mm}
\end{figure}

\subsection{Merge after Network Recovery}
A \PoA{}-backed branch guarantees data availability, provides ordering evidence, and reduces wasted work after recovery. 
However, it is provisional and does not imply final commitment. 
During a network partition, different connected components may therefore extend the same \PoR{} along divergent \PoA{} chains.
\new{The same reasoning applies to asymmetric partitions in which a Byzantine replica selectively forwards proposals across components: forwarding may create additional provisional \PoA{} branches, which introduces additional computation costs linear in the number of PoAs for verification, but does not force a pairwise merge of all branches or roll back any committed state.}
%\new{Although these \PoA{}-backed branches are provisional, they preserve recoverable availability and ordering evidence, reducing lost work after recovery.}

When network connectivity is restored, replicas resume exchanging proposals in the {\em proposal exchange} phase. 
Each proposal carries its sender's highest known \PoR{} and \PoA{}, 
summarizing the strongest certificates that replica has observed. 
During the election phase, replicas then apply the deterministic proposal-priority rule: 
proposals anchored at a higher-round \PoR{} are always preferred, and ties are broken by the highest-round \PoA{} and 
then the threshold coin. 

Consequently, once proposals from previously disconnected components become visible across the network, 
correct replicas deterministically converge on the same strongest branch and vote for it. 
This allows a new \PoR{} to form on that branch, re-establishing a single, globally agreed \PoR{}.
The remaining \PoA{} branches---those not extended by a subsequent \PoR{}---stop growing and are eventually superseded. 
Because \PoA{} formation guarantees data availability, the proposals on these branches remain retrievable; 
however, only the branch incorporated into a subsequent \PoR{} can be finalized. 

Merging in \sysname{} is therefore implicit and protocol-driven: 
no separate reconciliation procedure is required, as the normal proposal-selection and 
certification process automatically converges the system back to a single chain once connectivity is restored.

%When network connectivity is restored, replicas resume exchanging proposals through the {\em proposal exchange} phase.
%Each proposal carries its highest known \PoR{} and \PoA{}, summarizing the strongest certificates observed by the sender.
%Replicas then apply the deterministic proposal-priority rule during the election phase: proposals anchored at a higher-round \PoR{} are always preferred, and ties are broken using the highest-round \PoA{} and the threshold coin.
%
%As a result, once proposals from previously disconnected components become visible across the network, correct replicas deterministically converge on the same strongest branch and vote for it.
%This enables the formation of a new \PoR{} extending that branch, thereby re-establishing a single globally known \PoR{}.
%%
%Other \PoA{} branches that are not extended by a subsequent \PoR{} naturally stop growing and are eventually superseded.
%Since \PoA{} formation guarantees data availability, the proposals on those branches remain retrievable, but only the branch incorporated into a subsequent \PoR{} can be finalized.
%
%Therefore, merging in \sysname{} is implicit and protocol-driven.
%No separate reconciliation procedure is required: once connectivity is restored, the normal proposal selection and certification process automatically converges the system back to a single chain.
%

\section{ \sysname: Dual-path Design}

 \new{
 Although \sysname{} achieves partial progress under network partitions, it relies on all-to-all proposal exchange and conservative timeouts; 
 as a result, it incurs \emph{quadratic communication complexity} and forgoes \emph{optimistic responsiveness}~\cite{hotstuff}, 
 preventing replicas from proceeding at network speed even when the network is stable. 
 In this section, we present an optimized version of \sysname{} based on a \emph{dual-path design}. 
 In the absence of leader or network failures, 
 the system runs an optimistic \emph{fast path} that assumes a \emph{designated} leader and 
 lets replicas reach consensus with linear communication. 
%Further, the fast path reduces latency through its optimistic responsiveness design. 
 If the \emph{fast path} fails, replicas fall back to the \emph{base path} (\S\ref{sec:overview}).
}

 \subsection{Linear-Complexity Fast Path}
\sysname's fast-path assumes a dedicated leader for each round; 
let $L_r$ be the leader for round $r$ (leaders are selected in a round-robin manner). 
The fast-path undergoes the following steps (Fig.~\ref{fig:fast_path}).

\begin{itemize}[nosep,wide]
    \item \textbf{Fast-path proposal.}
    On entering round $r$ (i.e., on receiving a $\PoR_{r-1}$ or an $\RC_{r-1}$), 
    the leader $L_r$ broadcasts a proposal $b_r$.
    This proposal extends the highest known branch and carries the sender's latest $\widehat{PoR}$ and $\widehat{PoA}$ certificates
    (similar to the base path).
     \item \textbf{Linear vote collection.}
    Upon receiving $b_r$, each replica verifies that it is valid and lock-compatible (cf.~\Cref{sec:commitment}).
    If this is the case, the replica sends its vote share to a designated collector, e.g., the next-round leader $L_{r+1}$.
    %This preserves the same voting rule while replacing all-to-all vote exchange with a linear collection pattern.

    \item \textbf{Pipelined certificate formation.}
    Once $L_{r+1}$ collects a strong quorum of vote shares for $b_r$, it aggregates them into a $\PoR_r$.
    This $\PoR_r$ then serves as the justification to enter round $r{+}1$, allowing $L_{r+1}$ to immediately propose the next block. %and pipeline the next round.
\end{itemize}

In summary, the fast path modifies only the \textit{message-collection pattern} used to form a \PoR{}; 
it does not alter the safety rule, the locking rule, or the certificate semantics. 
As a result, under stable conditions, \sysname{} substantially reduces its communication complexity, from quadratic to linear.
%dependence on the conservative waiting required by the decentralized proposal-exchange phase, and progress can approach the pace of actual network latency in the common case.

\begin{figure}[t]
    \vspace{0mm}
    \begin{tikzpicture}[>=stealth, thick]
    \useasboundingbox (-0.5,0.7) rectangle (8.5,-1.9);

    \draw (0,0) -- (8.2,0);
    \node[left] at (0,0) {\textbf{R1}};
    \draw (0,-0.6) -- (8.2,-0.6);
    \node[left] at (0,-0.6) {R2};
    \draw (0,-1.2) -- (8.2,-1.2);
    \node[left] at (0,-1.2) {R3};
    \draw[color=red] (0,-1.8) -- (8.2,-1.8);
    \node[left, color=red] at (0,-1.8) {R4};

    \node[align=center, draw, fill=white, rounded corners=1pt, inner sep=2pt] at (0.5,0) {P1$_r$};

    \node[align=center, draw, fill=white, rounded corners=1pt, inner sep=2pt] at (3.3,0) {Vote P1$_r$};
    \node[align=center, draw, fill=white, rounded corners=1pt, inner sep=2pt] at (3.3,-0.6) {Vote P1$_r$};
    \node[align=center, draw, fill=white, rounded corners=1pt, inner sep=2pt] at (3.3,-1.2) {Vote P1$_r$};

    \node[align=center, draw, fill=white, rounded corners=1pt, inner sep=2pt] at (6.3,-0.6) {PoR P1$_r$};
    \node[align=center, draw, fill=white, rounded corners=1pt, inner sep=2pt] at (7.5,-0.6) {P2$_{r+1}$};

    \draw[<->, blue, thick] (0.1,0.3) -- (0.9,0.3) node[midway, above] {\small \shortstack{Generate\\Proposal}};
    \draw[<->, blue, thick] (0.9,0.3) -- (2.7,0.3) node[midway, above] {\small \shortstack{Proposal\\Disseminate}};
    \draw[<->, blue, thick] (2.7,0.3) -- (5.8,0.3) node[midway, above] {\small \shortstack{Vote Collection\\by $L_{r+1}$}};
    \draw[<->, blue, thick] (5.8,0.3) -- (7.0,0.3) node[midway, above] {\small \shortstack{PoR\\Aggregation}};
    \draw[<-, blue, thick] (7.0,0.3) -- (8.23,0.3) node[midway, above] {\small \shortstack{Next\\Round}};

    \draw[->] (0.9,0) -- (2.65,-0.6);
    \draw[->] (0.9,0) -- (2.65,-1.2);
    \draw[->] (0.9,0) -- (2.65,-1.8);

    \draw[->] (3.9,0) -- (5.7,-0.6);
    \draw[->] (3.9,-1.2) -- (5.7,-0.6);

    \end{tikzpicture}
    \vspace{-6.5mm}
    \caption{Fast Path Workflow in Good Cases}
    \label{fig:fast_path}
    \vspace{-3mm}
\end{figure}

 \subsection{Fallback to the Base Path}
Each replica maintains a dedicated \textit{Fast-Path Timer} to detect failures. 
For example, if a replica does not receive a valid fast-path proposal from $L_r$ in time, or 
if the designated collector fails to gather a strong quorum of vote shares before the timer expires, 
the replica abandons the fast path and falls back to the \emph{base path} (Fig.~\ref{fig:fallback_base_path}).

The base Path resumes within the same round and follows the steps described in \Cref{subsubsec:chain_growth}.
Concretely, each replica broadcasts its own round-$r$ proposal and participates in the 
{\em proposal exchange} and {\em election and certification} phases.
These proposals include locally known highest $\widehat{PoR}$ and $\widehat{PoA}$ certificates, 
so safety and proposal priority remain unchanged.

On entering the next round, replicas again attempt the fast path. 
The transition between the fast and base paths is therefore seamless: 
the fast path is a performance optimization layered on the base protocol, not a separate consensus protocol.
%Whenever the optimistic assumptions do not hold, replicas simply return to the existing partition-robust mechanism without affecting safety or eventual progress.

\begin{figure}[t!]
    \vspace{-3mm}
    \centering
    \resizebox{\linewidth}{!}{%
     \begin{tikzpicture}[>=stealth, thick]
    \useasboundingbox (-0.5,0.7) rectangle (9.0,-2.5);

    \draw (0,0) -- (8.5,0);
    \node[left] at (0,0) {R1};
    \draw (0,-0.6) -- (8.5,-0.6);
    \node[left] at (0,-0.6) {R2};
    \draw (0,-1.2) -- (8.5,-1.2);
    \node[left] at (0,-1.2) {R3};
    \draw[color=red] (0,-1.8) -- (8.5,-1.8);
    \node[left, color=red] at (0,-1.8) {R4};

    \node[align=center, draw=red, text=red, fill=white, rounded corners=1pt, inner sep=2pt] at (0.4,-1.8) {P4$_r$};

    \draw[->, color=red, dashed] (0.7,-1.8) -- (1.7,0);
    \draw[->, color=red, dashed] (0.7,-1.8) -- (1.7,-0.6);
    \draw[->, color=red, dashed] (0.7,-1.8) -- (1.7,-1.2);
    \draw[dashed] (1.8, 0.25) -- (1.8, -2.0);
    \node[align=center] at (2.0, -2.35) {\textit{\shortstack{Fast-Path Timeout\\Fallback to Base Path}}};
    \node[align=center, fill=white, text=red] at (0.7, -0.5) {\shortstack{Leader\\Failed}};

    \node[align=center, draw, fill=white, rounded corners=1pt, inner sep=2pt] at (2.2,0) {P1$_r$};
    \node[align=center, draw, fill=white, rounded corners=1pt, inner sep=2pt] at (2.2,-0.6) {P2$_r$};
    \node[align=center, draw, fill=white, rounded corners=1pt, inner sep=2pt] at (2.2,-1.2) {P3$_r$};
    % \node[align=center, draw=red, text=red, fill=white, rounded corners=1pt, inner sep=2pt] at (3.65,-1.8) {Vote for P4};

    \draw[->] (2.5,0) -- (3.7,-0.6);
    \draw[->] (2.5,0) -- (3.7,-1.2);
    \draw[->] (2.5,0) -- (3.7,-1.8);
    \draw[->] (2.5,-0.6) -- (3.7,0);
    \draw[->] (2.5,-0.6) -- (3.7,-1.2);
    \draw[->] (2.5,-0.6) -- (3.7,-1.8);
    \draw[->] (2.5,-1.2) -- (3.7,0);
    \draw[->] (2.5,-1.2) -- (3.7,-0.6);
    \draw[->] (2.5,-1.2) -- (3.7,-1.8);

    \node[align=center, draw, fill=white, rounded corners=1pt, inner sep=2pt] at (4.35,0) {Vote P2$_r$};
    \node[align=center, draw, fill=white, rounded corners=1pt, inner sep=2pt] at (4.35,-0.6) {Vote P2$_r$};
    \node[align=center, draw, fill=white, rounded corners=1pt, inner sep=2pt] at (4.35,-1.2) {Vote P2$_r$};

    \draw[->] (5,0) -- (6.15,-0.6);
    \draw[->] (5,0) -- (6.15,-1.2);
    \draw[->] (5,0) -- (6.15,-1.8);
    \draw[->] (5,-0.6) -- (6.15,0);
    \draw[->] (5,-0.6) -- (6.15,-1.2);
    \draw[->] (5,-0.6) -- (6.15,-1.8);
    \draw[->] (5,-1.2) -- (6.15,0);
    \draw[->] (5,-1.2) -- (6.15,-0.6);
    \draw[->] (5,-1.2) -- (6.15,-1.8);

    \node[align=center, draw, fill=white, rounded corners=1pt, inner sep=2pt] at (6.8,0) {PoR-P2$_r$};
    \node[align=center, draw, fill=white, rounded corners=1pt, inner sep=2pt] at (6.8,-0.6) {PoR-P2$_r$};
    \node[align=center, draw, fill=white, rounded corners=1pt, inner sep=2pt] at (6.8,-1.2) {PoR-P2$_r$};
    \node[align=center, draw, fill=white, rounded corners=1pt, inner sep=2pt] at (6.8,-1.8) {PoR-P2$_r$};

    \node[align=center, draw, fill=white, rounded corners=1pt, inner sep=2pt] at (7.95,0) {P1$_{r+1}$};

    \draw[<->, blue, thick] (0.1,0.3) -- (1.8,0.3) node[midway, above, text height=1.5ex, text depth=0.25ex] {\small Fast Path};
    \draw[<->, blue, thick] (1.8,0.3) -- (7.45,0.3) node[midway, above, text height=1.5ex, text depth=0.25ex] {\small Base Path};
    \draw[<-, blue, thick] (7.45,0.3) -- (8.5,0.3) node[midway, above, text height=1.5ex, text depth=0.25ex] {\small Next Round};

    \end{tikzpicture}%
    }
    \caption{Fallback from Fast Path to Base Path}
    \label{fig:fallback_base_path}
    \vspace{-1mm}
    
\end{figure}

\subsection{\new{Dissemination Layer in \sysname}}
\new{
Prior work such as Narwhal~\cite{narwhal-tusk} and AutoBahn~\cite{autobahn} introduce the notion of a 
\emph{dissemination layer} that decouples data dissemination from ordering. 
Concretely, the dissemination layer continuously broadcasts batches of transactions and 
certifies their availability across replicas, 
while the ordering layer concurrently proposes references to the already-disseminated data. 
This decoupling is complementary to \sysname{}'s design and can be integrated with modest changes.
%
%Without this layer, unselected proposals may be discarded after each round, so \sysname{} commits at most one proposal per round; we refer to this baseline as \sysname{}-Raw.
}

\subsection{\new{Speculative Execution over \PoA{}}}
\new{
A \PoA{} certificate in \sysname{} does not guarantee final commitment, but it does certify that a proposal has made recoverable partial progress. 
To avoid leaving this certified work idle during a partition, replicas may speculatively execute \PoA{}-backed proposals before they are committed via a \PoR{}. 
The execution result can be returned to the client as speculative output, though it does not constitute a final commit. 
After recovery, if the branch becomes the strongest branch, its speculative execution is reused on the committed path; 
otherwise, the speculative state is discarded, and its transactions must be re-proposed through the normal path. 
Speculative execution thus provides a meaningful way to measure partial progress.
%: the speculative throughput reported in our evaluation counts \PoA{}-backed proposals that have been locally ordered and executed, but not yet finally committed.
}

\section{Proof Sketch}
\label{sec:proof_sketch}

This section summarizes the main arguments behind \sysname's safety, liveness under partial synchrony, and partial liveness.
Full proofs are deferred to Appendix~\ref{app:full_proofs}~\cite{extended}.

 \begin{restatable}[Safety]{theorem}{SafetyTheorem}\label{thm:safety}
No two correct replicas commit (finalize) conflicting proposals.
 \end{restatable}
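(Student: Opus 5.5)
We follow the standard HotStuff-style safety argument, adapted to the two-PoR commit rule and the locking rule of \Cref{sec:commitment}. We rely only on two facts: any two strong quorums of size $2f{+}1$ intersect in at least one correct replica, and a correct replica votes at most once per round, since the flag $\textit{Voted}_i[r]$ is set in the ExchangeTimer handler. We first prove a \emph{uniqueness lemma}: for every round $r$, at most one proposal obtains a $\PoR_r$. Two distinct round-$r$ \PoR{}s would require a correct replica in the intersection of their quorums to vote twice in round $r$, which is impossible. \PoA{}s and \RC{}s play no role in safety, because commitment is defined purely over \PoR{}s. Fast-path \PoR{}s are formed from the same single vote share per round and the same lock check, so the lemma covers them as well. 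This is the point where we must argue that the fast and base paths share the one-vote-per-round discipline.

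The core step is a \emph{lock-propagation lemma}. Suppose $b_r$ is committed via $\PoR_r$ and $\PoR_{r+1}$ for a child $b_{r+1}$ that extends $b_r$. Then every proposal $b'$ certified by a \PoR{} in a round $r' > r+1$ extends $b_r$. We would prove this by induction on $r'$. The quorum of $\PoR_{r+1}$ contains at least $f{+}1$ correct replicas. Each of them saw $\PoR_r$ and, by \textsc{MaybeAdvanceLockAndCommit}, holds a lock at or above $b_r$ from that point on. Locks only move upward along certified chains. Now take the smallest $r' > r+1$ with a \PoR{} for a proposal $b'$ not extending $b_r$. The quorum of $b'$ intersects the quorum of $\PoR_{r+1}$ in a correct replica $p$. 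That replica voted for $b'$, so \textsc{SafeToVote}$(b')$ held at $p$. Either $b'$ extends $lock_p$, which extends $b_r$, a contradiction; or $b'$ carries a \PoR{} from a round strictly higher than $lock_p$'s round. In the second case, that carried \PoR{} has round in $(r+1, r')$, or equals $r+1$ itself. By minimality of $r'$, or by the uniqueness lemma when the round is $r+1$, the carried \PoR{} certifies a proposal extending $b_r$. Since $b'$ extends its parent chain, $b'$ extends $b_r$, again a contradiction.

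Safety then follows. Suppose correct replicas commit $b_r$ and $b'_s$ that conflict, with $r \le s$ without loss of generality. If $r = s$, the uniqueness lemma gives $b_r = b'_s$. If $s = r+1$, then $b'_s$ is itself certified by a \PoR{}; it must then equal $b_{r+1}$ by uniqueness, and $b_{r+1}$ extends $b_r$. If $s > r+1$, the lock-propagation lemma says $b'_s$ extends $b_r$. Committed ancestors are handled the same way because commitment is prefix-closed.

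The main obstacle is the second branch of \textsc{SafeToVote}, the ``higher-round \PoR{}'' override. We must make precise that the override compares against the round of $lock_p$, so that it cannot jump to a fork. We must also show that the carried \PoR{} is really a \PoR{} in the chain of $b'$, meaning $b'$'s \emph{parent} is that certified block. The pseudocode sets $parent$ to the log tail, which may be a \PoA{} block or an \RC{} gap, so we need the invariant that $b'$ extends its own $\widehat{PoR}$. We would establish this invariant first: a proposal's \PoA{} ancestors themselves extend a \PoR{}. Only then does the induction close.
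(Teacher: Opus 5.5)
Your proposal is correct and follows essentially the same route as the paper: the at least $f{+}1$ correct voters of $\PoR_{r+1}$ form a set locked on $b_r$'s branch. The earliest \PoR{} after round $r$ that conflicts with $b_r$ is then ruled out by intersecting its quorum with that set and splitting on the two branches of \textsc{SafeToVote}, where the override branch would exhibit an even earlier conflicting \PoR{} on the proposal's ancestry. The invariants you flag (that a proposal extends the \PoR{} it carries, and that the voters of $\PoR_{r+1}$ have seen $\PoR_r$) are exactly what the paper assumes ``by protocol validity'' rather than derives from the pseudocode. Its no-double-vote lemma also covers only base-path voting, so your fast-path claim goes beyond the paper and is not automatic: the described fallback casts a fresh base-path vote within the same round.
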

\vspace{-2mm}
\begin{proof}[Proof sketch]
The argument follows from three ingredients.
First, any two strong quorums of size $2f{+}1$ intersect in at least $f{+}1$ replicas, and therefore share at least one correct replica.
Second, a correct replica casts at most one vote per round.
Hence, two conflicting proposals cannot both obtain valid \PoR{} certificates in the same round.

Across rounds, safety is preserved by the locking rule.
Once a correct replica observes a \PoR{} for a proposal, it locks that proposal (or a later descendant on the same branch) and will not vote for a conflicting one unless it observes a strictly higher-round \PoR{} on the competing proposal's certified ancestry.
Therefore, once a proposal becomes commit-ready under the Two-\PoR{} rule, the second \PoR{} contains at least $f{+}1$ correct voters that are locked on that branch.
By quorum intersection, any later strong quorum for a conflicting branch must intersect this locked set in at least one correct replica, which prevents the conflicting branch from gathering the certified chain needed for commitment.
\end{proof}

\subsection{Liveness}
 \begin{restatable}[Liveness (with probability 1)]{theorem}{LivenessTheorem}\label{thm:liveness}
During any sufficiently long stabilized synchronous period, correct replicas commit new proposals continuously;
over an unbounded sequence of such stabilized periods, commits occur infinitely often with probability~1.
 \end{restatable}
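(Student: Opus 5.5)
The plan is to fix a stabilized period beginning at some $\textsf{GST}_k$ during which all correct replicas are connected and messages among them arrive within the unknown bound $\Delta$. The proof then proceeds in four steps, carried out in order:
\begin{enumerate*}[label=(\roman*)]
\item timeouts eventually become adequate;
\item correct replicas' rounds synchronize;
\item a round with a good strongest proposal yields a \PoR{} with constant probability;
\item two consecutive such rounds commit.
\end{enumerate*}
The ``probability~1'' clause then follows from a Borel--Cantelli style argument over rounds and over the unbounded sequence of stabilized periods.

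\textbf{Step (i): timeouts become adequate.} I would argue this from the background calibration service of \Cref{subsubsec:background-resync}. Correct replicas keep retransmitting $\mathsf{SYNC\mbox{-}READY}$, and the $f{+}1$ jump rule pulls them to a common synchronization view. So within $O(\Delta)$ all $2f{+}1$ correct replicas reach the same $\nu$, form $\mathsf{SYNC\mbox{-}CERT}$, and exchange them.

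Whenever $\delta_i<\Delta$ causes a failure, $\delta_i$ doubles. Hence after at most $\lceil\log_2(\Delta/\delta_i)\rceil$ failed attempts per replica, every correct replica has $\delta_i\ge\Delta$. Each attempt costs at most $K$ rounds plus $O(\delta_i)$ time.

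The optional decrease rule can shrink $\delta_i$ again. I would handle this by noting that it only halves when success took less than $\delta_i/\alpha$, so $\delta_i/2$ still exceeds the observed round trip. Alternatively, I would restrict the statement to the monotone variant and bound the number of oscillations. I expect this to be the most delicate step, because calibration is decoupled and its success condition ($f{+}1$ returned certificates) does not obviously imply $\delta_i\ge 2\Delta$ for round overlap. I would first try to show that success implies $\delta_i$ at least one actual round-trip among correct replicas, which suffices for the $2\delta_i$ exchange window.

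\textbf{Step (ii): rounds synchronize.} With adequate timeouts I would prove a synchronization lemma. Let $r$ be the highest round entered by a correct replica at time $t$. By the catch-up rule, its \texttt{enterCert} (\PoR{} or \RC{}) and its proposal reach all correct replicas by $t+\Delta$, and \textsc{ObserveHigherRound} makes them enter $r$.

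Thus all correct replicas are in round $r$ within a $\Delta$ skew. The $2\delta_i\ge 2\Delta$ exchange window then ensures every correct replica receives every correct proposal of round $r$ before voting. Round advancement never stalls: if no \PoR{} forms, all $2f{+}1$ correct replicas send \textsc{WishNewRound}, and an \RC{} forms.

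\textbf{Step (iii): a good round yields a \PoR{}.} Correct replicas hold the same set of correct proposals, but Byzantine proposals may be delivered selectively. The ranking by $(\mathrm{Round}(\widehat{PoR}),\mathrm{Round}(\widehat{PoA}))$ followed by the coin score $H(\mathcal{C}_r\|\textsf{sender})$ is deterministic given the set. The coin $\mathcal{C}_r$ is unpredictable until $f{+}1$ shares are revealed, in particular before Byzantine replicas commit to their proposals.

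I would argue that with probability at least $(2f{+}1)/n\ge 2/3$ the top-scoring proposal among the maximal-rank class is from a correct replica. Correct replicas propagate the highest certificates, so after one round all correct proposals carry the maximal known $\widehat{PoR}$ and are in the top class. The winner is then seen by all correct replicas, and it passes \textsc{SafeToVote}: it carries the highest \PoR{}, so it either extends every correct lock or carries a strictly higher \PoR{}, by the locking argument of Theorem~\ref{thm:safety}. All $2f{+}1$ correct replicas therefore vote for it within the $\delta_i$ election window, and a \PoR{} forms.

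\textbf{Step (iv): commits.} Two consecutive such rounds satisfy the Two-\PoR{} rule of \Cref{sec:commitment}, because the round-$(r{+}1)$ winner extends the round-$r$ \PoR{} tail. This happens with probability at least $(2/3)^2$ per pair of rounds, independently across rounds via fresh coins. Hence in a sufficiently long period a commit occurs with probability approaching $1$. Across infinitely many stabilized periods, independent trials give infinitely many commits almost surely.
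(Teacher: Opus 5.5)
Your four steps follow the paper's own decomposition: Lemma~\ref{lem:eventual_timeout_sufficiency}, then Lemma~\ref{lem:good_round_exists}, then Lemma~\ref{lem:por_in_good_round}, and finally fresh coins with the Two-\PoR{} rule. However, the step that carries the probability bound in (iii) does not go through. The $(2f{+}1)/n$ bound needs every correct proposal to lie in the maximal-rank class at every correct replica, and ``correct replicas propagate the highest certificates'' does not give this. A correct replica fixes $\widehat{PoR}_i,\widehat{PoA}_i$ in its proposal when it enters the round. Each replica assembles \PoA{}s locally from its own view of the votes. So by delivering their votes selectively, Byzantine replicas can leave correct replicas with different $\widehat{PoA}$, and hence correct proposals of different rank.

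Worse, the coalition can hold a certificate that no correct replica has. Take a tied round in which two Byzantine identities out-score all correct ones and each is shown to a different group of $f$ correct replicas. Then no proposal gets more than $f$ correct votes, so with Byzantine votes withheld no correct replica can form a round-$r$ \PoA{}. Meanwhile, $f$ Byzantine votes plus one broadcast correct vote already form a valid \PoA{}. In the next round, every Byzantine proposal carrying this hidden \PoA{} strictly outranks every correct proposal. At any correct replica receiving one, the tied class then contains no correct proposer, so ``the top-scoring identity is correct'' no longer forces a common vote. Showing these proposals to disjoint groups of at most $f$ correct replicas recreates the split and yields the next hidden \PoA{}, with no further luck needed.

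Your claim that $\mathcal{C}_r$ is unknown before Byzantine replicas commit to their proposals is also wrong. They hold $f$ shares and obtain $\mathcal{C}_r$ from the first correct round-$r$ proposal; this is harmless only because $\mathrm{Score}$ depends on the sender alone. The paper's Lemma~\ref{lem:por_in_good_round} simply asserts that the top scorer is uniform over all $n$ proposers and never addresses membership in $\mathcal{M}_r$, so it does not supply the missing invariant either. You would need to show that in a good round no valid proposal outranks the correct ones and that all correct ones share the top rank.

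Two further steps need repair.

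\emph{Step (i).} The sub-lemma you plan to prove, that success implies $\delta_i$ covers an actual round trip, is false. $\mathit{CertSet}[\nu]$ also counts certificates that other replicas formed before $p_i$ formed its own. Up to $f$ of them may be Byzantine and sent at once, and a single earlier correct certificate can arrive right after $p_i$'s. So calibration can succeed with $\delta_i<\Delta$, even fast enough to trigger halving, and then $\delta_i$ need never double. The paper's Lemma~\ref{lem:eventual_timeout_sufficiency} only shows that failure doubles $\delta_i$ and that a large $\delta_i$ succeeds, which does not exclude this.

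\emph{Step (ii).} A replica that catches up through \textsc{ObserveHigherRound} gets an exchange window of $\delta_i$, not $2\delta_i$. With merely $\delta_i\ge\Delta$, a replica fast-forwarding at time $t$ may close its window at $t+\Delta$. A replica that caught up at $t+\Delta$ sends a proposal that may arrive only by $t+2\Delta$, after that window has closed. You need $\delta_i\ge 2\Delta$, or a full window on catch-up, for every correct replica to see every correct proposal.
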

\begin{proof}[Proof sketch]
During a sufficiently long stabilized synchronous period, \new{the background timeout-calibration mechanism eventually makes every correct replica's local timeout baseline $\delta_i$ sufficiently large relative to the actual network delay bound.
}
Once local timeouts are sufficient, \PoR{}s and \RC{}s are delivered within the underlying synchronous delay bound.
Since every valid higher-round proposal carries one of these certificates as its \textit{enterCert}, certificate dissemination together with proposal-driven fast catch-up eventually re-aligns correct replicas into recurring \emph{good rounds}, in which all correct replicas enter the same round early enough to complete proposal exchange and vote collection.

In each good round, all correct replicas observe the same set of honest proposals.
If there is a unique strongest proposal, they all vote for it directly.
Otherwise, ties are broken by the common coin; with probability at least $\frac{2f{+}1}{n}$, the selected top-scoring proposer is honest, in which case all correct replicas converge on the same proposal and form a valid \PoR{}.
Since good rounds recur and each one succeeds with non-zero probability, \PoR{}s occur infinitely often almost surely.
Whenever two consecutive rounds produce extending \PoR{}s, the Two-\PoR{} rule commits the earlier proposal in the chain.
\end{proof}

\subsection{Partial Liveness}
 \begin{restatable}[Partial Liveness]{theorem}{PartialProgressTheorem}\label{thm:partial_progress}\sysname satisfies the \textbf{Partial Liveness} property.
Specifically, during network partitions, any connected component that contains at least $f{+}1$ correct replicas and whose internal message delays are bounded by $\Delta$\new{, and whose correct replicas use local timeout baselines satisfying $\delta_i \ge \Delta$, }continues to advance rounds locally and produce \PoA{}-certified partial progress.
After network recovery, this accumulated partial progress remains recoverable and can be incorporated into subsequent \PoR{}-backed commitment.
\end{restatable}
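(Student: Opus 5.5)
Let $C$ be a connected component containing a set $H$ of at least $f{+}1$ correct replicas, with internal delays bounded by $\Delta$ and every $p_i\in H$ using $\delta_i\ge\Delta$. The proof has two parts: (a) rounds keep advancing inside $C$, and (b) in each round that does not end with a \PoR{} inside $C$, a \PoA{} forms with positive probability. Recoverability after the partition then follows from the \PoA{} semantics and from Theorem~\ref{thm:liveness}.

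\emph{Round advancement.} We show that if some $p_i\in H$ enters round $r$, then every member of $H$ reaches a round $>r$ in bounded time.
\begin{itemize}
\item $p_i$ broadcasts its proposal, which carries a valid \textit{enterCert}.
\item Within $\Delta$, every lagging member of $H$ either fast-forwards via \textsc{ObserveHigherRound} or has already passed round $r$. All timers are finite: the exchange timer is $2\delta_i$ and the election timer is $\delta_i$. So each member of $H$ reaches \textsc{ExitRound} for round $r$ in bounded time. This happens even when $\mathcal{P}^{safe}$ is empty, since that case also exits.
\item On exit, a replica either holds a \PoR{}, in which case it enters $r{+}1$ directly, or it broadcasts \textsc{WishNewRound}$(r)$. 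In the latter case all $|H|\ge f{+}1$ members' messages arrive within $\Delta$, so each forms $\RC_r$ and enters $r{+}1$.
\item A Byzantine replica might push some members higher early. That only helps, because the corresponding certificate or proposal is forwarded within $C$ and pulls the others along.
\end{itemize}
Induction on $r$ gives unbounded round advancement. The argument uses only $f{+}1$ messages, never a strong quorum.

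\emph{\PoA{} formation.} Call a round $r$ \emph{aligned} if all of $H$ are in $r$ within $\Delta$ of one another. The fast-forward rule yields aligned rounds. In an aligned round, the exchange window $2\delta_i\ge 2\Delta$ lets each member of $H$ receive every $H$-member's proposal. The steps are:
\begin{enumerate}
\item \emph{Common safe set.} Members' highest \PoR{}/\PoA{} rounds and locks may differ. Each proposal carries its sender's $\widehat{PoR}$, and receivers apply \textsc{UpdateHiCert} and \textsc{MaybeAdvanceLockAndCommit} before filtering. I will argue that the highest-ranked honest proposal under $\mathcal{S}$ carries the highest \PoR{} known in $H$. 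Hence it is \textsc{SafeToVote} for all members of $H$, either by extending their locks or by rule (ii).
\item \emph{Common selection.} Byzantine replicas may send different proposals to different members. Therefore I will not claim that all members see an identical set. Instead, I will use the threshold coin. Conditioned on the top-scoring proposer being in $H$, which has probability $\ge |H|/n>0$, and on $\mathcal{S}$ ties, all members of $H$ pick that proposal.
\item \emph{Threshold.} The $\ge f{+}1$ votes arrive within the $\delta_i$ election window. This yields a \PoA{}, or a \PoR{} if more replicas vote.
\end{enumerate}
Aligned rounds recur and their coins are independent. Hence \PoA{}s form in $C$ infinitely often with probability $1$, and they extend the local chain, since each one updates $\widehat{PoA}_i$.

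\emph{Recoverability.} A \PoA{} has $f{+}1$ signers, so at least one correct replica stores the payload. That payload survives the partition and can be retrieved once connectivity returns. By Theorem~\ref{thm:liveness} and the proposal-priority rule, correct replicas eventually vote for the strongest branch, ranked by highest \PoR{} and then highest \PoA{}. A \PoR{} extending that branch incorporates its \PoA{}-certified ancestors. We claim only that the progress \emph{can} be incorporated, not that every branch is; this matches the statement.

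\emph{Main obstacle.} The hardest step is the common-selection argument under Byzantine equivocation, together with lock heterogeneity: a Byzantine high-\PoR{} proposal could be shown to only part of $H$ and split its votes. I would first try to show that any \PoR{} seen by one member is propagated within $C$ before the election. This works because the exchange window is $2\Delta$ and proposals carry $\widehat{PoR}$, which equalizes everyone's $\mathcal{S}$-maximum. If that fails, I would fall back to a probabilistic bound over rounds in which the Byzantine replicas' proposals do not dominate.
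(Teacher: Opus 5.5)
Your decomposition matches the paper's. Your round-advancement part corresponds to its \RC{}-progress lemma and is done more carefully, covering the empty-safe-set exit and \PoR{}-driven jumps. Your recoverability part corresponds to its data-availability lemma plus the priority rule. Both are fine.

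\textbf{The gap is in \PoA{} formation, at the point you flag as the ``main obstacle''.} Your common-selection step conditions on the $\mathcal{S}$-maximal candidates being tied, but the adversary decides whether they tie, not the coin. Take $|H|=f{+}1$ and suppose a Byzantine $b\in C$ holds a \PoA{} $X$ whose round exceeds anything $H$ knows.
\begin{itemize}
\item In round $r$, $b$ builds a proposal that carries $X$ and extends the members' locks, so it is \textsc{SafeToVote}.
\item $b$ sends it only to a nonempty proper subset $A\subset H$, and only after those members have broadcast their own proposals.
\item Members of $A$ rank it strictly above every honest proposal, so the coin cannot help. Members of $H\setminus A$ vote for the best honest proposal.
\item Each side has at most $f$ honest votes. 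The Byzantine replicas withhold their votes from correct replicas, so no correct replica forms a \PoA{}.
\item $b$ adds its $f$ votes to the $|A|$ broadcast votes. This yields a private round-$r$ \PoA{} that strictly dominates everything $H$ has seen, including $X$. $b$ repeats the attack in round $r{+}1$.
\end{itemize}
The first such certificate comes from any tied round in which a Byzantine proposer wins the coin (probability up to $f/n$): the same selective delivery splits $H$. From then on the only \PoA{}s formed are adversary-built ones over possibly transaction-free proposals. So ``aligned rounds recur with independent coins'' yields nothing.

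\textbf{Your patch does not repair this.} A certificate a replica learns mid-round leaves it only in its next proposal. More fundamentally, $\mathcal{S}$ is evaluated on the certificates each proposal carries, which are fixed at creation, so spreading knowledge never lifts the honest proposals into a tie. The attack also uses \PoA{} rounds, not \PoR{}s. Your fallback fails too, because dominance is not a random event.

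\textbf{A smaller hole in the same step.} A member that joins round $r$ via \textsc{ObserveHigherRound} runs its exchange timer for $\delta_i$, not $2\delta_i$. In a $\Delta$-aligned round it may therefore vote before the proposal of a member that entered $\Delta$ later arrives; $2\delta_i\ge 2\Delta$ does not cover it.

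\textbf{How the paper handles this step.} Its weak-quorum lemma simply asserts that the correct members of $C$ ``converge on the same strongest candidate.'' That tacitly assumes they all evaluate the same safe candidate set, with the same timing gloss. If you make that assumption explicit, your step goes through, and deterministically: the score is a fixed function of $\mathcal{C}_r$ and the sender, applied to a common set, so no coin probability is needed. Without it, or some bound on the adversary's supply of dominating certificates (which the protocol as written does not provide), the claim that $C$ keeps producing \PoA{}-certified progress is not established.
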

\vspace{-2mm}
\begin{proof}[Proof sketch]
Consider any connected component containing at least $f{+}1$ correct replicas and timely internal communication.
Within such a component, correct replicas can still exchange proposals within one local round and apply the same priority rule to identify a common strongest candidate.
As a result, the component can assemble a weak quorum and form a \PoA{}, thereby preserving certified partial progress even without a global strong quorum.

If no \PoR{} is formed in that round, correct replicas eventually time out and broadcast \textsc{WishNewRound}; since the component contains at least $f{+}1$ correct replicas, they can form an \RC{} for the next round and continue advancing locally.
Thus, under partition, \sysname preserves both round progression and recoverable partial progress.

A \PoA{} does not imply commitment, but it does imply data availability: at least one correct replica stores the certified proposal.
Once global connectivity is restored, a strong quorum becomes reachable again, and replicas resume forming \PoR{}s on the strongest recoverable branch.
Through the catch-up and back-fill mechanisms, previously accumulated \PoA{}-backed progress can be retrieved, extended, and incorporated into the committed prefix.
\end{proof}

\section{Evaluation}
% To evaluate our partial progress conjecture in practice, we deploy our \sysname 
% protocol on ResilientDB~\cite{resdb}, an Apache Incubating Project. Our aim is to study the effects on throughout and latency upon network 
% recovery to demonstrate partial progress as the system heals itself.

\begin{figure*}[t]
    \vspace{-2mm}
    \centering
    \includegraphics[width=\textwidth]{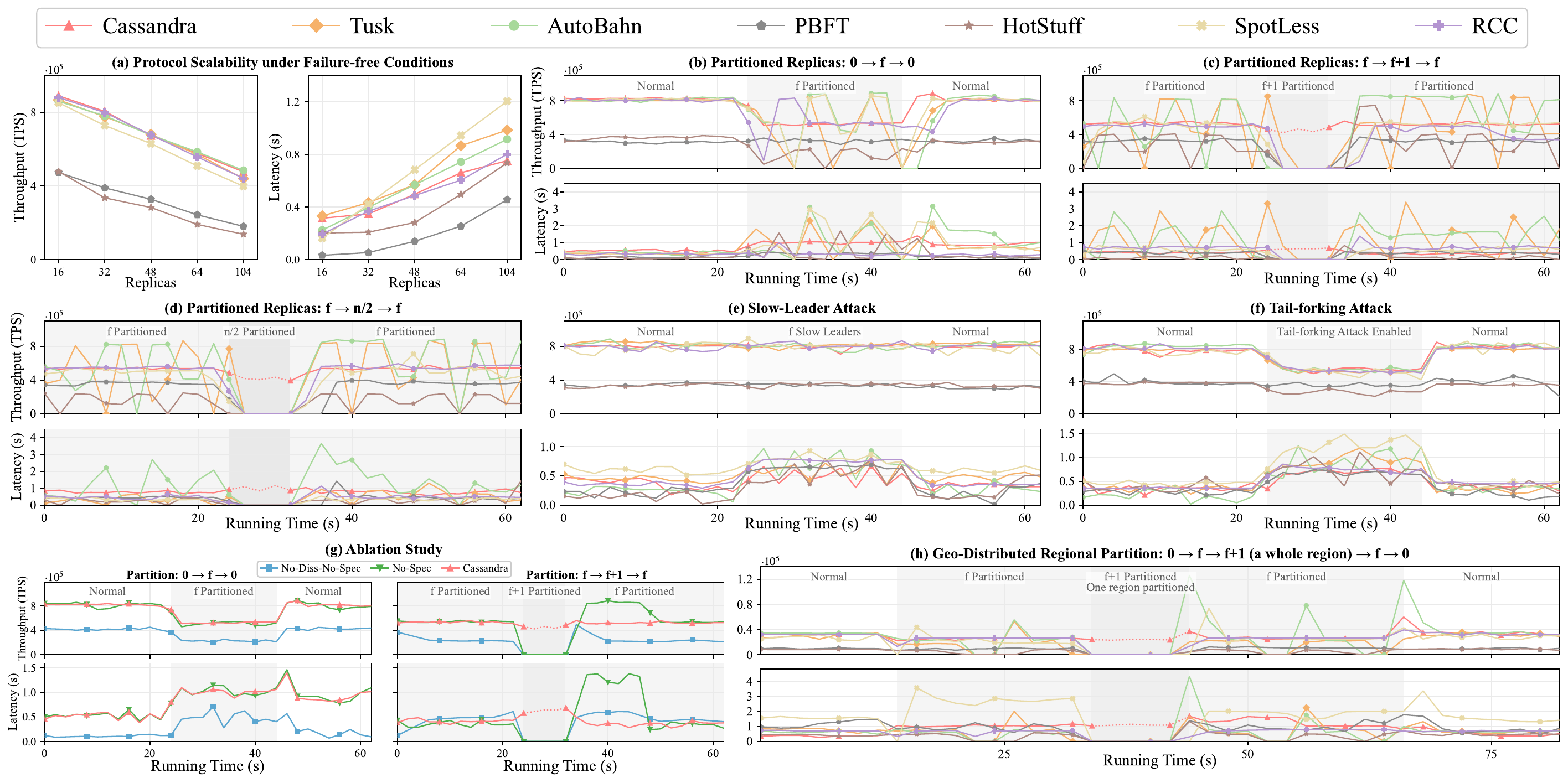}
    \vspace{-6mm}
    \newcaption{Evaluation summary across scalability and partition scenarios.}
    \vspace{-1mm}
    \label{fig:eval_protocols}
    \label{fig:bigpar-eval}
    \label{fig:weakpar-eval}
    \label{fig:weakpar-2-eval}
    \label{fig:weakpar-3-eval}
    \label{fig:weakpar-f-eval}
    \label{fig:ablation-study}
\end{figure*}

To validate our partial-progress claims, we implement \sysname{} and evaluate it against 
\new{several state-of-the-art BFT protocols: Tusk~\cite{narwhal-tusk}, AutoBahn~\cite{autobahn}, PBFT~\cite{pbft}, HotStuff~\cite{hotstuff}, SpotLess~\cite{spotless}, and RCC~\cite{rcc}.}
Our evaluation targets two key metrics---\textbf{throughput} and \textbf{latency}---and 
aims to address the following research questions:

\begin{itemize}[nosep,wide]
    \item \textbf{Q1:} \new{How does \sysname{} scale under stable network conditions compared with existing BFT protocols?} (\Cref{subsec:normal_result})

    \item \textbf{Q2:} \new{How does \sysname{} behave under network partitions, from $f$ partitioned replicas to balanced partitions?} (\Cref{subsec:partition_result})

    \item \textbf{Q3:} \new{How robust is \sysname{} under Byzantine behaviors such as intentional delay and tail-forking attacks?} (\Cref{subsubsec:byzantine-behavior})

    \item \textbf{Q4:} \new{What do the ablation study and latency breakdown reveal about \sysname{}'s performance?} (\Cref{subsubsec:ablation-study})

    \item \textbf{Q5:} \new{How does \sysname{} perform under geo-distributed regional partitions?} (\Cref{subsubsec:geo-eval})
\end{itemize}

\subsection{Implementation and Evaluation Setup}
\bheading{Implementation.}
We implement \sysname in C++~\cite{sourcecode} as a core consensus component within the open-source Apache ResilientDB platform~\cite{resdb,resdbpaper}, a scalable blockchain infrastructure. 
% , utilizes \textit{std::thread} to manage asynchrony
% We use TCP to implement reliable point-to-point communication. 
\new{The prototype reuses ResilientDB's networking, batching, execution, monitoring, and key-management pipeline.
Messages are authenticated through ResilientDB's Crypto++ verifier, and our experiments use Ed25519 replica keys generated by ResilientDB's key-generation tool.
\PoA{}, \PoR{}, and \RC{} certificates are formed by collecting authenticated replica votes and checking corresponding thresholds.
Unless stated otherwise, \sysname{} enables the fast path, the dissemination layer, and speculative execution. 
Speculative execution lets replicas execute a \PoA{}-backed proposal before final \PoR{} commitment. 
This avoids long execution queues and reduces executor idleness during partitions, which in turn speeds up recovery once the network is restored.
For experiments involving network partitions, we report \emph{speculative throughput}, 
which counts \PoA{}-backed proposals that have been locally ordered and executed but not yet finally committed.
}

\bheading{Evaluation Setup.}
We deploy \texttt{c5.4xlarge} instances (16 vCPUs, 32 GB memory) on AWS in Northern Virginia. 
For geo-distributed experiments, we use four regions: Northern Virginia, S\~ao Paulo, Frankfurt, and Singapore. 
Each replica is assigned a dedicated client so that all proposers propose at the same rate, and 
each client issues transactions in an open loop at a rate sufficient to saturate its replica. 
We also {\em batch transactions} for processing.
%
%In each setup, each node exclusively ran one instance of \sysname, with the number of clients matching the number of consensus processes to ensure that replicas were fully saturated.
For benchmarking, we use ResilientDB's key-value service, 
in which each transaction is a randomly generated key-value insertion of a fixed size (50 bytes). 
All reported numbers are averaged over three runs.

\begin{comment}
\begin{figure}[h]
    \centering
    \scalebox{0.82}{\ref{batchmain}}\\[3pt]
    \begin{tabular}{cc}
        \EvalBatchTPS &
        \EvalBatchLat
    \end{tabular}
    \vspace{-4mm}
    \caption{Evaluation of different batch sizes.} 
    \vspace{-2mm}
    \label{fig:eval_batch}
\end{figure}

\subsection{Impact of Batch Size}
In our initial experiments, 
we conducted tests using various batch sizes on 64 nodes 
to determine a suitable base size we will use in the following experiments, 
and the outcomes are depicted in Figure~\ref{fig:eval_batch}. 
The findings indicate that augmenting the batch size results in increased throughput. 
However, it also reveals a corresponding increase in latency with larger batch sizes.
The majority of protocols achieve peak throughputs, $5.2M$ tps, with a batch size of $800$. 
The latency for a batch size of $800$ is double that of a batch size of $400$, while the throughput remains similar for both sizes, $5.2M$ for the batch size of $800$ while $4.8M$ for the batch size of $400$. To achieve a balance between throughput and latency, we have decided to use a batch size of 400 as the default for our upcoming experiments.
\end{comment}

\subsection{Performance Under Normal Conditions} 
\label{subsec:normal_result}
First, we study the scalability of \sysname under stable network conditions (no failures).
\new{Figure~\ref{fig:eval_protocols}(a) varies the number of replicas from $16$ to $104$. 
All protocols exhibit a similar trend: as the number of replicas grows, throughput decreases and latency increases, 
because more participants must reach agreement. 
All protocols except PBFT and HotStuff sustain high throughput, as they are not constrained by a single-leader design. 
PBFT achieves higher throughput than HotStuff because its replicas can process transactions out of order, 
whereas HotStuff's rotating-leader design prohibits this. 
\sysname{} has a deterministic proposal-priority rule and dissemination, 
while Tusk, AutoBahn, RCC, and SpotLess employ similar multi-proposer or decoupled data paths.
\sysname{}'s partial-progress mechanisms, however, marginally increase latency relative to the lowest-latency baselines. 
At $104$ replicas, \sysname{}'s throughput is within $1.5\%$ of AutoBahn's.
}

\subsection{Performance under Network Partitions} \label{subsec:partition_result}
\new{
Next, we study the impact of network partitions on different protocols by increasing the number of partitioned replicas.
%Figure~\ref{fig:eval_protocols}(b)--(d) evaluates increasingly severe partition transitions, where the notation denotes the number of replicas partitioned away over time.
%The sequence is organized around the boundary between finality-capable and weak-quorum operation: the first transition keeps a $2f{+}1$ component, while the latter two remove such a component during the middle interval.
%We therefore distinguish \sysname{}'s \PoA{}-backed speculative ordering, AutoBahn's dissemination-layer progress, and strong-quorum ordering/finality, since disseminated or buffered data is useful after recovery but is not equivalent to recoverable ordering progress.
}

\subsubsection{\new{$0 \rightarrow f \rightarrow 0$ Partition}} 
\label{subsubsec:partition-f}

\new{
First, we ensure that only $f$ replicas are partitioned (from $24$s to $44$s), i.e., a strongly connected component of $2f{+}1$
replicas still exists (Figure~\ref{fig:eval_protocols}(b)). 
All protocols except PBFT experience a drop in throughput (and a rise in latency) during this partition; 
PBFT remains relatively stable because its leader lies within the $2f{+}1$ connected component. 
HotStuff, Tusk, and AutoBahn degrade due to their rotating-leader designs. 
Tusk's wave-based consensus periodically completes a large batch of requests whenever the leader is not partitioned away, 
while AutoBahn's dissemination layer helps it sustain higher throughput after a brief period of zero throughput 
caused by its HotStuff-style ordering layer. 
RCC and SpotLess both run multi-leader protocols: 
after an initial drop, RCC retains two-thirds of its steady-state performance once it excludes the $f$ partitioned replicas, 
whereas SpotLess's multiple rotational-leader instances make it behave similarly to AutoBahn, 
since some views always have partitioned leaders. 
\sysname{}, in contrast, continues forming \PoR{} certificates and finalizing transactions. 
It does incur a $33.7\%$ throughput drop relative to stable settings, 
due to the smaller set of connected replicas and the switching between fast and base paths.
}
%It sustains $545.6$ KTPS during the partition, retaining $66.3\%$ of its pre-partition throughput.
%The drop comes from fewer connected replicas and less proposal traffic, but \sysname{} does not suffer leader-placement stalls: every round can select the strongest available proposal by the deterministic priority rule, so a proposal from the partitioned-away side is simply not chosen.
%After the partition heals, all protocols return to their pre-partition operating range within about $4$\,s.}

\subsubsection{\new{$f \rightarrow f{+}1 \rightarrow f$ Partition}} 
\label{subsubsec:partition-fplusone}
\new{
Next, we assume all protocols run in a state where $f$ replicas are always partitioned, and then (from $24$s to $32$s) 
one additional replica is partitioned, bringing the total to $f{+}1$, so that the largest connected component contains only 
$2f$ replicas (Figure~\ref{fig:eval_protocols}(c)). 
All protocols except \sysname{} drop to zero throughput once more than $f$ replicas are partitioned, 
since they require a strong quorum to run consensus. 
\sysname{}, in contrast, observes only a $15\%$ drop in throughput (and a $52\%$ increase in latency), 
because \PoA{}s require just $f{+}1$ votes and thus continue to drive partial progress through speculative execution. 
Once the system returns to an $f$-replica partition, 
the strong quorum becomes reachable again and all protocols resume committing blocks. 
AutoBahn shows the largest throughput burst, as its dissemination layer keeps operating as long as $f{+}1$ replicas remain connected; 
however, because those proposals cannot be ordered during the partition, it also exhibits high latency.
%\PoA{}-backed proposals in \sysname can be folded into the \PoR{} path.
%instead sustains $446.3$ KTPS of \textit{speculative} throughput with $0.62$s latency because \PoA{} requires only $f{+}1$ votes and attaches weak-quorum evidence to an ordering branch.

%The remaining baselines require a $2f{+}1$ ordering quorum, so they stall or only buffer requests until a strong quorum returns.
%In this post-weak interval, the selected branch reaches $522.9$ KTPS with $0.36$s latency, showing that useful \PoA{}-backed work is carried into the finality path rather than regenerated from scratch.
%Speculative execution therefore absorbs most of the recovery burst: the selected branch has already been executed before final certification, while competing PoA branches are superseded.
}

\subsubsection{\new{$f \rightarrow \lfloor n/2 \rfloor \rightarrow f$ Partition}} 
\label{subsubsec:partition-half}
\new{
Next, we consider a scenario in which roughly $50\%$ of all replicas are partitioned (from $24$s to $32$s), 
leaving a single connected set of $\lceil n/2 \rceil$ replicas (Figure~\ref{fig:eval_protocols}(d)). 
As in the previous experiment, all protocols except \sysname{} drop to zero throughput, 
since they stop processing transactions altogether. 
In \sysname{}, the connected replicas can still generate \PoA{}s and speculatively execute transactions, 
though throughput falls by $21\%$ due to the smaller connected set. 
Once the system is restored to an $f$-replica partition, all protocols behave as described above. 
For \sysname{}, however, the recovery burden is greater than in \S\ref{subsubsec:partition-fplusone}, 
as far more replicas must be reconciled.
}

\subsection{\new{Performance under Byzantine Behavior}} 
\label{subsubsec:byzantine-behavior}
\new{
Next, we study the impact of Byzantine attacks on the different protocols. 
First, in Figure~\ref{fig:eval_protocols}(e), we consider a leader delay attack in which $f$ Byzantine replicas delay proposals and messages by 50\,ms. 
All protocols show only a marginal drop in throughput and return to steady state quickly, but they do suffer a large latency spike.
\sysname{} incurs only a $40\%$ latency increase during the delay interval, whereas HotStuff increases by nearly $300\%$.
This is because \sysname{} can switch between the fast and base paths when the leader is delayed.

%\sysname{} sustains about $800$ KTPS with $0.50$s latency.
%The delay introduces short-term noise to all protocols, but does not cause a sustained collapse.
%\sysname{} maintains the most stable throughput and latency because its proposal-priority rule allows it to select the strongest certified branch.
%Delayed proposals are thus likely superseded by stronger proposals from other replicas.
%Tusk, AutoBahn, RCC, and SpotLess maintain higher throughput because their data paths are more pipelined or decoupled, but they still show extra latency variance while waiting for delayed votes or batches.
%PBFT and HotStuff are more sensitive when delayed replicas lie on the leader path or the quorum-share collection path, so delay appears as lower throughput or sharper latency spikes rather than only transient noise.
}

\new{
Next, we evaluate the impact of a tail-forking attack (Figure~\ref{fig:eval_protocols}(f)). 
In a tail-forking attack, $f$ faulty replicas are strategically placed across rounds and leader positions to create short forks or unfinished tails. 
As a result, all protocols except PBFT experience a sustained throughput drop, as proposals are dropped before they can commit—under leader rotation, the next leader may itself be faulty. 
PBFT's throughput is unaffected, since we do not designate its leader as faulty. 
\sysname{}'s latency, however, is less volatile than that of the other protocols: 
whenever the leader is faulty, replicas fall back to the base path and select the strongest proposal. This raises latency but keeps it predictable. 
In contrast, the other protocols rely on their ordering layer to resolve or bypass these unfinished tails.

%\sysname{} sustains about $560$ KTPS with $0.66$s latency because its deterministic proposal-priority rule gives all correct replicas the same way to compare competing tails.
%Thus, even when Byzantine replicas create short or unfinished tails, correct replicas converge on the strongest certified branch and supersede weaker tails.
%PBFT and HotStuff are affected because a faulty tail can delay the next leader/quorum step or force replicas to abandon partially completed rounds.
%Tusk, AutoBahn, RCC, and SpotLess can continue moving data, but their ordering layer may still spend time resolving or bypassing weaker tails before the useful chain resumes.
%This explains why the high-throughput baselines remain active but show more latency volatility, whereas weaker tails in \sysname{} are deterministically superseded by the strongest certified branch.
}

\subsection{\new{Ablation Study and Latency Breakdown}} 
\label{subsubsec:ablation-study}
\new{
In Figure~\ref{fig:eval_protocols}(g), we isolate the effect of \sysname{}'s various optimizations under network partitions:
(i) {\em No-Diss-No-Spec} disables both the dissemination layer and speculative execution; and
(ii) {\em No-Spec} enables dissemination but disables speculative execution.
%These are incremental implementation variants; leaderless selection, \PoA{}/\PoR{} certification, and the certificate-driven pacemaker remain enabled because disabling them would change the protocol rather than isolate an optimization.

In the $0 \rightarrow f \rightarrow 0$ case, 
a strongly connected component remains reachable, so the real impact comes from enabling or disabling the dissemination layer. 
Enabling dissemination improves partition-period throughput by $2.27{\times}$, 
because unselected proposals remain available and can be reused rather than discarded as wasted work.

In the $f \rightarrow f{+}1 \rightarrow f$ case, 
the middle interval lacks a strong quorum, so this experiment isolates the benefit of speculative execution. 
Without it, the {\em No-Diss-No-Spec} and {\em No-Spec} variants yield zero speculative throughput during the interval in which $f{+}1$ replicas are partitioned ($24$s to $32$s). 
The dissemination layer's benefits appear only once a strongly connected component exists, since by then transactions are already available for ordering.
}

\new{\bheading{Latency Breakdown.}
Next, we dissect \sysname{}'s latency to understand where it actually spends time. Roughly $~70\%$ of \sysname{}'s 
latency goes to forming a \PoR{} certificate---which requires collecting votes and disseminating the certificate---and to committing, which entails forming two consecutive \PoR{}s. 
Proposal dissemination and execution each contribute $10\%$, 
strongest-proposal election $6\%$, cryptography only $1\%$, and other overheads $3\%$. 
Thus, cryptography and proposal selection are not the bottleneck; 
latency is dominated by the commit path and by communication and certificate coordination.
}

\subsection{\new{Geo-Distributed Regional Partition}} 
\label{subsubsec:geo-eval}

\new{
Finally, in Figure~\ref{fig:eval_protocols}(h), we study the impact of deploying replicas in a geo-distributed setting: 
we deploy $n=31$ replicas across four regions, with regional group sizes of $8/8/8/7$. 
The experiment follows the path: normal connectivity $\rightarrow$ $f$ partitioned replicas $\rightarrow$ $f{+}1$ partitioned replicas $\rightarrow$ $f$ partitioned replicas $\rightarrow$ recovery. 
We distribute the $f$ partitioned replicas across the four regions as $1/1/1/6$, and 
the $f{+}1$ partitioned replicas as $1/1/1/7$, which renders one full region unavailable. 

WAN latency, bandwidth heterogeneity, and cross-region quorum communication reduce throughput and increase latency for all protocols, but the qualitative behavior matches the partition results above. 
During the $f$-partitioned phase, a quorum of $2f{+}1$ replicas remains available, 
so all protocols continue committing proposals, albeit with a decrease similar to that in \S\ref{subsubsec:partition-f}.
During the $f{+}1$-partitioned phase, all protocols except \sysname{} are unable to commit any blocks, 
whereas \sysname{} continues to speculatively execute blocks. 
After reconnection, cross-region quorum paths are restored and all protocols return to their steady-state performance.

}

\section{\new{Related Work}} \label{app:extended_related_work}
The design of \sysname draws inspiration from and builds upon several decades of research in BFT consensus. 
We categorize the most relevant literature into four primary domains: partition-tolerant consensus, sharding protocols, asynchronous BFT, and view synchronization.

\bheading{Network Partition.}
The challenge of maintaining consistency and availability under network partitions is fundamentally tied to the CAP theorem.
Traditional BFT protocols typically prioritize safety and stall indefinitely when a strong quorum ($2f{+}1$) cannot be assembled.
\textit{Partitionable Blockchain Consensus}~\cite{hood2020partitionable} was among the first to formalize this problem.  
It proves impossibility in a fully asynchronous model, then leverages external \textit{partition detectors} to coordinate the partition.  
Raptr~\cite{tonkikh2025raptr} introduces prefix consensus: replicas maintain safety by finalizing only the common transaction prefix. 
Then it reconciles divergent suffixes upon restoration.
More recently, AutoBahn~\cite{autobahn} preserves a limited form of partial progress under network disruptions, but this progress is largely confined to the dissemination layer rather than the ordering layer, and thus does not directly address \textit{designated-leader dependence}.
% In particular, data lanes may continue advancing, while ordering still depends on a leader-driven consensus step, which leaves it vulnerable to leader-mediated inclusion bias (e.g., selective omission). 

In contrast to these approaches, \sysname{} neither relies on external partition detectors nor confines partial progress to data dissemination alone.
Instead, its core ordering path allows partitions to independently order transactions that are later integrated through an implicit merge method.

\bheading{View Synchronization.}
View synchronization is fundamental for aligning replicas within a common round to ensure sustained consensus progress.
While classic protocols like PBFT~\cite{pbft} utilize an exponential back-off strategy for eventual convergence, HotStuff~\cite{hotstuff} decouples this logic into a dedicated \textit{pacemaker}.
Building on this modularity, state-of-the-art mechanisms have evolved to address the quadratic communication overhead inherent in traditional designs. Notable advancements include Lewis-Pye’s hierarchical, epoch-based framework~\cite{lewispye2022quadratic,fever}, which achieves amortized $O(n)$ communication complexity, and Lumiere~\cite{lewispye2024lumiere}, which further optimizes steady-state performance with optimistic responsiveness.

While these advancements significantly enhance steady-state performance, they remain fundamentally dependent on a $2f{+}1$ strong quorum of ``new-view'' messages to transition between rounds, leading to total system stalls during network partitions. 
Cassandra fills this operational void by introducing a novel pacemaker that facilitates round advancement through the evidence from $f{+}1$ replicas, thereby sustaining partial progress where other state-of-the-art pacemakers fail to coordinate.

\new{\bheading{Dual-Path Design BFT.}
A common theme in practical BFT systems is to expose an \textit{optimistic fast path} that achieves low latency and linear communication when the network is stable and the active leader is responsive, while retaining a \textit{fallback path} that preserves safety under faults and delays. 
}

\new{Prior dual-path BFT protocols~\cite{zyzzyva,fab,sbft} achieve low latency via an optimistic fast path but typically remain leader-driven in the common case and rely on leader-based recovery to fall back under adverse conditions.
DAG-based designs such as Bullshark~\cite{spiegelman2022bullshark} similarly incorporate a predefined leader in the synchronous path, while relying on a random coin to select a fallback leader for global convergence.
In contrast, although \sysname{} also employs a coin, it is used only as a tie-breaker when multiple proposals remain indistinguishable under the deterministic priority rule, rather than as a mechanism for leader election or round synchronization.  
As a result, \sysname{} removes designated-leader (and coin-driven leader-election) dependence from its core ordering path via partitionable leader selection and deterministic proposal priority, enabling components to continue making order-relevant partial progress even when global leader reachability or coin-driven coordination becomes ineffective under partitions.
}

\bheading{Sharding Protocols.}
Sharded systems aim to improve scalability by partitioning replicas into smaller committees that process disjoint subsets of state or transactions in parallel.  
Early systems such as Chainspace~\cite{chainspace2018} explored object-centric sharding with BFT committees and atomic object locking.
Building on this, OmniLedger~\cite{omniledger2018} utilized secure randomness for epoch-based committee assignment, while RapidChain~\cite{rapidchain} pipelines block production across shards.
% assigns clients and validators to epochs via secure randomness and runs a ByzCoinX-style BFT within each shard; atomic cross-shard payments are finalised with an optimistic commit–abort protocol.
% and uses erasure-coded block dissemination to reduce cross-committee bandwidth, achieving near-linear throughput under partial synchrony.  
Cerberus~\cite{cerberus2021} further evolved this by employing per-shard DAGs to eliminate single-leader bottlenecks.
Extending this line of work, ByShard~\cite{byshard} optimizes this paradigm by introducing a sharded database framework that efficiently integrates BFT consensus with atomic cross-shard coordination. RCanopus~\cite{keshav2019rcanopus} geographically divides replicas into Byzantine Groups (BGs) that run local BFT, while a super-consensus orders the BG outputs.

While these systems significantly improve horizontal scalability, they still inherently stall if fewer than $2f{+}1$ replicas remain in a shard after a partition. 
\sysname’s partial-progress technique is \textit{orthogonal}. 
It could serve as an intra-shard consensus component, allowing each committee to process local transactions with non-zero throughput even when a partition leaves only $f{+}1$ honest replicas connected.

\bheading{Concurrent and Multi-Leader BFT.}
Recent work~\cite{rcc,spotless,iss,ladon,orthrus,hydra,mirbft,dqbft} has also explored allowing multiple replicas to concurrently drive consensus, which helps reduce the bottleneck of a single designated leader under stable conditions.
At a high level, this design may appear to mitigate \textit{leader dependence}, since progress is no longer tied to one pre-assigned proposer.
However, these protocols still require global coordination to reconcile concurrent proposals and remain constrained by the \textit{strong quorum dependence}.
While they improve steady-state throughput, they do not directly address the partition setting considered in \sysname: failing to make useful progress during network partitions.

\bheading{Asynchronous Protocols.}
Fully asynchronous protocols, such as HoneyBadgerBFT~\cite{miller2016honey} and BEAT~\cite{duan2018beat}, eliminate timing assumptions by combining reliable broadcast with threshold cryptography.
Recent advancements in DAG-based BFT, including DAG-Rider~\cite{keidar2021dagrider} and Tusk~\cite{narwhal-tusk}, further optimize performance by decoupling data dissemination from ordering, achieving optimal resilience and amortized $O(n^2)$ complexity.
Fides~\cite{xie2025fides} further enhances this model by offloading critical operations into trusted hardware for an optimized threshold and better performance.

Despite their resilience to asynchrony, these protocols remain bound by the \textit{Strong Quorum Dependence}: none can make progress once fewer than a quorum of replicas are reachable, since they still require a classical quorum to form certificates.
In contrast, \sysname addresses this by allowing a weaker quorum to construct weaker certificates, achieving non-zero progress during weak quorums---a breakthrough from 0 to 1.

\section{Conclusion}
We presented \sysname{}, a BFT protocol that \new{breaks the all-or-nothing progress model of traditional consensus by enabling partial, safety-preserving progress} even when no strong quorum is reachable.
\new{\sysname{} achieves this through three key design ideas: a two-tier certification framework that decouples partial progress (via weak-quorum) from final commitment (via strong-quorum), a deterministic proposal-priority rule that allows any connected component of $f{+}1$ correct replicas to select a non-predefined leader and advance ordering independently, and a decoupled pacemaker that drives round progression via certificates rather than synchronized timeouts.
Together, these mechanisms enable isolated partitions to accumulate useful ordering progress that is implicitly reconciled once the network heals, eliminating the need for an explicit recovery protocol.
Our evaluation shows that \sysname{} remains competitive with state-of-the-art BFT protocols under stable conditions, while delivering continuous partial throughput during partitions and reducing wasted work upon reconnection.
}

\section{Acknowledgements}
This work is partially funded by NSF Award Number 2245373.

\newpage
%% 
%% The next two lines define the bibliography style to be used, and
%% the bibliography file.
\bibliographystyle{ACM-Reference-Format}
\bibliography{reference}
\clearpage

\appendix
\setcounter{theorem}{0}
\section{Correctness Proof}
\label{app:full_proofs}

This section provides the full proofs of Theorems~\ref{thm:safety}, \ref{thm:liveness}, and~\ref{thm:partial_progress}.

\subsection{Safety}
\label{sec:safety}

We assume a Byzantine fault model with $n=3f{+}1$ replicas, where at most $f$ replicas are Byzantine.
A \emph{strong quorum} requires $Q=2f{+}1$ votes, and a \emph{weak quorum} requires $Q_{\mathrm{weak}}=f{+}1$ votes.
We say two proposals \emph{conflict} if neither is an ancestor of the other.

\begin{lemma}[Quorum Intersection]\label{lem:quorum_intersection}
Any two strong quorums of size $2f{+}1$ intersect in at least $f{+}1$ replicas; in particular, they share at least one correct replica.
\end{lemma}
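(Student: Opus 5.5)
The plan is a direct counting argument by inclusion--exclusion over the replica set, followed by a pigeonhole step on the Byzantine replicas. Let $Q_1$ and $Q_2$ be two strong quorums, i.e., sets of distinct replicas with $|Q_1| \ge 2f{+}1$ and $|Q_2| \ge 2f{+}1$, both drawn from the $n = 3f{+}1$ replicas of the system model.

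First, we bound the intersection from below. Since $Q_1 \cup Q_2$ is contained in the full replica set, $|Q_1 \cup Q_2| \le n = 3f{+}1$. Inclusion--exclusion then gives
\[
|Q_1 \cap Q_2| = |Q_1| + |Q_2| - |Q_1 \cup Q_2| \ge 2(2f{+}1) - (3f{+}1) = f{+}1 ,
\]
which is the first claim.

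Second, we deduce the existence of a correct replica in the intersection. By the adversary model at most $f$ replicas are Byzantine. So if every member of $Q_1 \cap Q_2$ were Byzantine, we would have $|Q_1 \cap Q_2| \le f$, contradicting the bound $f{+}1$. Hence at least one replica in $Q_1 \cap Q_2$ is correct.

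The only point needing care is that a quorum is a set of \emph{distinct} signers. This is what lets us treat $|Q_1|$ and $|Q_2|$ as set cardinalities. It is guaranteed by the authenticated-communication assumption: Byzantine replicas cannot forge signatures, so a certificate's $2f{+}1$ shares come from $2f{+}1$ distinct replicas. Also, if $n > 3f{+}1$ while the quorum size stays $2f{+}1$, the argument fails. The statement should therefore be read under the appendix's convention $n = 3f{+}1$; for larger $n$ one would instead use quorums of size $n{-}f$, and the same computation gives $2(n{-}f) - n = n - 2f \ge f{+}1$. I expect no genuine obstacle beyond fixing this convention.
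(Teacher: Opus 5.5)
Your proof is correct and matches the paper's own argument: inclusion--exclusion with $n=3f{+}1$ gives $|Q_1\cap Q_2|\ge f{+}1$, and since at most $f$ replicas are Byzantine, the intersection contains a correct replica. Your remarks about distinct signers and about the $n>3f{+}1$ case (using quorums of size $n{-}f$) are sound clarifications that the paper leaves implicit.
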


\begin{proof}
Let $Q_1$ and $Q_2$ be two strong quorums with $|Q_1|=|Q_2|=2f{+}1$.
Since $n=3f{+}1$,
\[
|Q_1\cap Q_2| \ge |Q_1|+|Q_2|-n = (2f{+}1)+(2f{+}1)-(3f{+}1)=f{+}1.
\]
As at most $f$ replicas are Byzantine, $Q_1\cap Q_2$ contains at least one correct replica.
\end{proof}

\begin{lemma}[No Double Vote]\label{lem:no_double_vote}
A correct replica votes for at most one proposal in a given round.
\end{lemma}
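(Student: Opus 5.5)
The argument is a direct inspection of the round logic in Figure~\ref{fig:round_logic}, together with the state-handling in \textsc{EnterRound} (Figure~\ref{fig:utility_helper}). The plan is to show that a correct replica $p_i$ can emit at most one $\textsc{Vote}(r,\cdot)$ message for a fixed round $r$. Since a vote names a single hash, and collision resistance of $\Hash{\cdot}$ makes that hash identify a single proposal, this gives the lemma.

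First, I will identify every code path on which a correct replica broadcasts a \textsc{Vote}. Only the handler for the $\textit{ExchangeTimer}[r]$ timeout does so, and it votes only for the current round $r = r_i$. The fast path also sends vote shares to the collector $L_{r+1}$. I would treat that vote as the same per-round vote, recorded through the same $\textit{Voted}_i[r]$ flag, since the paper states that the fast path changes only the message-collection pattern and not the voting rule. This is the one place where I would make an explicit modelling assumption: a replica that has already cast a fast-path vote in round $r$ does not cast a second base-path vote for a different proposal in $r$.

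Next, I will argue that the guard ensures at most one vote per round.

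1. The handler returns immediately if $\textit{Voted}_i[r]$ is true.
2. Otherwise it votes once and then sets $\textit{Voted}_i[r] \gets$ True.
3. The flag is reset to False only inside \textsc{EnterRound}$(r',\cdot)$, and only for $r' = r_i$ after the assignment $r_i \gets r'$.
4. \textsc{EnterRound} rejects any $r' \le r_i$, so $r_i$ is strictly increasing.
5. Hence $\textit{Voted}_i[r]$ is initialised exactly once, at the unique moment $p_i$ enters round $r$, and is never reset afterwards.

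Stale timers do not break this. A timer for an old round $r < r_i$ that fires is rejected by the check $r \ne r_i$. Leaving round $r$ cancels its timers, and re-entering round $r$ is impossible.

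Finally, within that single voting event, \textsc{SelectStrongest} returns exactly one proposal $P^{sp}$, and the replica broadcasts exactly one vote for $\text{hash}(P^{sp})$. Therefore a correct replica casts at most one vote per round. I expect the main obstacle to be the interaction with the fast path and fallback: one must verify that falling back to the base path within the same round does not permit a second vote. I would handle this by noting that the $\textit{Voted}_i[r]$ flag is shared across both paths.
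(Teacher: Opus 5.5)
Your proof is correct and uses the same protocol-inspection argument as the paper's (a correct replica selects a single strongest proposal per round and votes only for it); you make it rigorous through the $\textit{Voted}_i[r]$ guard and the strict increase of $r_i$ enforced by \textsc{EnterRound}, both of which the paper's one-line proof leaves implicit. The paper's proof ignores the fast path entirely, so your shared-flag assumption is your own addition and it is genuinely needed: under the fallback as literally described (the base path ``resumes within the same round'' and replicas participate in its election phase), a replica that already sent a fast-path vote share for $b_r$ could otherwise cast a second vote for a different $SP_r$ in round $r$.
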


\begin{proof}
Correct replicas follow the protocol.
In each round $r$, a correct replica selects a single candidate proposal according to the priority rule and broadcasts \textsc{Vote} messages for exactly that proposal.
\end{proof}

\begin{lemma}[\PoR{} Uniqueness]\label{lem:unique_cert}
For any round $r$, there cannot exist two valid \PoR{} certificates that certify conflicting proposals.
\end{lemma}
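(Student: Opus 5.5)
The plan is a proof by contradiction that combines Lemma~\ref{lem:quorum_intersection} with Lemma~\ref{lem:no_double_vote}. Suppose that for some round $r$ there are two valid \PoR{} certificates, $\PoR{}_r(b)$ and $\PoR{}_r(b')$, certifying proposals $b$ and $b'$ that conflict. In particular $b \neq b'$, since conflicting proposals are distinct: neither is an ancestor of the other.

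First I will fix what a valid \PoR{} means. By the threshold-signature setup and the certificate-building step in the round logic, a valid round-$r$ \PoR{} for a proposal with hash $h$ is an aggregate of \textsc{Vote}$(r,h)$ messages from $Q = 2f{+}1$ distinct replicas. Unforgeability of signatures and threshold shares gives two facts. First, each share attributed to a correct replica really was produced by that replica. Second, it was produced as a vote for exactly $(r,h)$. Collision resistance of $\Hash{\cdot}$ gives a third fact: since $b \neq b'$, we have $\Hash{b} \neq \Hash{b'}$. So a vote for $\Hash{b}$ and a vote for $\Hash{b'}$ are votes for two different proposals.

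Next, let $V$ and $V'$ be the voter sets underlying the two certificates, with $|V| = |V'| = 2f{+}1$. By Lemma~\ref{lem:quorum_intersection}, $V \cap V'$ contains at least one correct replica $p$. That replica then cast a round-$r$ vote for $b$ and a round-$r$ vote for $b'$, two distinct proposals. This contradicts Lemma~\ref{lem:no_double_vote}. Hence no two valid \PoR{}s for conflicting proposals exist in the same round, and indeed the argument shows there are no two \PoR{}s in round $r$ for distinct proposals at all.

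The main obstacle is justifying that Lemma~\ref{lem:no_double_vote} really covers every vote a correct replica can emit in round $r$. This matters because of the dual-path design. A replica might send a fast-path vote share for the leader's $b_r$ to the collector, then fall back to the base path in the same round and broadcast a \textsc{Vote} for the strongest proposal $SP_r$. I would handle this by reading the per-round flag $\textit{Voted}_i[r]$ as shared by both paths. On that reading, a replica that has voted on the fast path votes again in the base path only for the same hash, and never for a different one; the fallback is a change in message-collection pattern, not a second vote. With that reading fixed, the intersection argument goes through unchanged for both fast-path and base-path \PoR{}s. The same intersection argument also bounds Byzantine equivocation, since Byzantine replicas alone contribute at most $f < 2f{+}1$ shares to any certificate.
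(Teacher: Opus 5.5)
Your proof is correct and follows the paper's argument: assume two round-$r$ \PoR{}s for conflicting proposals, use Lemma~\ref{lem:quorum_intersection} to find a correct replica in both $2f{+}1$ voter sets, and contradict Lemma~\ref{lem:no_double_vote}. The paper simply cites Lemma~\ref{lem:no_double_vote} without discussing the fast path, so your dual-path remark is extra care rather than a required step. One small correction there: with a shared $\textit{Voted}_i[r]$ flag, the base-path \textit{ExchangeTimer} handler would skip voting altogether rather than re-vote for the same hash, which still gives you what you need.
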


\begin{proof}
Assume for contradiction that there exist two valid \PoR{}s in round $r$ certifying conflicting proposals $b$ and $b'$.
Each \PoR{} requires $2f{+}1$ votes.
By Lemma~\ref{lem:quorum_intersection}, the two voter sets intersect in at least one correct replica.
That replica would have voted for both $b$ and $b'$ in round $r$, contradicting Lemma~\ref{lem:no_double_vote}.
\end{proof}

\begin{lemma}[Lock Monotonicity]\label{lem:lock_monotonicity}
For any correct replica $p_i$, its lock $lock_i$ is monotonically non-decreasing in round number.
Moreover, once $p_i$ observes a \PoR{} for a proposal, it locks that proposal and will not vote for any conflicting proposal unless it later observes a strictly higher-round \PoR{} on the competing proposal's certified ancestry.
\end{lemma}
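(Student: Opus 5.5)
The proof goes by inspecting the only places in the pseudocode where $lock_i$ is written, and showing that each such update satisfies both claims. Initially $lock_i$ is \emph{genesis}, which has round $0$. After that, $lock_i$ is modified only inside \textsc{MaybeAdvanceLockAndCommit}. That procedure is called in two places: from the \textsc{Propose} handler, with the carried $\widehat{PoR}$, and from the \textsc{Vote} handler, after a \PoR{} is built locally. In both cases the argument is $\widehat{PoR}_i$ or a \PoR{} that has just been merged into it through \textsc{UpdateHiCert}.

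\textbf{Monotonicity.} \textsc{UpdateHiCert} replaces $\widehat{PoR}_i$ only when $\text{round}(cert) > \text{round}(\widehat{PoR}_i)$. So the sequence of highest \PoR{}s known to $p_i$ is non-decreasing in round. The lock is defined as ``the latest safely locked proposal,'' that is, the highest \PoR{}-certified proposal observed (Locking Rule, \S\ref{sec:commitment}). I will read the update in \textsc{MaybeAdvanceLockAndCommit} as $lock_i \gets \arg\max$ by round of the old lock and the newly certified proposal. Under that reading, the round of $lock_i$ never decreases, which gives the first claim. This step is essentially definitional. I will state this reading explicitly, since the pseudocode phrase ``update $lock_i$ to the latest safely locked proposal'' leaves it implicit.

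\textbf{Locking and conflicting votes.} Suppose $p_i$ observes a \PoR{} for proposal $b$ in round $r$. By the monotone update, after this point $\text{round}(lock_i) \ge r$, and $lock_i$ is either $b$ or a later \PoR{}-certified proposal. Now consider any later vote by $p_i$. Votes are cast only in the \textit{ExchangeTimer} handler, and only for some $P^{sp} \in \mathcal{P}^{safe}_i[r]$, so \textsc{SafeToVote}$(P^{sp})$ holds. By the definition of \textsc{SafeToVote}, either (i) $P^{sp}$ extends $lock_i$, or (ii) $P^{sp}$ carries a \PoR{} from a round strictly higher than $\text{round}(lock_i)$.
\begin{itemize}
\item In case (i), if $lock_i$ is $b$ or a descendant of $b$, then $P^{sp}$ extends $b$ and does not conflict with it.
\item In case (ii), $p_i$ has observed, through the proposal's $\widehat{PoR}$ field, a \PoR{} of round greater than $\text{round}(lock_i) \ge r$ on the competing proposal's certified ancestry. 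This is exactly the exception allowed in the statement.
\end{itemize}
Either way, $p_i$ votes for a proposal conflicting with $b$ only after seeing such a higher-round \PoR{}.

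\textbf{Main obstacle.} The difficult case is case (i) when $lock_i$ has been moved to a higher-round \PoR{} on a branch that conflicts with $b$. In that situation, extending $lock_i$ does not mean extending $b$. I will handle it by induction over the sequence of lock updates. A lock can move from $b$ to a conflicting branch only through a \PoR{} whose round is strictly higher than the current lock, because the update takes the maximum by round. Such a move is itself an observation of a strictly higher-round \PoR{} on the competing ancestry. So every vote conflicting with $b$ is preceded, at $p_i$, by an event of the permitted kind, and the claim holds.

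A second subtlety is that the \textsc{Propose} handler calls \textsc{UpdateHiCert} before \textsc{SafeToVote}. The lock may therefore be advanced by the very proposal being evaluated. This is harmless: it is just another instance of the same update rule, so the induction covers it.
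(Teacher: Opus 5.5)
Your proof is correct and follows the paper's argument. Monotonicity comes from the lock-update rule, and your explicit $\arg\max$ reading is the one the Locking Rule supports (``highest \PoR{}-certified proposal it has observed''); the conflicting-vote claim comes from the two \textsc{SafeToVote} cases, and your induction over lock updates only makes explicit the case of a lock that has already moved to a conflicting branch, which the paper glosses over. Like the paper, which invokes ``protocol validity'' at exactly this point, you assume in case (ii) that the carried \PoR{} lies on the proposal's ancestry. The pseudocode sets $parent$ to the local log tail and $\widehat{PoR}$ to the highest \PoR{} ever seen, with nothing forcing them onto one branch, so you should state this as an explicit validity check.
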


\begin{proof}
Whenever $p_i$ observes a valid \PoR{}$_r$ for proposal $b_r$, it updates $lock_i$ to $b_r$ unless it is already locked on a descendant of $b_r$.
Thus $\mathsf{round}(lock_i)$ can only increase over time.

After locking, \textsc{SafeToVote} allows $p_i$ to vote only for a proposal that either
(i) extends the current locked branch, or
(ii) carries a strictly higher-round \PoR{} that justifies switching.
By protocol validity, any such higher-round \PoR{} must lie on the proposal's certified ancestry.
Therefore, $p_i$ never votes for a conflicting proposal unless it has first observed higher certified evidence that justifies the switch.
\end{proof}

\begin{lemma}[Commit Implies a Locked Set]\label{lem:commit_implies_locked_set}
If a proposal $w=b_r$ is committed by the Two-\PoR{} commit rule via a certified chain
$b_r \leftarrow b_{r+1}$ with \PoR{}s $\PoR_r$ and $\PoR_{r+1}$,
then there exists a set $L$ of at least $f{+}1$ correct replicas such that every replica in $L$ is locked on $w$ (or its descendants).
\end{lemma}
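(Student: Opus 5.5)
The plan is to take $L$ to be the correct voters of $\PoR_{r+1}$ and argue that each of them is locked on a descendant of $w$. Because $\PoR_{r+1}$ certifies $b_{r+1}$ with a strong quorum of $2f{+}1$ votes, and at most $f$ replicas are Byzantine, at least $f{+}1$ of those voters are correct. So it suffices to show that every correct voter of $b_{r+1}$ is, from the moment of its vote onward, locked on $w$ or on a descendant of $w$.

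The first step establishes that each correct voter of $b_{r+1}$ has observed $\PoR_r$. A round-$(r{+}1)$ proposal must carry an \textit{enterCert} with \textsc{ValidEnterCert}, which is either a \PoR{} or an \RC{} for round $r$. Since $b_{r+1}$ extends $b_r$ and the commit rule requires this to be a direct \PoR{}-chain, I will use the protocol's construction: the proposer sets \textit{parent} to the tail of its log, and the parent is \PoR{}-certified. Hence $b_{r+1}$ carries $\PoR_r$ as its $\widehat{PoR}$, and possibly also as its \textit{enterCert}.

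A correct replica receiving $b_{r+1}$ first runs \textsc{UpdateHiCert} and then \textsc{MaybeAdvanceLockAndCommit} on $P_r.\widehat{PoR}$. Only after that does it evaluate \textsc{SafeToVote} and cast its vote. Therefore, by the time it votes for $b_{r+1}$, it has observed $\PoR_r$ for $w$. By Lemma~\ref{lem:lock_monotonicity}, it is then locked on $w$, or it is already locked on a descendant of $w$.

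The second step handles the persistence of this lock, again using Lemma~\ref{lem:lock_monotonicity}. The lock's round only increases. A correct replica changes its lock only when it observes a valid higher-round \PoR{}, and that \PoR{} lies on the ancestry of the proposal it is voting for. I would also argue that the lock update when observing $\PoR_{r+1}$ itself moves the lock to $b_{r+1}$, which is a descendant of $w$. Thus, at the moment the commit condition holds, every replica in $L$ is locked on $w$ or a descendant, which is what the statement asserts.

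The main obstacle is the first step: justifying rigorously that a correct voter of $b_{r+1}$ has actually processed $\PoR_r$ before voting. The pseudocode only says that proposals carry the highest \PoR{}, and the parent-link/certificate relationship is informal. My first approach would be to extract from the proposal format the invariant that any valid proposal extending a \PoR{}-certified parent in round $r$ carries that $\PoR_r$, and that the lock update precedes \textsc{SafeToVote} in the receive handler. If that invariant is too weak, the fallback is to use \textsc{ValidEnterCert} in the case where the \textit{enterCert} is $\PoR_r$ itself. The remaining case, where entry into round $r{+}1$ was via an \RC{}, would then require the $\widehat{PoR}$ field.
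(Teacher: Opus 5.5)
Your proposal is correct and follows the paper's proof. You take $L$ to be the (at least $f{+}1$) correct voters of $\PoR_{r+1}$, argue that each has observed $\PoR_r$ before voting for $b_{r+1}$, and conclude with Lemma~\ref{lem:lock_monotonicity}. The paper simply asserts the middle step as ``protocol validity'', which is the same invariant you correctly flag as the main obstacle, since a Byzantine proposer's $b_{r+1}$ carries $\PoR_r$ only if validity demands it.

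Your extra persistence step is unnecessary, because the paper only establishes the lock at voting time. Taken literally, it would also need the earliest-conflicting-\PoR{} argument from the safety theorem to rule out a later switch to a conflicting branch, so it is best dropped.
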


\begin{proof}
Let $Q_{r+1}$ be the voter set that formed $\PoR_{r+1}$ for $b_{r+1}$.
Since $|Q_{r+1}|=2f{+}1$ and at most $f$ replicas are Byzantine, $Q_{r+1}$ contains at least $f{+}1$ correct replicas; let $L$ be those correct voters.

By protocol validity, a correct replica votes for $b_{r+1}$ only after verifying that it extends $b_r$ and that its certified justification includes the parent certificate $\PoR_r$ on $b_r$ (carried either directly as the round-entry certificate or as the highest justified \PoR{} on $b_{r+1}$'s certified ancestry).
Hence, every replica in $L$ has observed $\PoR_r$ before voting for $b_{r+1}$.
By Lemma~\ref{lem:lock_monotonicity}, each such replica locks on $w=b_r$ (or a descendant on the same branch).
\end{proof}

\SafetyTheorem*
\begin{proof}[Proof of Theorem~\ref{thm:safety}]
Assume for contradiction that two correct replicas commit conflicting proposals $w$ and $b$, with $\mathsf{round}(w) < \mathsf{round}(b)$.

Since $w$ is committed, by Lemma~\ref{lem:commit_implies_locked_set}, there exists a set $L$ of at least $f{+}1$ correct replicas locked on $w$ (or its descendants).

Since $b$ is committed, there must exist at least one \PoR{} that certifies a proposal conflicting with $w$ in some round after $\mathsf{round}(w)$.
Let $\PoR^\star$ be the earliest such \PoR{}, certifying a proposal $x^\star$ that conflicts with $w$, and let $Q^\star$ be its voter set.

Because $|Q^\star|=2f{+}1$ and $|L|=f{+}1$, we have
\[
|Q^\star \cap L| \ge (2f{+}1)+(f{+}1)-(3f{+}1)=1.
\]
Thus, $Q^\star$ intersects $L$ in at least one correct replica $p$.

Replica $p$ is locked on the branch of $w$.
For $p$ to vote for $x^\star$, \textsc{SafeToVote} requires that $x^\star$ either extends $p$'s locked branch or carries a strictly higher-round \PoR{} on $x^\star$'s certified ancestry that justifies switching.
However, $x^\star$ conflicts with $w$, so it does not extend the locked branch.
Moreover, any such higher-round \PoR{} on $x^\star$'s ancestry would itself certify a conflicting proposal earlier than $\PoR^\star$, contradicting the choice of $\PoR^\star$ as the earliest conflicting \PoR{} after $w$.
Therefore, $p$ cannot vote for $x^\star$, contradicting $p \in Q^\star$.

Hence, no two correct replicas can commit conflicting proposals.
\end{proof}

\subsection{Liveness} \label{sec:liveness}

\begin{definition}[Timeout-Sufficient Replica]\label{def:timeout_sufficient}
\new{Fix a stabilized synchronous interval whose actual message delay bound is $\Delta$.
A correct replica $p_i$ is \textit{timeout-sufficient} during that interval if its local timeout baseline satisfies $\delta_i \ge \Delta$.
}\end{definition}

\begin{lemma}[Eventual Timeout Sufficiency]\label{lem:eventual_timeout_sufficiency}
\new{During any sufficiently long stabilized synchronous interval in which at
least $2f{+}1$ correct replicas are mutually connected, every correct replica
eventually becomes timeout-sufficient.
}\end{lemma}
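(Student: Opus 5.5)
The plan is to analyse the background calibration protocol of Fig.~\ref{fig:timeout_calibration_protocol} in isolation. I will use only its own messages ($\mathsf{SYNC\mbox{-}READY}$, $\mathsf{SYNC\mbox{-}CERT}$), the retransmission rule, and the doubling rule $\delta_i \gets 2\delta_i$. Rounds and consensus messages play no role here.

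Fix the stabilized interval and let $C$ be a set of at least $2f{+}1$ mutually connected correct replicas with delay bound $\Delta$. The argument is a potential-function argument. Each failed attempt at a replica $p_i\in C$ whose $\delta_i<\Delta$ doubles $\delta_i$. Once $\delta_i\ge\Delta$, we would like an attempt in which $p_i$ forms its own $\mathsf{SYNC\mbox{-}CERT}$ to succeed. So the proof splits into three claims.

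\textbf{(i) Attempts recur.} Replicas in $C$ keep running attempts and eventually agree on a common synchronization view.

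\textbf{(ii) Every attempt concludes at $p_i$.} Once views are aligned, each attempt at $p_i$ ends in one of two ways: success (no change to $\delta_i$, apart from the optional halving) or failure (doubling).

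\textbf{(iii) Failures stop once the timeout is large.} An attempt started with $\delta_i\ge\Delta$ cannot fail.

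For (iii), I note that a $\mathsf{SYNC\mbox{-}CERT}$ is sent as soon as a replica holds $2f{+}1$ $\mathsf{SYNC\mbox{-}READY}$ messages for $\nu$. Suppose $p_i$ collected them at time $t$. It forwards nothing extra, but the honest senders' $\mathsf{SYNC\mbox{-}READY}$ messages are broadcast and retransmitted. Hence every member of $C$ holding view $\nu$ also has $2f{+}1$ of them by roughly $t+\Delta$, and its $\mathsf{SYNC\mbox{-}CERT}$ reaches $p_i$ by about $t+2\Delta$.

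This exposes a factor-of-two gap relative to the timer $\delta_i\ge\Delta$. I would close it in one of two ways:
\begin{itemize}
\item measure ``timeout-sufficient'' against the round-trip quantity the timer actually tests, or
\item continue doubling: failures can only occur while $\delta_i<2\Delta$, so they happen finitely often, and afterwards $\delta_i\ge\Delta$ holds permanently.
\end{itemize}
The second route is cleaner. After at most $\lceil\log_2(2\Delta/\delta_i^{0})\rceil$ failures, $\delta_i\ge\Delta$. I must also ensure the optional decrease cannot drop $\delta_i$ below $\Delta$ forever. Halving only happens on success within $\delta_i/\alpha$, and I need $\delta_i/\alpha\ge 2\Delta$ to follow from such a success. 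That fails in general, so I would argue differently. The MIMD dynamics oscillate at most between $\delta$ and $2\delta$ around the threshold, and any failure at $\delta_i<\Delta$ immediately restores the doubling. Alternatively, I would take the conservative reading that the lemma concerns the protocol with decrease disabled or with $\delta_{\min}\ge\Delta$. I will state explicitly which reading I use.

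The main obstacle is (i), view convergence. Replicas start attempts at different rounds $r \bmod K=0$, and failing replicas increment $\nu_i$ on their own. The join rule (adopt the minimum view $\nu'>\nu_i$ supported by $f{+}1$ $\mathsf{SYNC\mbox{-}READY}$ messages) is an amplification step in the style of Bracha. My approach has three parts:
\begin{enumerate}
\item Let $\nu^\ast$ be the maximal view held by any correct replica in $C$ at some point after $\textsf{GST}_k$; the doubling argument shows only finitely many increments are caused by members of $C$ with small $\delta$.
\item Argue that $f{+}1$ correct supporters of a view pull every member of $C$ into it within $\Delta$.
\item Conclude that at least $2f{+}1$ replicas then share the view, so step (ii)'s precondition holds.
\end{enumerate}
The gap is when only $\le f$ correct replicas hold the highest view. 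In that case they are stuck, retransmitting until the others catch up. The others do so either by failing and incrementing, or by starting new attempts every $K$ rounds. Making this precise requires progress of rounds, which is circular with liveness. I would therefore try to show that increments alone (from retransmission-triggered failures) suffice to climb to the leaders' view.
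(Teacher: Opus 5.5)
\paragraph{Verdict.} Your skeleton is the same as the paper's proof, but both of the places where you go beyond it leave genuine gaps.

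\paragraph{Comparison with the paper.} The shared skeleton is: retransmit until a local $\mathsf{SYNC\mbox{-}CERT}$ forms, double $\delta_i$ on every failure, succeed once $\delta_i$ is large, and conclude after finitely many retries. The paper handles your step (i) in one sentence: retransmission among the $2f{+}1$ connected correct replicas yields $2f{+}1$ READYs for some view. It counts the $\Delta+\Delta$ round trip as fitting inside $\delta_i\ge\Delta$, and it ignores the optional halving. So your extra caution is warranted. The proposal, however, does not close these points.

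\paragraph{Gap in step (i): view convergence.} The route you sketch cannot work, because retransmission never changes $\nu_i$. A view moves only in three ways:
\begin{itemize}
\item through the $f{+}1$ join rule;
\item through $\mathsf{SyncTimeout}$ when $\textit{Calibrating}_i$ is False;
\item on expiry of $\mathit{SyncTimer}[\nu]$, which is started only after the replica itself holds $2f{+}1$ READYs for $\nu$.
\end{itemize}
A replica parked in an under-supported view therefore runs no $\mathit{SyncTimer}$ for that view and never fails in it. It stays put until $f{+}1$ READYs for a higher view reach it, so there are no ``retransmission-triggered failures''.

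The circularity you fear is also absent. Rounds advance through \RC{}s, which need only $f{+}1$ \textsc{WishNewRound} messages and eventual delivery, not timeout sufficiency; this is exactly Lemma~\ref{lem:rc_in_weak_component}. Hence the every-$K$-rounds $\mathsf{SyncTimeout}$ trigger keeps being reached, at least by replicas that complete those rounds, independently of $\delta_i$. A convergence argument has to be built from these round-driven increments together with the join rule.

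Note also that (ii) holds for the prose rule but not for the figure you chose to analyse. Suppose $\mathit{SyncTimer}[\nu]$ fires with exactly $f{+}1$ certificates (your own plus $f$ remote). Then neither $|CertSet[\nu]|\le f$ nor $|CertSet[\nu]|>f{+}1$ holds. The attempt neither succeeds nor fails, and $p_i$ stays calibrating.

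\paragraph{Gap in the doubling step.} Your chain ``failures occur only while $\delta_i<2\Delta$, hence finitely often, hence eventually $\delta_i\ge\Delta$'' does not follow. Counting failures bounds how many doublings would suffice, but nothing forces those failures to happen. $\Delta$ is only an upper bound. An attempt succeeds with $\delta_i<\Delta$ whenever its realized round trip is shorter than $\delta_i$, and $\delta_i$ then stays put or is halved.

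What the lemma needs is therefore not your (iii) but its converse: a replica with $\delta_i<\Delta$ eventually fails. The protocol does not guarantee this, and the paper's proof assumes it silently (``as long as $\delta_i<\Delta$, repeated failures cause $\delta_i$ to grow'').

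Claim (iii) is in fact unnecessary, since failures only raise $\delta_i$. It also does not hold as argued. Up to $f$ of the $2f{+}1$ READYs that $p_i$ holds may come from Byzantine replicas that sent them only to $p_i$. Then $p_i$ can form its certificate before the rest of $C$ is in view $\nu$ and fail even with a large $\delta_i$.

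\paragraph{What would fix it.} To obtain the statement you need an extra hypothesis. One option is to assume that the calibration round trip actually experiences delays near $\Delta$. Another, as in your first bullet, is to measure sufficiency against what the timer really tests and carry that weaker notion into the good-round argument. In either case you also need the halving disabled or $\delta_{\min}\ge\Delta$.
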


\begin{proof}
\new{Fix such a stabilized synchronous interval with actual delay bound $\Delta$.
Consider any correct replica $p_i$.
}

\new{If $p_i$ cannot yet collect enough $\langle \mathsf{SYNC\mbox{-}READY}, \nu \rangle$ messages to locally form a synchronization certificate, it keeps retransmitting its $\langle \mathsf{SYNC\mbox{-}READY}, \nu \rangle$ message.
Since at least $2f{+}1$ correct replicas are mutually connected and all messages among them are delivered within $\Delta$, eventually every correct replica receives $2f{+}1$ valid \textsc{SyncReady} messages for some synchronization view and forms a local $\langle \mathsf{SYNC\mbox{-}CERT}, \nu \rangle$.
}

\new{If replica $p_i$ locally forms a synchronization certificate but does not receive certificates from at least $f{+}1$ distinct replicas within time $\delta_i$, it doubles its $\delta_i$ and retries in a higher synchronization view.
Hence, as long as $\delta_i < \Delta$, repeated failures cause $\delta_i$ to grow exponentially.
}

\new{Once $\delta_i \ge \Delta$, after $p_i$ forms its local synchronization certificate, every correct replica that is already in the same synchronization view can form and broadcast its own synchronization certificate within at most $\Delta$, and that certificate is delivered back to $p_i$ within at most another $\Delta$.
Therefore, replica $p_i$ receives remote synchronization certificates from at least $f{+}1$ distinct replicas within time $\delta_i$, so the calibration attempt succeeds.
}

\new{Thus, after finitely many retries, every correct replica reaches a timeout baseline $\delta_i \ge \Delta$ and remains timeout-sufficient throughout the stabilized interval.
}\end{proof}

\begin{definition}[Good round]\label{def:good_round}
A round $r$ is \emph{good} if there exists a stabilized synchronous interval of length $\ge3\Delta$, partitioned into three consecutive subintervals $I_1$, $I_2$, and $I_3$ each of length $\ge\Delta$, such that:
(i) by the end of $I_1$, all correct replicas have entered round $r$ and are timeout-sufficient;
(ii) during $I_1 \cup I_2$, every proposal broadcast by a correct replica in round $r$ is delivered to every correct replica; and
(iii) during $I_3$, all correct replicas remain in round $r$, and every vote broadcast by a correct replica in round $r$ is delivered to every correct replica.
\end{definition}

\begin{lemma}[Good round exists]\label{lem:good_round_exists}
During any sufficiently long stabilized synchronous period, the pacemaker ensures that all correct replicas eventually synchronize their round counters and enter good rounds.
\end{lemma}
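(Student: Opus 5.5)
We prove the lemma in three stages, working inside a stabilized synchronous period that is long enough. First, we apply Lemma~\ref{lem:eventual_timeout_sufficiency}. It gives a time $T_0$ after which every correct replica satisfies $\delta_i \ge \Delta$. Timeouts only double on failure, and once $\delta_i\ge\Delta$ every calibration attempt succeeds. So the only later change is the optional halving rule. We handle it by either bounding it with $\delta_{\min}$ or observing that any under-sized value is doubled back within one calibration attempt. In either case, after $T_0$ all correct replicas stay timeout-sufficient, apart from a bounded number of calibration windows that can be skipped.

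Second, we show that round counters converge after $T_0$. Let $r^\ast$ be the highest round entered by any correct replica at some time $t\ge T_0$. A correct replica enters $r^\ast$ only with a valid \textit{enterCert}, which is a \PoR{} or \RC{} for round $r^\ast-1$. It broadcasts its round-$r^\ast$ proposal carrying that certificate. Within $\Delta$, every correct replica receives either the proposal or the certificate. It then enters $r^\ast$ via \textsc{ObserveHigherRound} or \textsc{EnterRound}, using the monotonicity of $r_i$ and the guard that certificates are accepted only for strictly higher rounds. Hence, by time $t+\Delta$, all correct replicas are in $r^\ast$ or in a higher round whose certificate propagates in the same way. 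We also need this round to end. If no \PoR{} forms, each correct replica's timers expire and it broadcasts \textsc{WishNewRound}. With at least $2f{+}1\ge f{+}1$ correct replicas, an $\RC{}$ forms and spreads within $\Delta$. So rounds keep advancing, and the entry times of successive rounds at different correct replicas differ by at most $\Delta$.

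Third, we exhibit a good round, and this is the main obstacle. Definition~\ref{def:good_round} requires the correct replicas to \emph{stay} in $r$ through $I_3$. However, a Byzantine replica can accelerate rounds. It can contribute to an early \RC{} together with correct replicas whose timers expired first, or push a \PoR{} through quickly, which cuts short the exchange window of lagging replicas. We plan to use the timing structure to rule this out. Consider the round $r$ entered at time $t_r$ by the first correct replica $p$. Every correct replica enters $r$ by $t_r+\Delta$; this gives the interval $I_1$. The fast-forwarded replicas run an exchange window of $\delta_i\ge\Delta$, and the others run $2\delta_i\ge 2\Delta$. Therefore no correct replica votes before $t_r+2\Delta$, and all honest proposals are delivered to everyone during $I_1\cup I_2$. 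An \RC{} for $r$ needs at least one correct \textsc{WishNewRound}, and a correct replica sends one only after its exchange and election timers expire, that is, no earlier than $t_r+3\Delta$. Votes broadcast at about $t_r+2\Delta$ arrive by $t_r+3\Delta$, which gives $I_3$.

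The remaining early exit is a \PoR{} forming before some replica's $I_3$ ends. That would require $2f{+}1$ votes, including at least $f{+}1$ correct votes. Such an exit is harmless: it means the round already certified, and it only helps progress. Alternatively, we can weaken the requirement to the first round in which no such early exit occurs. I expect this case analysis, which reconciles the $2\delta_i$ versus $\delta_i$ windows under fast-forward with the three-$\Delta$ subintervals, to need the most care. Once it is done, the argument applies to every round after $T_0$, so good rounds recur throughout the period.
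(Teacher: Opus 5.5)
Your first two stages follow the paper's proof: timeout sufficiency from Lemma~\ref{lem:eventual_timeout_sufficiency}, then propagation of the \PoR{}/\RC{} entry certificate, either standalone or as a proposal's \textit{enterCert}, within one delay. Stage three is where you go beyond the paper and claim that \emph{every} round after $T_0$ is good, and its timing bounds do not hold.

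First, you treat $\Delta$ as the actual delay, but it is only an upper bound. A correct replica $q$ can receive a round-$r$ proposal at $t_r+\epsilon$, either from $p$ or from a Byzantine replica holding a valid \textit{enterCert}. It then fast-forwards through \textsc{ObserveHigherRound}, whose exchange timer is only $\delta_q$, and timeout sufficiency gives only $\delta_q\ge\Delta$.
\begin{itemize}
\item So $q$ may vote at about $t_r+\epsilon+\Delta$, before the proposal of a correct replica that entered at $t_r+\Delta$ has reached it; that proposal is only guaranteed by $t_r+2\Delta$. This refutes ``no correct replica votes before $t_r+2\Delta$''.
\item $q$'s election timer may fire at about $t_r+\epsilon+2\Delta$, so a correct \textsc{WishNewRound}$(r)$ can appear before $t_r+3\Delta$. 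Together with $f$ Byzantine \textsc{WishNewRound}$(r)$ messages, this yields an $\RC{}_r$ that pulls $p$ into round $r{+}1$ right after $p$ votes at $t_r+2\delta_p$. From then on, $p$'s vote handler discards round-$r$ votes.
\end{itemize}

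Second, you only have lower bounds $\delta_i\ge\Delta$, and each replica calibrates its $\delta_i$ independently. So ``votes broadcast at about $t_r+2\Delta$'' is unjustified. A correct replica with $\delta_i=5\Delta$ votes around $t_r+10\Delta$, long after an $\RC{}_r$ triggered by a replica with $\delta=\Delta$ (plus $f$ Byzantine wishes) has moved everyone on.

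Consequently, neither the per-round claim nor your fallback to ``the first round with no early exit'' follows; you never show that such a round exists. The paper's proof claims less. It argues that skew among correct replicas shrinks because a lagging replica joins an active round with only the \emph{remaining} waiting before the election phase, not a fixed $\delta_i$. Hence \emph{some} round, and then recurring rounds, has all correct replicas completing the $2\Delta$ exchange window and the $\Delta$ election window together. That argument is also informal, but it is an existence claim and does not depend on the bounds you derive. To repair stage three you need an alignment ingredient you currently lack: fast-forward waiting computed as the round's remaining time, plus control of the spread of the $\delta_i$ relative to the entry skew.

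A smaller point concerns stage one. Your claim that an undersized $\delta_i$ produced by the optional halving rule is ``doubled back within one calibration attempt'' is wrong. Doubling is triggered by calibration failure, which depends on actual delays, not on $\Delta$. A halved $\delta_i<\Delta$ can keep succeeding and never be doubled. Either cite the lemma's ``remains timeout-sufficient'' conclusion or exclude halving explicitly.
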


\begin{proof}
Consider a stabilized synchronous period in which messages between correct replicas are delivered within $\Delta$.
\new{By Lemma~\ref{lem:eventual_timeout_sufficiency}, after finitely many retries, every correct replica becomes timeout-sufficient.
}If any correct replica enters a higher round using a valid \PoR{} or \RC{}, it broadcasts that certificate.
If a lagging correct replica has not yet seen the standalone certificate, it will still receive a valid higher-round proposal carrying the same certificate as its \textit{enterCert}.
In either case, within at most one message delay, every correct replica receives round-entry evidence for that higher round and can catch up.

Moreover, when a lagging replica catches up to an already active round, it joins that round using only the remaining local waiting needed before entering the election phase, rather than restarting a full fresh round.
Thus, during a sufficiently long stabilized synchronous period, the skew among correct replicas is continuously reduced, and there exists a round $r$ in which all correct replicas enter early enough to complete the proposal-exchange window ($>2\Delta$) and the election window ($>\Delta$) together.
Hence, round $r$ is good by Definition~\ref{def:good_round}.

As the stabilized synchronous period continues, the same argument applies repeatedly, so good rounds recur.
\end{proof}

\begin{lemma}[\PoR{} in a good round (probabilistic)]\label{lem:por_in_good_round}
In any good round $r$, honest replicas form a valid \PoR{} for a single proposal with probability at least $\frac{2f{+}1}{n}$.
\end{lemma}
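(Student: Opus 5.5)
We prove Lemma~\ref{lem:por_in_good_round} by showing that in a good round all correct replicas pick the same strongest proposal, provided the coin favors an honest proposer.

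\textbf{Step 1: common view of honest proposals.} By condition (i) of Definition~\ref{def:good_round}, every correct replica is in round $r$ and timeout-sufficient by the end of $I_1$. By (ii), every round-$r$ proposal from a correct replica reaches every correct replica before any correct \textit{ExchangeTimer} (of length $2\delta_i \ge 2\Delta$, or $\delta_i$ after a fast-forward into an already active round) expires. Hence, when correct replicas enter the election phase, each candidate set $\mathcal{P}^{safe}_i[r]$ contains the same set $H$ of $2f{+}1$ honest proposals. I also need every $P\in H$ to pass \textsc{SafeToVote} at every correct replica. To get this, I would argue that an honest proposal carries its sender's highest known \PoR{} and extends the tail of its log. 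By the synchrony of the good round and Lemma~\ref{lem:lock_monotonicity}, any correct replica's lock is then either extended by that proposal or dominated by a \PoR{} it carries. The honest proposal of maximal rank is safe for everyone.

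\textbf{Step 2: the coin.} Each proposal carries a threshold share for round $r$. Any $f{+}1$ valid shares aggregate to the same $\mathcal{C}_r$, and the honest proposals alone supply $2f{+}1 \ge f{+}1$ of them. So every correct replica computes the same $\mathcal{C}_r$, and hence the same $\mathrm{Score}$ for every sender. Because $\mathcal{C}_r$ is unpredictable before $f{+}1$ shares (hence at least one honest share) are released, the adversary cannot adapt its proposals to the coin. Modeling $H(\mathcal{C}_r\|\cdot)$ as a random oracle, the top-scoring sender among all $n$ is uniform. It is therefore honest with probability at least $\frac{2f{+}1}{n}$.

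\textbf{Step 3: convergence and certification.} I condition on the event that the top-scoring sender is honest, and also that its proposal $P^\star$ is among the top-ranked proposals under $\mathcal{S}(\cdot)$. Byzantine proposals can be received by only some correct replicas, but they cannot displace $P^\star$. Either a Byzantine proposal ranks strictly higher, or it ties and loses on the coin. Strictly higher rank requires a valid higher \PoR{}/\PoA{}, which honest replicas would already have adopted through \textsc{UpdateHiCert} during the synchronous period. All correct replicas then vote for $P^\star$. By (iii) the $2f{+}1$ votes are delivered within $I_3$, during which no correct replica leaves round $r$, so each correct replica reaches $Q=2f{+}1$ and forms a \PoR{}. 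By Lemma~\ref{lem:unique_cert}, this \PoR{} is for a single proposal.

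\textbf{Main obstacle.} The hardest step is the rank argument in Step 3: ensuring that honest replicas' highest certificates agree, so that the maximum of $\mathcal{S}$ is attained by honest proposals and a Byzantine proposal with a privately held higher certificate cannot split votes. I would first handle this by noting that a Byzantine proposal carrying such a certificate exposes it to all correct replicas in $I_1\cup I_2$. Every honest replica updates, and the rank tie is then among proposals equivalent under the lexicographic rule. If this fails, I would weaken the claim to rounds after honest certificate state has stabilized.
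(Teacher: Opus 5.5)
Your route is the paper's: all correct replicas receive the honest proposals by condition (ii). The coin is common because any $f{+}1$ valid shares aggregate to the same $\mathcal{C}_r$. The top scorer is uniform over the $n$ fixed sender identities, hence honest with probability at least $\frac{2f{+}1}{n}$. The honest votes are delivered during $I_3$. The paper never confronts your ``main obstacle.'' It speaks of a single maximal class $\mathcal{M}_r$ as if it were the same at every correct replica, and it treats the coin winner among tied proposals as uniform over all $n$ proposers.

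The patch you propose for that obstacle does not work, so Step~3 has a genuine gap.

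\textbf{Selective delivery.} Condition (ii) constrains only proposals from correct replicas. A Byzantine proposal whose certificates strictly outrank every honest proposal can therefore be delivered to just $k$ correct replicas, with $1\le k\le f$ (taking $n=3f{+}1$ and $f$ Byzantine replicas). \textsc{UpdateHiCert} at the receivers changes only their local state, not the certificates already carried by the honest round-$r$ proposals. Those $k$ replicas therefore still select the Byzantine proposal as strongest. If the Byzantine replicas then withhold votes from $P^\star$, it gets at most $2f{+}1-k\le 2f$ votes, and the Byzantine proposal gets at most $k+f\le 2f$. No \PoR{} forms, whatever the coin says.

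\textbf{Such certificates are easy to obtain.} Nothing in the stated rules stops a Byzantine replica from aggregating $f{+}1$ of the honest round-$(r{-}1)$ votes that produced $\PoR_{r-1}$ into a $\PoA_{r-1}$. That gives rank $(r{-}1,r{-}1)$, against honest proposals whose highest \PoA{} is older, since honest replicas exit round $r{-}1$ on the \PoR{} without forming a \PoA{}.

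\textbf{The conditioning in Step~3.} For the same reason, conditioning on ``$P^\star$ is top-ranked'' is not free. The bound $\frac{2f{+}1}{n}$ carries over to your joint event only if all honest proposals lie in the maximal rank class at every correct replica. Otherwise its probability can be as small as $1/n$.

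\textbf{What you need.} Add an explicit hypothesis: every correct replica's round-$r$ proposal attains the maximal rank, and no valid proposal strictly outranks it. Your fallback of ``stabilized honest certificate state'' is not enough, because the offending certificate is held privately by the adversary. Under this hypothesis, Step~3 goes through. $\mathrm{Score}$ depends only on the sender identity, so the overall top scorer, if honest, wins at every correct replica regardless of which Byzantine proposals each one received.

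\textbf{A smaller point.} The adversary can compute $\mathcal{C}_r$ as soon as it sees one honest share, before sending its own proposals, so it \emph{can} adapt to the coin. The winner is still uniform because the corrupted set is fixed in advance, not because the adversary is blind to the coin.
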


\begin{proof}
Fix a good round $r$.
By Definition~\ref{def:good_round}, by the end of the proposal-exchange window ($I_1 \cup I_2$), every correct replica has received every proposal broadcast by a correct replica in round $r$.
We assume that each replica accepts at most one proposal per author per round, ignoring additional proposals from the same author as equivocations.
Hence, the number of candidate proposals considered in round $r$ is at most $n$.

Let $\mathcal{M}_r$ denote the set of proposals with maximum log strength under $\mathcal{S}(\cdot)$ among the proposals received by a correct replica in round $r$.
If $|\mathcal{M}_r|=1$, all correct replicas select the same strongest proposal and vote for it.

Otherwise, ties are broken using the common coin $\mathcal{C}_r$ for round $r$.
Since $\mathcal{C}_r$ is obtained by threshold aggregation of $f{+}1$ signature shares from $f{+}1$ distinct replicas, all correct replicas compute the same $\mathcal{C}_r$.
For each proposal $\mathcal{P}\in \mathcal{M}_r$, replicas compute
\[
\mathrm{Score}(\mathcal{P}) = H(\mathcal{C}_r \,\|\, \mathcal{P}.\textsf{proposer\_id})
\]
and select the proposal with the highest score.

Consider the event that the highest score among the tied proposers corresponds to an honest proposer.
Since proposer identities are fixed and $\mathcal{C}_r$ is unpredictable to the adversary before round $r$, the top-scoring proposer is uniformly distributed over the $n$ proposers, and is honest with probability $\frac{2f{+}1}{n}$.
Conditioned on this event, all correct replicas select the same honest proposal.

During the election window $I_3$, all correct replicas broadcast their \textsc{Vote}s, and by Definition~\ref{def:good_round}, those votes are delivered to every correct replica within one message delay.
Therefore, the selected proposal receives at least $2f{+}1$ votes, allowing correct replicas to assemble a valid \PoR{} in round $r$.
\end{proof}

\LivenessTheorem*

\begin{proof}[Proof of Theorem~\ref{thm:liveness}]
Consider a sufficiently long stabilized synchronous period.
By Lemma~\ref{lem:eventual_timeout_sufficiency}, sufficiently long stabilized synchronous periods first make all correct replicas timeout-sufficient; then, by Lemma~\ref{lem:good_round_exists}, good rounds recur.
In each good round, Lemma~\ref{lem:por_in_good_round} implies that a \PoR{} is formed with non-zero probability.

Assuming fresh common-coin outputs across rounds, these per-round success events occur with a fixed non-zero probability, so over an unbounded sequence of good rounds, \PoR{}s occur infinitely often with probability~1.

Whenever two consecutive rounds produce \PoR{}s for proposals that extend one another, the Two-\PoR{} commit rule commits the earlier proposal in the two-chain.
Hence, during any sufficiently long stabilized synchronous period, correct replicas commit new proposals continuously; over an unbounded sequence of such stabilized periods, commits occur infinitely often with probability~1.
\end{proof}

\subsection{Partial Liveness}

\sysname distinguishes between \emph{commitment} (which requires strong quorums and \PoR{}s) and \emph{partial progress} (which preserves useful work and availability evidence during partitions).

\begin{lemma}[Weak Quorum Progress]\label{lem:poa_in_weak_component}
In any connected component that contains at least
$Q_{\mathrm{weak}}=f{+}1$ honest replicas, whose internal message delays are bounded by $\Delta$, \new{and in which all participating correct replicas are timeout-sufficient (i.e., they have local timeout baselines satisfying $\delta_i \ge \Delta$), }the component can form at least one \PoA{} in a round.
\end{lemma}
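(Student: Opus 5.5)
We argue directly from the round logic (Fig.~\ref{fig:round_logic}) and the pacemaker (Fig.~\ref{fig:pacemaker}), in three steps: the component's correct replicas align in some round, they all vote for one proposal, and the resulting $f{+}1$ votes become a \PoA{}. Fix a component $C$ containing a set $H$ of at least $f{+}1$ correct replicas, all with $\delta_i \ge \Delta$.

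\emph{Alignment.} First we exhibit a round $r$ that every replica in $H$ enters within $\Delta$ of the first replica of $H$ to enter it. The first entrant enters via a valid \PoR{} or \RC{}. It broadcasts that certificate, together with its round-$r$ proposal carrying the certificate as its \textit{enterCert}. Every other member of $H$ receives one of these within $\Delta$. It then enters $r$ through the \RC{}/\PoR{} handler or through \textsc{ObserveHigherRound}, exactly as in the catch-up step of Lemma~\ref{lem:good_round_exists}.

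Rounds within $C$ really do advance, since $C$ can always produce an entry certificate. If no \PoR{} forms, each member of $H$ eventually exits and broadcasts \textsc{WishNewRound}$(r)$. Because $|H|\ge f{+}1$ and messages inside $C$ are delivered, an $\RC_r$ forms, so the first entrant's certificate always exists.

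\emph{Common strongest proposal.} Each member of $H$ has an exchange window of $2\delta_i \ge 2\Delta$; a late joiner's window is $\delta_i$, but it starts at most $\Delta$ after the first entrant. So every member's window closes only after it has received every round-$r$ proposal broadcast by members of $H$. Hence the sets $\mathcal{P}^{safe}_i[r]$ for $i \in H$ all contain the honest proposals.

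The main obstacle is that Byzantine replicas inside $C$ may send a proposal to only part of $H$, or send it late. Different members of $H$ could then see different maxima under $\mathcal{S}(\cdot)$ and the coin score, and split their votes. Two further issues are differing locks making \textsc{SafeToVote} disagree, and the threshold coin $\mathcal{C}_r$ being available from $f{+}1$ shares. The coin is available because the honest proposals alone supply $f{+}1$ shares, and any $f{+}1$ shares aggregate to the same $\mathcal{C}_r$.

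To handle the Byzantine split, I would argue probabilistically, as in Lemma~\ref{lem:por_in_good_round}. With probability bounded away from zero, the top-scoring proposer in round $r$ is honest. I would additionally assume, or argue from \textsc{UpdateHiCert} propagating certificates within $C$, that this honest proposal is maximal under $\mathcal{S}$ and safe for every member of $H$. Under that event every member of $H$ selects the same proposal. Rounds recur, so this event eventually happens, and in such a round all of $H$ votes for a single $P$.

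\emph{Certification.} Each vote reaches every member of $H$ within $\Delta \le \delta_i$, inside the election timer. Hence at \textit{ElectionTimer} expiry, $|\textit{voteSet}_i[r,\text{hash}(P)]| \ge |H| \ge f{+}1 = Q_{\mathrm{weak}}$, and $\arg\max$ selects a hash with at least $f{+}1$ votes. Either a \PoR{} forms first, or the timer branch builds a \PoA{}. In the \PoR{} case, the same $f{+}1$ votes already witness a \PoA{}, so a \PoA{} is obtained in either case, proving the lemma.
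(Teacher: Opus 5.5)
The gap is the step you leave as an assumption: that, with probability bounded away from zero, the coin's top-scoring proposer is honest and its proposal is $\mathcal{S}$-maximal and \textsc{SafeToVote} for every member of $H$. The coin only orders proposals inside the maximal $\mathcal{S}$-class. An honest proposal's $\widehat{PoR}$ and $\widehat{PoA}$ are frozen when it is created in \textsc{EnterRound}, so propagation through \textsc{UpdateHiCert} always lags one round behind, and Byzantine replicas in $C$ can exploit that lag.

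Take $|H|=f{+}1$ with the $f$ Byzantine replicas also in $C$, and any round $r$ in which a Byzantine proposer tops the coin. (The adversary holds $f$ coin shares, so it learns $\mathcal{C}_r$ as soon as one honest proposal arrives.) The Byzantine proposer sends its proposal $A$ to a single member of $H$, and all Byzantine votes are withheld from $H$. The honest votes then split one versus at most $f$, so no honest replica reaches $Q_{\mathrm{weak}}$. Yet the Byzantine replicas combine the one honest vote for $A$ with their own $f$ votes into a round-$r$ \PoA{} that no honest replica holds.

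In round $r{+}1$ they propose a child of $A$ carrying that \PoA{}. It strictly dominates every honest proposal under $\mathcal{S}$, whatever the coin says. By again delivering it to only part of $H$, they repeat the split and obtain a fresh private \PoA{}, and so on. From then on the event you condition on has probability zero in every round. So ``rounds recur, so this event eventually happens'' does not follow, and no member of $H$ ever certifies anything.

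What is missing is an argument, or a protocol mechanism, guaranteeing that all of $H$ settles on a common $\mathcal{S}$-maximal candidate despite selective Byzantine delivery. The paper's proof does not supply this either. It argues directly and deterministically that honest replicas in the component receive each other's proposals within the exchange window and hence converge by the priority rule, implicitly treating their candidate sets as identical. Your alignment/common-choice/certification outline has the same skeleton, written more carefully, but its middle step remains open.

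The timing step also does not go through as written. From ``a late joiner's window is $\delta_i$, but it starts at most $\Delta$ after the first entrant'' you cannot conclude that the window closes after every honest proposal has arrived. Suppose $\delta_i=\Delta$ and a member fast-forwards via \textsc{ObserveHigherRound} almost immediately; it closes its window at about time $\Delta$. A member that joins only at about time $\Delta$ broadcasts its proposal then, and that proposal may arrive at about $2\Delta$. Joining early with only a $\delta_i$ window is what hurts. The bound $\delta_i\ge\Delta$ covers only members that enter with a $2\delta_i$ window via a \PoR{} or an \RC{}. Even the first entrant may have been pulled in by a proposal, and then it has only $\delta_i$.

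Finally, even with both issues repaired, your argument yields only an eventual, probability-one \PoA{}. That is weaker than the per-round claim the paper makes and then uses (``in each local round'') in the partial-liveness theorem.
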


\begin{proof}
Let $\mathcal{C}$ be a connected component containing $k \ge f{+}1$ honest and timeout-sufficient replicas.
Within $\mathcal{C}$, all honest replicas receive each other's proposals within the proposal-exchange window ($>2\Delta$).
By applying the deterministic \textit{Priority Rule}, these replicas converge on the same strongest candidate and broadcast votes for it during the subsequent election window ($>\Delta$).
Since $|\mathcal{C}| \ge Q_{\mathrm{weak}} = f{+}1$, replicas are guaranteed to aggregate the required weak quorum of votes.
Consequently, replicas in $\mathcal{C}$ can locally assemble a valid \PoA{}.
\end{proof}

\begin{lemma}[RC Progress in a Weak Component]\label{lem:rc_in_weak_component}
In any connected component that contains at least
$Q_{\mathrm{weak}}=f{+}1$ honest replicas , the component can form a valid \RC{} for round $r+1$.
\end{lemma}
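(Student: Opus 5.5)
We prove the statement as worded: every such component $\mathcal{C}$ can itself form a valid $\RC{}$ that justifies entry into round $r{+}1$. We do not weaken this to ``either a $\PoR{}$ or an $\RC{}$''. The idea is to show that the honest replicas of $\mathcal{C}$ all eventually broadcast $\textsc{WishNewRound}(r)$. Then $f{+}1$ such messages meet inside $\mathcal{C}$, and any honest member can aggregate them.

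Fix a round $r$ that the $k \ge f{+}1$ honest replicas of $\mathcal{C}$ have entered. We will rely on the fact that no component other than $\mathcal{C}$ can form a $\PoR{}_r$ that these replicas act on before reaching the end of round $r$.

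First we argue that each honest $p_i \in \mathcal{C}$ reaches \textsc{ExitRound}$(r,\cdot)$ in finite time. By the Round Logic of Fig.~\ref{fig:round_logic}, $\textit{ExchangeTimer}[r]$ fires after at most $2\delta_i$. Its handler takes one of two branches:
\begin{itemize}
\item if no proposal is safe to vote for, it calls \textsc{ExitRound}$(r,\perp)$ immediately;
\item otherwise it votes and starts $\textit{ElectionTimer}[r]$, whose expiry also calls \textsc{ExitRound}$(r,\perp)$.
\end{itemize}
The only other exit is \textsc{ExitRound} with a $\PoR{}_r$. That requires $2f{+}1$ votes to be gathered at $p_i$ itself, and this is the case we must treat separately.

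Next, \textsc{CompleteRound} with $cert=\perp$ broadcasts $\textsc{WishNewRound}(r)$. Communication inside $\mathcal{C}$ is internal, so every honest member eventually holds $k \ge f{+}1$ such messages. The event ``Received $f{+}1$ $\textsc{WishNewRound}(r)$'' in Fig.~\ref{fig:pacemaker} then fires and produces $\RC{}_r$ via \textsc{AggregateSig} with the $(f{+}1,n)$ threshold scheme from the setup in \Cref{subsec:model}.

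We then check that this certificate is valid:
\begin{itemize}
\item it carries $f{+}1$ distinct signers on round $r$, and signatures are unforgeable, so Byzantine replicas can neither block nor corrupt it;
\item \textsc{ValidEnterCert}$(\RC{}_r, r{+}1)$ accepts it by clause (iii);
\item entry is monotone, so any honest replica with $r_i \le r$ will call \textsc{EnterRound}$(r{+}1,\RC{}_r,\cdot)$.
\end{itemize}
Note that only honest messages are needed. Byzantine replicas may add signatures but cannot prevent the honest $f{+}1$ from sufficing.

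The main obstacle is the case where some honest replica of $\mathcal{C}$ forms or adopts a $\PoR{}_r$ instead of timing out. Our plan is to observe that \textsc{CompleteRound} with a $\PoR{}$ skips the broadcast, so $f{+}1$ honest $\textsc{WishNewRound}(r)$ messages might not be guaranteed. We handle this in two ways.
\begin{itemize}
\item \textbf{Partitioned regime.} We argue that a $\PoR{}_r$ needs $2f{+}1$ votes, and under a partition where $\mathcal{C}$ lacks a strong quorum, correct replicas in $\mathcal{C}$ never reach $c \ge Q$ for round $r$. Hence every honest member exits via $\perp$. This is the regime the statement addresses.
\item \textbf{Strong-quorum case.} If $\mathcal{C}$ does reach a strong quorum, we note that the statement only asserts that $\mathcal{C}$ \emph{can} form a valid $\RC{}$. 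The honest replicas that did time out still produce $\textsc{WishNewRound}(r)$, and we argue the claim for components lacking a strong quorum, which is the partition regime.
\end{itemize}
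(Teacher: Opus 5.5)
Your core argument is the paper's: the round-$r$ timers force \textsc{ExitRound}$(r,\perp)$, \textsc{CompleteRound} broadcasts \textsc{WishNewRound}$(r)$, the messages of the at least $f{+}1$ correct members meet inside $\mathcal{C}$, and \textsc{AggregateSig} yields an $\RC_r$ that clause (iii) of \textsc{ValidEnterCert} accepts. You are slightly more careful than the paper, whose proof mentions only the election timeout and overlooks the immediate exit when no received proposal is safe to vote for.

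The weak point is the \PoR{} case. You promise the statement ``as worded'', but it is false without an extra hypothesis. Suppose $\mathcal{C}$ contains all $2f{+}1$ correct replicas and round $r$ is good. Then every correct replica assembles $\PoR_r$. \textsc{ExitRound} cancels the round-$r$ timers, and \textsc{CompleteRound} enters round $r{+}1$ without broadcasting \textsc{WishNewRound}$(r)$. So no correct replica ever signs it, and $f$ Byzantine signatures cannot reach the $f{+}1$ threshold. Your strong-quorum bullet therefore establishes nothing about this case, and you do in effect weaken the claim.

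The paper states the needed hypothesis outright: its proof opens with ``if no \PoR{} is formed in round $r$''. Your first part already proves exactly that conditional version, since all it uses is that no correct member of $\mathcal{C}$ leaves round $r$ through a $\PoR_r$. Assume that rather than ``$\mathcal{C}$ lacks a strong quorum''. Your substitute has two problems:
\begin{itemize}
\item It is narrower. It drops a well-connected component whose round $r$ simply fails, which is precisely when \RC{}-based advancement matters.
\item It is sound only if it counts every replica, Byzantine ones included, that can reach the correct members of $\mathcal{C}$.
\end{itemize}
For instance, take exactly $f{+}1$ correct members and $f$ reachable Byzantine replicas. The Byzantine replicas can send their votes to a single correct member $p$ only. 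Then $p$ completes $\PoR_r$ and leaves round $r$ without signing \textsc{WishNewRound}$(r)$, so at most $f$ correct replicas ever sign it. If the Byzantine replicas withhold theirs, no $\RC_r$ forms. The component still advances, because $p$'s round-$(r{+}1)$ proposal carries $\PoR_r$ as \textit{enterCert} and pulls the others into round $r{+}1$ via \textsc{ObserveHigherRound}. The correct unconditional statement is that $\mathcal{C}$ obtains a $\PoR_r$ or an $\RC_r$, which is exactly the weakening you declined.
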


\begin{proof}
If no \PoR{} is formed in round $r$, then every correct replica in the component eventually reaches the election timeout for round $r$ and broadcasts a \textsc{WishNewRound} $(r)$ message.
Since the component contains at least $f{+}1$ honest replicas and those messages are eventually delivered (without strict timing guarantees), replicas in the component can collect at least $f{+}1$ such messages.
By the protocol rule for round synchronization, these messages form a valid \RC{} for round $r+1$.
\end{proof}

\begin{lemma}[Data Availability]\label{lem:poa_implies_availability}
If a proposal is certified by a \PoA{} or a \PoR{}, its data is available at at least one correct replica.
Hence, any correct replica can lazily retrieve it via the catch-up mechanism when connectivity permits.
\end{lemma}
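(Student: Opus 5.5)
The argument is a counting step followed by a retrieval step. A \PoA{} carries $Q_{\mathrm{weak}}=f{+}1$ votes from distinct replicas, and a \PoR{} carries $Q=2f{+}1$. With at most $f$ Byzantine replicas, any such voter set contains at least one correct replica. The remaining work is to show that a correct voter actually stores the proposal's data, and that this copy can later be pulled by any correct replica.

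I will proceed in three steps.

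\begin{enumerate}
\item Fix a proposal $P$ certified in round $r$ by a \PoA{} (or \PoR{}) with voter set $V$. Since $|V|\ge f{+}1$ (resp. $2f{+}1$) and at most $f$ voters are faulty, $V$ contains a correct replica $p$. This relies on unforgeability: each vote share in the certificate is validly signed, so the $f{+}1$ shares really come from distinct replicas.

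\item Show that $p$ holds $P$'s data. By the round logic (Fig.~\ref{fig:round_logic}), a correct replica votes only for $P^{sp}$ chosen from $\mathcal{P}^{safe}_i[r]$. That set contains only proposals $p$ has received, validated (\textsc{ValidEnterCert}, \textsc{SafeToVote}) and inserted into its buffer $\mathcal{P}_i[r]$. The vote is bound to $\text{hash}(P)$, so by collision resistance of $\Hash{\cdot}$ the stored object is exactly the certified payload, not merely some preimage. We must also assume correct replicas retain buffered proposals that appear in a certificate they know of, rather than discarding them when the round ends; I would state this as the storage rule the catch-up mechanism relies on.

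\item For the ``hence'' clause, consider a correct replica $q$ that sees a certificate referencing $\text{hash}(P)$, for instance as $\widehat{PoA}$ or $\widehat{PoR}$ in a proposal or as the parent of a later block. When $q$ lacks the data, it requests it from the signers listed in the certificate. Once $q$ and $p$ are connected (``when connectivity permits''), for example in a stabilized period, $p$'s reply arrives within $\Delta$. Replies from Byzantine signers cannot cause harm, because $q$ checks the returned data against $\text{hash}(P)$. Repeating this along parent links retrieves the whole certified ancestry.
\end{enumerate}

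I expect the main obstacle to be step 2, the retention and binding argument. The pseudocode never states explicitly that buffered proposals are persisted, so the proof has to make this assumption explicit and tie the vote hash to the stored payload. The quorum counting itself is routine.
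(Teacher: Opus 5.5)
Your proposal is correct and follows the paper's own argument: at least $f{+}1$ valid votes force a correct voter, which has verified and stored the proposal, so the data remains retrievable. It also makes explicit the retention and hash-binding assumptions that the paper's one-line proof leaves implicit. One small adjustment: the system model assumes compact threshold-signature certificates, and with those the signer set need not be recoverable from a \PoA{} or \PoR{}, so step~1 should appeal to the threshold scheme's unforgeability rather than to individually listed shares, and step~3 should have $q$ request the data from all replicas (checking replies against $\text{hash}(P)$) rather than from ``the signers listed in the certificate.''
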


\begin{proof}
A \PoA{} or a \PoR{} requires at least $f{+}1$ votes, which guarantees that at least one correct replica has verified and stored the proposal.
Hence, this certificate ensures that the proposal remains retrievable in the system.
\end{proof}

\PartialProgressTheorem*

\begin{proof}[Proof of Theorem~\ref{thm:partial_progress}]
Let $\mathcal{C}$ be a connected component containing at least $f{+}1$ correct and timeout-sufficient replicas and timely internal communication.

By Lemma~\ref{lem:poa_in_weak_component}, in each local round, replicas in $\mathcal{C}$ can form a \PoA{} for the strongest locally visible proposal, thereby preserving certified partial progress.
If the component cannot form a strong certificate in that round, then by Lemma~\ref{lem:rc_in_weak_component}, the same replicas can still form an \RC{} for the next round and continue advancing locally.
Thus, even without a global strong quorum, $\mathcal{C}$ continues both to make round progress and to accumulate recoverable \PoA{}-certified work.

A \PoA{} does not imply commitment, but by Lemma~\ref{lem:poa_implies_availability}, it guarantees that the certified proposal remains available at some correct replica.
Once global connectivity is restored and a strong quorum $Q=2f{+}1$ becomes reachable again, replicas exchange their local histories, compare the available branches via the deterministic priority rule, and resume forming \PoR{}s on the strongest recoverable branch.
The catch-up and back-fill mechanisms then retrieve and extend previously accumulated \PoA{}-backed proposals on that branch, allowing the system to reincorporate partition-time partial progress into subsequent \PoR{}-backed commitment.

Therefore, \sysname satisfies Partial Liveness.
\end{proof}

\end{document}